\documentclass[a4paper,UKenglish,cleveref,pdfa]{lipics-v2021}
\nolinenumbers
\ArticleNo{1}
\hideLIPIcs
\usepackage{hyperref}
\usepackage{booktabs} 
\usepackage{amssymb,amsmath,amsthm,mathtools}
\usepackage[inline]{enumitem}
\usepackage{tikz}
\usetikzlibrary{calc,positioning}
\usepackage{tikz-cd}
\usepackage{xcolor}
\usepackage{hyperref}
\usepackage{comment}
\usepackage{calc}
\usepackage{longtable}

\newcommand{\TA}{{\mathsf{TA}}}
\newcommand{\FOL}{{\mathsf{FOL}}}
\newcommand{\Sig}{\mathsf{Sig}}
\newcommand{\Mod}{\mathsf{Mod}}
\newcommand{\Sen}{\mathsf{Sen}}

\newcommand{\Set}{\mathbb{S}\mathsf{et}}
\newcommand{\A}{\mathfrak{A}}
\newcommand{\B}{\mathfrak{B}}

\newcommand{\card}{\mathsf{card}}
\newcommand{\act}{\mathfrak{t}}
\newcommand{\Space}{~~~~~}
\newcommand{\pos}[1]{{\langle}#1{\rangle}}
\newcommand{\Forall}[1]{\forall #1\,{\cdot}\,}
\newcommand{\Exists}[1]{\exists #1\,{\cdot}\,}
\newcommand{\proofrule}[2]{\displaystyle\frac{#1}{#2}}
\newcommand{\red}{\mathord{\upharpoonright}}
\usepackage{scalerel}

\newcommand{\bbsemicolon}{%
  \scalerel*{%
    \hbox{\usefont{U}{bbold}{m}{n} ;}%
  }{;}%
}
\newcommand{\comp}{\operatorname{\mathbin{\bbsemicolon}}}

\newcommand{\one}{\scalebox{0.75}{$\BS{1}$}}

\newcommand{\empact}{\scalebox{0.75}{$\BS{0}$}}
\newcommand{\precupact}{%
  \scalerel*{%
    \hbox{\raisebox{0.3ex}{\scalebox{0.5}{$\cup$}}}%
  }{;}%
}
\newcommand{\cupact}{\operatorname{\mathbin{\precupact}}}

\newcommand{\preimpact}{%
  \scalerel*{%
    \hbox{\raisebox{0.3ex}{\scalebox{0.5}{\rotatebox[origin=c]{180}{$\mathsf{C}$}}}}%
  }{;}%
}
\newcommand{\impact}{\operatorname{\mathbin{\preimpact}}}

\newcommand{\resi}{\operatorname{\mathbin{\triangleright}}}

\newcommand{\Co}{\complement}%
\newcommand{\Coo}{{{}^{{}_{\complement}}}}%

\setlist[enumerate,1]{
  font=\normalfont,
  label={\arabic*.},
  ref=\arabic*
}

\newlist{inlinenum}{enumerate*}{1}
\setlist[inlinenum,1]{
  font=\normalfont,
  label=(\emph{\alph*})
}

\setlist[itemize,1]{
}

\setlist[description]{
  font=\normalfont\em,
  leftmargin=\parindent
}

\newlist{plainlist}{itemize}{1}
\setlist[plainlist]{
  label={},
  leftmargin=0pt,
  itemsep=\parskip
}
\newcommand*{\pcformat}[1]{%
  [\;{\normalfont\itshape #1}\;]%
}

\newenvironment{proofcases}[1][]{%
  \description[font=\pcformat, leftmargin=\parindent, #1]%
}{\enddescription}

\usepackage{array,etoolbox}
\makeatletter

\newlength{\PS@lastparam}
\newlength{\PSlastparam}
\newcommand{\PSlp}{%
  \setlength{\PSlastparam}{\PS@lastparam}%
  \the\PSlastparam
}
\def\PS@sub@lastparam{}

\newcommand{\PS@numwidth}{99}
\newcommand{\PSnumwidth}[1]{%
  \renewcommand{\PS@numwidth}{#1}%
}

\newcommand{\PS@style}{\small}
\newcommand{\PS@numstyle}{\footnotesize}

\newlength{\PSindent}
\newlength{\PS@extraindent}
\newlength{\PSpre}
\newlength{\PSpost}
\newlength{\PS@Nwidth}
\newlength{\PS@Swidth}
\newlength{\PS@Ewidth}
\newlength{\PScolsep}
\newcommand{\PS@rownumber}{%
  \ifPS@subsubsteps
  \thePSsubstepc.%
  \the\numexpr\value{PSsubsubstepc}+1\relax
  \else
  \ifPS@substeps
  \thePSstepc.%
  \the\numexpr\value{PSsubstepc}+1\relax
  \else
  \the\numexpr\value{PSstepc}+1\relax
  \fi\fi
}
\newcommand{\PS@step}{%
  \ifPS@subsubsteps
  \refstepcounter{PSsubsubstepc}%
  \else
  \ifPS@substeps
  \refstepcounter{PSsubstepc}%
  \else
  \refstepcounter{PSstepc}
  \fi\fi%
}

\newif\ifPS@inprogress
\newif\ifPS@substeps
\newif\ifPS@subsubsteps
\newif\ifPS@continued
\newif\ifPS@subcontinued
\newcounter{PSc}
\newcounter{PSstepc}[PSc]
\newcounter{PSsubstepc}[PSstepc]
\renewcommand{\thePSsubstepc}{\thePSstepc.\arabic{PSsubstepc}}
\newcounter{PSsubsubstepc}[PSsubstepc]

\newenvironment{proofsteps}[1]{%
  \global\settowidth{\PS@lastparam}{\PS@style\hspace*{#1}}
  \ifPS@continued\else\refstepcounter{PSc}\fi
  \begingroup
  \setlength{\LTpre}{\PSpre}%
  \setlength{\LTpost}{\PSpost}%
  
  \setlength{\tabcolsep}{0pt}
  \noindent\PS@style
  \settowidth{\PS@Nwidth}{\PS@numstyle\PS@numwidth}%
  \setlength{\PS@Swidth}{#1}%
  \addtolength{\PS@Swidth}{-\PS@extraindent}%
  \setlength{\PS@Ewidth}{\linewidth}%
  \addtolength{\PS@Ewidth}{-\PSindent}%
  \addtolength{\PS@Ewidth}{-\PS@extraindent}%
  \addtolength{\PS@Ewidth}{-\PS@Nwidth}%
  \addtolength{\PS@Ewidth}{-\PScolsep}%
  \addtolength{\PS@Ewidth}{-\PS@Swidth}%
  \addtolength{\PS@Ewidth}{-\PScolsep}%
  \PS@inprogresstrue
  \longtable{%
    @{\hspace*{\PSindent}\hspace*{\PS@extraindent}\makebox[\PS@Nwidth][r]{\PS@rownumber}}%
    @{\hskip\PScolsep}>{\PS@step}p{\PS@Swidth}%
    @{\hskip\PScolsep}>{\footnotesize\raggedright\arraybackslash}p{\PS@Ewidth}%
  }%
}{%
  \ifPS@inprogress
  \addtocounter{table}{-1}%
  \endlongtable  
  \endgroup
  \PS@continuedfalse
  \PS@inprogressfalse
  \else\fi
}

\newcommand{\PSbreak}[1]{%
  \endproofsteps
  \par\medskip
  #1
  \medskip\par
  \PS@continuedtrue
  \proofsteps{\PS@lastparam}%
}

\newif\ifPS@sub@inprogress

\newif\ifPS@laststep
\newcommand{\laststep}{\global\PS@laststeptrue}
\newif\ifPS@lastsubstep
\newcommand{\lastsubstep}{\global\PS@lastsubsteptrue}

\newcommand{\adjustcol}[1]{%
  \global\advance\@colroom-#1%
}

\makeatother

\usepackage{tcolorbox} 
\usepackage{tablefootnote}
\usepackage[all]{xy}
\newcommand{\acta}{\mathfrak{a}}
\newcommand{\T}{{\mathsf{T}}}
\newcommand{\gene}{{\mathsf{gene}}}
\newcommand{\Dex}{\mathcal{D}}
\newcommand{\id}{\mathrm{id}}
\newcommand{\CAT}{\mathtt{CAT}}
\newcommand{\Init}{\mathsf{Init}}
\newcommand{\Cut}{\mathsf{Cut}}
\newcommand{\Modify}{\mathsf{Modify}}
\newcommand{\Atom}{\mathsf{Atom}}
\newcommand{\Ind}{\mathsf{Ind}}
\newcommand{\KEL}{\mathsf{KEL}}
\newcommand{\proofrulet}[2]{\genfrac{}{}{0pt}{0}{#1}{#2}}
\newcommand{\OL}[1]{\overline{#1}}
\newcommand{\UL}[1]{\underline{#1}}
\newcommand{\Iy}{\mathsf{I}}
\newcommand{\Ry}{\mathsf{R}}
\newcommand{\Sy}{\mathsf{S}}
\newcommand{\Ty}{\mathsf{T}}
\newcommand{\Fy}{\mathsf{F}}
\newcommand{\Py}{\mathsf{P}}
\newcommand{\INT}{{\mathsf{INT}}}
\newcommand{\Nat}{\mathtt{Nat}}
\newcommand{\Cod}{\mathrm{Cod}}
\newcommand{\BS}[0]{\boldsymbol}
\title{Induction rules for Transition Algebra}
\author{Go Hashimoto}%
{Kyushu University, Japan}%
{hashimoto.go.427@s.kyushu-u.ac.jp}%
{}%
{}
\authorrunning{G.~Hashimoto} %
\Copyright{G.~Hashimoto} %
\begin{document}
\begin{CCSXML}
<ccs2012>
   <concept>
       <concept_id>10003752.10003790.10002990</concept_id>
       <concept_desc>Theory of computation~Logic and verification</concept_desc>
       <concept_significance>500</concept_significance>
       </concept>
   <concept>
       <concept_id>10003752.10003790.10003792</concept_id>
       <concept_desc>Theory of computation~Proof theory</concept_desc>
       <concept_significance>500</concept_significance>
       </concept>
   <concept>
       <concept_id>10003752.10003790.10003798</concept_id>
       <concept_desc>Theory of computation~Equational logic and rewriting</concept_desc>
       <concept_significance>300</concept_significance>
       </concept>
 </ccs2012>
\end{CCSXML}

\ccsdesc[500]{Theory of computation~Logic and verification}
\ccsdesc[500]{Theory of computation~Proof theory}
\ccsdesc[300]{Theory of computation~Equational logic and rewriting}
\keywords{
institutional model theory,
algebraic specification,
transition algebra,
joint consistency,
interpolation,
Kleene algebra,
induction,
completeness
}

\maketitle
\begin{abstract}
Transition Algebra ($\mathsf{TA}$) is a type of infinite logic introduced to discuss rewriting systems.
The natural deductive proof systems already introduced in $\mathsf{TA}$ satisfy completeness for countable signatures.
However, it lacks compactness, making it unsuitable for practical applications.
This time, we will create a compact proof system by restricting the $*$ rule to induction.
We will also use a sequent proof system instead of natural deduction.
Furthermore, we will introduce a semantics that makes this system complete, and its application (a model-theoretic proof of Craig interpolation).
\end{abstract}

\section{Introduction}
\subparagraph*{$\TA$}
$\TA$~\cite{go-icalp24} is an extension of first-order logic introduced for algebraic specification languages.
$\TA$ is based on institution theory and possesses the modularity of the logic that forms the basis of CafeOBJ~\cite{dia-caf}.
On the other hand, the introduction of actions gives it rich expressiveness, similar to the rewriting logic that forms the basis of Maude system~\cite{conf/maude/2007}.
\subparagraph*{Induction and its Semantics}
In \cite{go-icalp24}, a proof system for TA was introduced and its completeness was demonstrated using an extended forcing technique.
However, since $\TA$ is not compact, completeness does not hold true for ordinary proof systems, and a rule with infinite assumptions must be allowed.
That is unusable in practical applications.
\begin{center}\small
$(Star_E)~\proofrule{\Gamma\vdash_\Sigma t_{0}\stackrel{\acta^*}\Longrightarrow t_{1} ~~~ \Gamma\cup\{t_{0}\stackrel{\acta^n}\Longrightarrow t_{1}\}\vdash_\Sigma \phi\text{ for all } n \in \omega}{\Gamma\vdash_\Sigma \phi}$
\end{center}
A simple alternative is to weaken the system by introducing induction, as follows:%
\footnote{
In reality, it will be introduced in a slightly different way than shown below in order to correspond with Kleene algebras.
Then since the applicability of induction depends on the existence of actions, we also introduce the quantification of actions so that we can discuss these sufficiently.
}
\begin{center}\small
$(Ind)~\proofrule{\Gamma\vdash_\Sigma t_{0}\stackrel{\acta^*}\Longrightarrow t_{1} ~~~ \Gamma\cup\{x\stackrel{\acta}\Longrightarrow y\}\cup\{\phi(x)\}\vdash_{\Sigma[x,y]}\phi(y)}{\Gamma\cup\{\phi(t_{0})\}\vdash_{\Sigma} \phi(t_{1})}$
\end{center}
Of course, the resulting system is compact and therefore does not satisfy completeness.
This is inconvenient when using model-theoretic methods, so we introduce a Henkin semantic interpretation using Kleene algebras and show the completeness of the system. 
\subparagraph*{Interpolation}
As an application of completeness, we will prove the interpolation theorem (and joint consistency).
TA is, in a sense, strong enough to uniquely identify a model by its theory, and
properties such as joint consistency, which combines two theories, no longer hold true \cite{hashimoto2025}.
On the other hand, the system based on induction (and corresponding semantics) satisfy this property.
The proof is based on the extended forcing used in \cite{gaina2026forcinginterpolationfirstorderhybrid}, but to save space, it is not explicitly introduced and it takes a form similar to Henkin method.
Our result is a significant extension of the classical setting in the following sense:
\begin{itemize}\small
\item While classical results are based on signature inclusion, our results are based on signature morphism push-out.
Interpolation based on push-out is naturally consistent with the literature on algebraic specifications and is widely adopted in that context~\cite{Tarlecki2024Fragility}~\cite{Diaconescu2025}.
\item Results for many sorted logic.
In many sorted systems, these theorems do not hold true depending on the conditions of sorts.
This paper provides almost necessary and sufficient conditions for sorts.
\end{itemize}
\subparagraph*{Else}
This time, we add some action constructors. These are useful for writing Maude's strategy language~\cite{EKER2023100887} and so on.
We also expand the concept of substitution, that improve usability of translation.

\section{Transition Algebra}\label{sec:ta}

\begin{definition}[Category of signatures]
A signature $\Sigma$ is a tuple of disjoint components $\Sigma=\pos{S,F,L}$ where:
\begin{itemize}
\item \(\pos{S, F}\) is a many-sorted algebraic signature consisting of a set \(S\) of \emph{sorts} and a disjoint family \(F = \{ F_{\BS{s}\to s} \mid \BS{s} \in S^{*}, s \in S \}\) of \emph{function symbols}, and
\item \(L\) is a disjoint sets $\{L_{\BS{s}}\}_{\BS{s}\in S^{=}}\,(S^{=}:=\{\pos{s,s}\mid s\in S\})$ of symbols.%
\footnote{This time, we will not interpret the labels as $S$-sorted relations because we are quantifying over them.}
We call that elements \emph{transition labels}.
\end{itemize}
A signature morphism $\chi:\pos{S,F,L}\to\pos{S',F',L'}$
is a many-sorted algebraic signature morphism $\chi:\pos{S,F}\to\pos{S',F'}$ equipped with a label translation $L\to L'$.
Signature morphisms compose componentwise.
Their composition has identities and is associative, thus leading to a category $\Sig$ of signatures.
(Since the components are disjoint, it is acceptable to simply write them as $\Sigma_{\BS{s}\to s}$ or $\chi(\lambda)$ instead of using the symbols $S, F, L$.
Conversely, to emphasize the components, we may use the expressions $\Sigma=\pos{S_{\Sigma},F_{\Sigma},L_{\Sigma}}$, $\chi=\pos{S_{\chi},F_{\chi},L_{\chi}}$.)
\end{definition}

\begin{definition}[Terms and Actions]
We define a functor $\T:\Sig\to\Sig$ as follows:
\ \\
For $\Sigma=\pos{S_{\Sigma},F_{\Sigma},L_{\Sigma}}\in|\Sigma|$, we define $\T(\Sigma)=\pos{S_{\T(\Sigma)},F_{\T(\Sigma)},L_{\T(\Sigma)}}$ as follows:
\begin{itemize}
\item The elements of $S_{\T(\Sigma)}$ are defined by
$
\proofrule{s\in S_{\Sigma}}{"s"(=\pos{\mathrm{sort},s})\in S_{\T(\Sigma)}}
$.
\\ \
\item The elements of $F_{\T(\Sigma)}$ (called $\Sigma$-terms) are defined recursively 
(where $*$ is a join of strings.)\\
\raisebox{-5mm}{
$
\proofrule{\sigma\in F_{\BS{s}\to s}}{"\sigma"(=\pos{\mathrm{map},\sigma})\in{F_{\T(\Sigma)}}_{\BS{s}\to s}}
\
\proofrule{\pos{\mathrm{map},\sigma}\in {F_{\T(\Sigma)}}_{\BS{s}\to s}\ t_{0}\in {F_{\T(\Sigma)}}_{\BS{s}_{0}}\ t_{1}\in{F_{\T(\Sigma)}}_{\BS{s}_{1}}\ \cdots\ t_{n-1}\in{F_{\T(\Sigma)}}_{\BS{s}_{n-1}}}
{\sigma(t_{0},t_{1},\dots,t_{n-1})\,(=\pos{\mathrm{map},\sigma}*\pos{t_{0},t_{1},\dots,t_{n-1}})\in{F_{\T(\Sigma)}}_{s}}
$
}
\\ \
\item The elements of $L_{\T(\Sigma)}$ (called $\Sigma$-action) are defined recursively as follows:%
\footnote{
There are new operations.
$\impact$ has meaning like $\acta_{0}^{c}\cupact\acta_{1}$,
and $\resi$ has meaning like $\acta_{0}^{-1}\comp\acta_{1}$.
However, $\empact, \cupact, \impact, \resi$ are not essential to the discussion.
The semantics introduced here can be discussed even without these operations, with slight modifications.
The interpolation theorem also holds true without these operations.
Conversely, we can create actions "$\{\pos{t_{0},t_{1}}\}$" from terms, and add $\cup_{x}\acta$ to quantify the first-order variables within those terms.
}
\begin{center}
\(
\lambda\mid
\empact_{s}\mid\acta\cupact\acta\mid\acta\impact\acta\mid
\one_{s}\mid\acta\comp\acta\mid\acta\resi\acta\mid\acta^{*}
\)
\end{center}
We may
write $\acta^{\Coo}$ instead of $\acta\impact\empact$,
$\acta^{-1}$ instead of $\acta\resi\one$, and
$\acta^{n}$ instead of $\one\,(n=0)$, and $\acta^{n-1}\comp\acta\,(n\geq 1)$.
Strictly speaking, it is defined as follows:
\begin{align*}\hspace*{-0.5cm}
\proofrule{\lambda\in L_{\BS{s}}}{"\lambda"(=\pos{\mathrm{label},\lambda})\in{L_{\T(\Sigma)}}_{\BS{s}}}
\
\proofrule{\rho\in\{\empact,\one\}\,\,\,\,s\in S_{\T(\Sigma)}}{\rho_{s}\in{L_{\T(\Sigma)}}_{ss}}
\
\proofrule{\rho\in\{*\}\,\,\,\,\acta\in{L_{\T(\Sigma)}}_{\BS{s}}}{\acta^{\rho}\in{L_{\T(\Sigma)}}_{\BS{s}}}
\
\proofrule{\rho\in\{\cupact,\impact,\comp,\resi\}\,\,\,\,\acta_{0},\acta_{1}\in{L_{\T(\Sigma)}}_{\BS{s}}}{\acta_{0}\rho\acta_{1}\in{L_{\T(\Sigma)}}_{\BS{s}}}
\end{align*}
where
$\rho_{s}=\pos{\mathrm{op},\rho,s}$,
$\acta^{\rho}=\pos{\mathrm{op},\rho,\acta}$, and
$\acta_{0}\rho\acta_{1}=\pos{\mathrm{op},\rho,\acta_{0},\acta_{1}}$.
\end{itemize}
For $\chi:\Sigma\to\Sigma'$,
we define $\T(\chi):\T(\Sigma)\to\T(\Sigma')$
as follows:
\begin{itemize}
\item
$S_{\T(\chi)}(\pos{\mathrm{sort},s}):=\pos{\mathrm{sort},S_{\chi}(s)}$
\item
$F_{\T(\chi)}$ is defined recursively.
\(F_{\T(\chi)}(\sigma(t_{0},t_{1},\dots,t_{n-1})):=F_{\chi}(\sigma)(F_{\T(\chi)}(t_{0}),F_{\T(\chi)}(t_{1}),\dots,F_{\T(\chi)}(t_{n-1}))\)
\item
$L_{\T(\chi)}$ is defined recursively.
\(
L_{\T(\chi)}(\acta):=
\begin{cases}
\pos{\mathrm{label},L_{\chi}(\lambda)}&(\acta=\pos{\mathrm{label},\lambda})\\
\rho_{S_{\T(\chi)}(s)}&(\acta=\rho_{s},\,\rho\in\{\empact,\one\})\\
L_{\T(\chi)}(\acta_{0})^{\rho}&(\acta=\acta_{0}^{\rho},\,\rho\in\{*\})\\
L_{\T(\chi)}(\acta_{0})\rho L_{\T(\chi)}(\acta_{1})&(\acta=\acta_{0}\rho\acta_{1},\,\rho\in\{\cupact,\impact,\comp,\resi\})
\end{cases}
\)
\end{itemize}
\end{definition}

\begin{remark}
$\eta:=\{\pos{\pos{\mathrm{sort},\cdot},\pos{\mathrm{map},\cdot},\pos{\mathrm{label},\cdot}}:\Sigma\to\T(\Sigma)\}_{\Sigma\in|\Sig|}$ is a natural transformation.
Also, for $\chi:\Sigma\to\T(\Sigma')$, if we define $\T_{\chi}:\T(\Sigma)\to\T(\Sigma')$ as follows, then $\pos{\T,\eta,\T_{(\cdot)}}$ is a Kleisli triple.
\begin{itemize}
\item
$S_{\T_{\chi}}(\pos{\mathrm{sort},s}):=S_{\chi}(s)$
\item
\(F_{\T_{\chi}}(\sigma(t_{0},t_{1},\dots,t_{n-1})):=F_{\chi}(\sigma)*\pos{F_{\T_{\chi}}(t_{0}),F_{\T_{\chi}}(t_{1}),\dots,F_{\T_{\chi}}(t_{n-1})}\)
\item
\(L_{\T_{\chi}}(\acta):=\begin{cases}
L_{\chi}(\lambda)&(\acta=\pos{\mathrm{label},\lambda})\\
\rho_{S_{\T_{\chi}}(s)}&(\acta=\rho_{s},\,\rho\in\{\empact,\one\})\\
L_{\T_{\chi}}(\acta_{0})^{\rho}&(\acta=\acta_{0}^{\rho},\,\rho\in\{*\})\\
L_{\T_{\chi}}(\acta_{0})\rho L_{\T_{\chi}}(\acta_{1})&(\acta=\acta_{0}\rho\acta_{1},\,\rho\in\{\cupact,\impact,\comp,\resi\})
\end{cases}\)
\end{itemize}
If there is no risk of confusion, $\T_{\chi}$ is also represented as $\T(\chi)$. Or, more simply, it is written as $\chi$.
Symbols such as $\mathrm{sort}$, $\mathrm{map}$, $\mathrm{label}$, and $\mathrm{op}$ may also be omitted.
\end{remark}

\begin{definition}[Generalized signature morphisms]
The Kleisli category constructed from the above triples is denoted by $\Sig^{\gene}$, and its morphisms are called generalized signature morphisms.
Specifically, the objects are $|\Sig^{\gene}|=|\Sig|$, and the morphisms $\chi:\Sigma\to\Sigma'\in\Sig^{\gene}$ satisfy the following:
\begin{itemize}
\item $S_{\chi}:S_{\Sigma}\to S_{\T(\Sigma')}$ is a function,
\item $\{{F_{\chi}}_{\BS{s}\to s}:{F_{\Sigma}}_{\BS{s}\to s}\to{F_{\T(\Sigma')}}_{\chi(\BS{s})\to\chi(s)}\}_{\pos{\BS{s},s}\in S_{\Sigma}^{*}\times S_{\Sigma}}$ is a $S_{\Sigma}^{*}\times S_{\Sigma}$-sorted function, and
\item $\{{L_{\chi}}_{\BS{s}}:{L_{\Sigma}}_{\BS{s}}\to{L_{\T(\Sigma')}}_{\chi(\BS{s})}\}_{\BS{s}\in S_{\Sigma}^{=}}$ is a $S_{\Sigma}^{=}$-sorted function.
\end{itemize}
The difference from signature morphism is that the bottom two targets have been changed to composite ones.
The operations $\Dex$, $\Mod$, and $\Sen$ introduced later differ slightly between $\Sig$ and $\Sig^{\gene}$, but we will proceed without separating the notation.
When there is a risk of confusion, we may also write $\Dex^{\gene}$, $\Mod^{\gene}$, and $\Sen^{\gene}$.
\end{definition}

\begin{definition}[Variables]
Given a signature $\Sigma\in|\Sig|$.
A $\Sigma$-variable is a tuple $\pos{\mathrm{var},n,\Sigma}$.
Here, $n$ is an element of $\omega$ and is called the variable name.
(Including $\Sigma$ and $\mathrm{var}$ in the tuple is to treat them as new elements that are not present in $\Sigma$ or $\T(\Sigma)$.)
{\par}
$X$ is called a $\Sigma$ variable block if $X$ is a many-sorted disjoint sets $\{X_{s}\}_{s\in S\sqcup S^{=}}$ of variables.
In this case, we consider only the case where $X$ is finite, and this is simply called a block. The set of $\Sigma$ blocks is denoted by $\Sigma_{\Dex}$.
Any $\chi:\Sigma\to\Sigma'\in\Sig\,(\Sig^{\gene})$ defines a translation function $\chi_{\Dex}:\Sigma_{\Dex}\to\Sigma'_{\Dex}$ by
$\chi_{\Dex}(\{\cdots\pos{\mathrm{var},n,\Sigma}\cdots\}):=\{\cdots\pos{\mathrm{var},n,\Sigma'}\cdots\}$.
\end{definition}

\begin{definition}[Extension using variables]
Let $\Sigma=\pos{S,F,L}\in|\Sig|$, and $X\in\Sigma_{\Dex}$.
We define $\Sigma^{\Dex}(X):\Sigma\to\Sigma^{\Dex}[X]$ as follows:
\begin{itemize}
\item $\Sigma^{\Dex}[X]:=\pos{S,F\cup \{X_{s}\}_{s\in S},L\cup\{X_{\BS{s}}\}_{\BS{s}\in S^{=}}}$
\item $\Sigma^{\Dex}(X):\Sigma\to\Sigma^{\Dex}[X]$ is the unique inclusion.
(For $\gene$, $\Sigma^{\Dex^{\gene}}(X):=\eta_{\Sigma^{\Dex}[X]}\circ\Sigma^{\Dex}(X)$.)
\end{itemize}
For any $\chi:\Sigma\to\Sigma'\in\Sig\,(\Sig^{\gene})$, $X\in\Sigma_{\Dex}$,
we write $\chi^{\Dex}[X]:\Sigma^{\Dex}[X]\to\Sigma'^{\Dex}[X']\in\Sig\,(\Sig^{\gene})$ as $\chi$ with the correspondence determined by $x(=\pos{\mathrm{var},n,\Sigma})\mapsto x'(=\pos{\mathrm{var},n,\Sigma'})$ added to it.
(For $\gene$, $\chi^{\Dex^{\gene}}[X]:=\eta_{\Sigma^{\Dex}[X]}\circ\chi^{\Dex}[X]$.)
The following commutativity holds true:
\begin{center}
\(\xymatrix@R=20pt@C=70pt{ 
\Sigma^{\Dex}[X] \ar[r]|{\chi^{\Dex}[X]} & \Sigma'^{\Dex}[X']\\
\Sigma \ar[u]|{\Sigma^{\Dex}(X)} \ar[r]|{\chi} & \Sigma' \ar[u]|{\Sigma'^{\Dex}(X')}
}\)
\end{center}
\end{definition}

\begin{definition}[Model functor]\label{def:Model-functor}
Given a signature \(\Sigma\), a \emph{\(\Sigma\)-model} \(\mathfrak{A}\) is
an \(\pos{S, F}\)-algebra \(\mathfrak{A}\)
that interprets every label \(\lambda \in L_{ss}\) as a relation $\lambda^{\A}\subseteq\A_{s}\times\A_{s}$.
These also define the interpretation of terms and actions.
\begin{itemize}
\item
$\A_{\pos{\mathrm{sort},s}}:=\A_{s}$
\item
Terms are interpreted as
$
\sigma(t_{0},t_{1},\dots,t_{n-1})^{\A}:=\sigma^{\A}(t_{0}^{\A},t_{1}^{\A},\dots,t_{n-1}^{\A})
$.
\item
Actions are interpreted as binary transition relations in models.
Given a model $\A$ over a signature $\Sigma$, $s\in S$, and actions $\acta,\acta_1,\acta_2\in{L_{\T(\Sigma)}}_{ss}$, we have:
\begin{align*}
&\pos{label,\lambda}^{\A}:=\lambda^{\A}&
&\empact_{s}^{\A}:=\emptyset_{\A_{s}}& &(\acta_{0}\cupact\acta_{1})^{\A}:=\acta_{0}^{\A}\cup\acta_{1}^{\A}& &(\acta_{0}\impact\acta_{1})^{\A}:={\acta_{0}^{\A}}^{c}\cup\acta_{1}^{\A}& \\
&(\acta^*)^{\A}:=(\acta^{\A})^{*}& &\one_{s}^{\A}:=\id_{\A_{s}}& &(\acta_{0}\comp\acta_{1})^{\A}:=\acta_{0}^{\A}\comp\acta_{1}^{\A}& &(\acta_{0}\resi\acta_{1})^{\A}:={\acta_{0}^{\A}}^{-1}\comp\acta_{1}^{\A}&
\end{align*}
\end{itemize}
\emph{Homomorphism} would only complicate things, so this time we will only consider $\id_{\A}:\A\to\A\,(=\{\id_{\A_{s}}\}_{s\in S})$.
The category of the $\Sigma$ model is denoted by $\Mod(\Sigma)$.
{\par}
Every \(\chi:\Sigma\to\Sigma'\in\Sig\,(\Sig^{\gene})\) determines a \emph{model-reduct functor} $\Mod(\chi):\Mod(\Sigma')\to\Mod(\Sigma)\,(\A'\mapsto\A'\red_{\chi})$ such that:
\begin{itemize}
\item for every \(\Sigma'\)-model \(\A'\),
\((\A'\red_\chi)_s=\A'_{\chi(s)}\) for each sort \(s \in S\),
\(\sigma^{\A'\red_{\chi}} = \chi(\sigma)^{\A'}\) for each symbol \(\sigma \in F\), and
\(\lambda^{\A'\red_{\chi}} = \chi(\lambda)^{\A'}\)
for each label \(\lambda \in L\);%
\footnote{
Although the notation is the same, note that in the case of $\chi\in\Sig^{\gene}$, $\chi(\sigma)\in\T(\Sigma')$, so $\chi(\sigma)^{\A'}$ is also written in the sense of a term.
The same applies to labels.
}
and
\item for every \(\Sigma'\)-homomorphism \(h' \colon \A' \to \B'\), \((h' \red_{\chi})_{s} = h'_{\chi(s)}\) for each \(s \in S\).
\end{itemize}
The mapping \(\Mod:\Sig^{op}\,({\Sig^{\gene}}^{op})\to\CAT\) is a functor.
\end{definition}
For any signature morphism $\chi:\Sigma\to\Sigma'$, any $\Sigma$-model $\A$ and any $\Sigma'$-model $\A'$ if $\A=\A'\red_\chi$, we say that $\A$ is the \emph{$\chi$-reduct} of $\A'$, and that $\A'$ is a \emph{$\chi$-expansion} of  $\A$.
For example, let $\Sigma\in|\Sig|$, and $X\in\Sigma_{\Dex}$.
Then an expansion of a $\Sigma$-model $\A$ along $\Sigma^{\Dex}(X):\Sigma\to\Sigma^{\Dex}[X]$ can be seen as a pair $\pos{\A,g:X\to \A}$, where $g$ is a valuation of $X$ in $\A$.

\begin{definition}[Sentence functor]
The set $\Sen(\Sigma)$ of \emph{sentences} over $\Sigma$ is given by the following grammar:
\begin{center}
$\phi::=t_{0}=t_{1}\mid t_{0}\stackrel{\acta}{\Rightarrow}t_{1}\mid\phi\to\phi\mid\vee\Phi\mid\wedge\Phi\mid\Exists{X}\phi'\mid\Forall{X}\phi'$
\end{center}
where
(\emph{a})~$t_{0}$ and $t_{1}$ are \(\Sigma\)-terms of the same sort;
(\emph{b})~$\acta$ is a $\Sigma$-action;
(\emph{c})~$\Phi$ is a finite sequence of $\Sigma$-sentences; and
(\emph{d})~$X$ is a block for $\Sigma$ and $\phi'$ is a $\Sigma[X]$-sentence.
{\par}
The collection of only the equations and action sentences with the label $\lambda\in L_{\Sigma}$ is denoted by $\Sen^{b}(\Sigma)$.
When \(\Phi = \pos{}\), we may write \(\bot\,(\top)\) instead of \(\vee\Phi\,(\wedge\Phi)\).
When \(\Phi = \pos{\phi_{0}, \phi_{1}, \dotsc, \phi_{n-1}}\,(n\geq1)\), we may write \(\phi_{0}\vee\phi_{1}\vee\dotsb\vee\phi_{n-1}\,(\phi_{0}\wedge\phi_{1}\wedge\dotsb\wedge\phi_{n-1})\) instead of \(\vee\Phi\,(\wedge\Phi)\).
Also we may use
$\neg\phi$ instead of $\phi\to\bot$, and
$\phi\leftrightarrow\phi'$ instead of $(\phi\to\phi')\wedge(\phi'\to\phi)$.
{\par}
Any $\chi\colon\Sigma\to\Sigma'\in\Sig\,(\Sig^{\gene})$ can be canonically extended to a \emph{sentence-translation function} $\Sen(\chi):\Sen(\Sigma)\to\Sen(\Sigma')$ (or simply $\chi:\Sen(\Sigma)\to\Sen(\Sigma')$) given by:%
\footnote{
$\chi(t),\chi(\acta)$ means $\T(\chi)(t),\T(\chi)(\acta)$ when $\chi$ is treated as $\chi\in\Sig$, and $\T_{\chi}(t),\T_{\chi}(\acta)$ when $\chi\in\Sig^{\gene}$. 
}
\begin{align*}
&\chi(t_{0}=t_{1}):=(\chi(t_{0})=\chi(t_{1}))&
&\chi(t_{0}\stackrel{\acta}{\Rightarrow}t_{1}):=\chi(t_{0})\stackrel{\chi(\acta)}{\Longrightarrow}\chi(t_{1})&\\
&\chi(\phi\to\phi'):=\chi(\phi)\to\chi(\phi')& &&\\
&\chi(\vee\Phi):=\vee\chi(\Phi)&
&\chi(\wedge\Phi):=\wedge\chi(\Phi)&\\
&\chi(\Exists{X}\phi'):=\Exists{\chi_{\Dex}(X)}\chi^{\Dex}[X](\phi')&
&\chi(\Forall{X}\phi'):=\Forall{\chi_{\Dex}(X)}\chi^{\Dex}[X](\phi')&
\end{align*}
$\Sen:\Sig\,(\Sig^{\gene})\to\Set$ is a functor.
\end{definition}
For the sake of simplicity, we may identify variables only by their name, provided that there is no danger of confusion.
Using this convention, each inclusion morphism $\iota\colon\Sigma\hookrightarrow\Sigma'$ determines an inclusion function $\iota\colon\Sen(\Sigma)\hookrightarrow\Sen(\Sigma')$.

\begin{definition}[Satisfaction relation]\label{def:Satisfaction-relation}
We write $d_{0}\stackrel{\pi}{\Longrightarrow}d_{1}$ as the condition that $\pos{d_{0},d_{1}}\in\A_{s}\times\A_{s}$ satisfies the relation $\pi$.
We define the \emph{satisfaction relation} between models and sentences as follows:
\begin{itemize}
\item $\A\models t_{0}=t_{1}$ iff $t_{0}^{\A}=t_{1}^{\A}$,
\item $\A\models t_{0}\stackrel{\acta}{\Longrightarrow}t_{1}$ iff $t_{0}^{\A}\stackrel{\acta^{\A}}{\Longrightarrow}t_{1}^{\A}$,
\item $\A\models\phi\to\phi'$ iff $\A\models\phi\text{ implies }\A\models\phi'$,
\item $\A\models\vee\phi$ iff $\A\models\phi_{i}$ for some $i\in n$ where $\phi:n\to\Sen(\Sigma)$,
\item $\A\models\wedge\phi$ iff $\A\models\phi_{i}$ for each $i\in n$ where $\phi:n\to\Sen(\Sigma)$,
\item $\A\models\Exists{X}\phi'$ iff $\A'\models\phi'$ for some $\Sigma^{\Dex}(X)$-expansion $\A'$ of $\A$, and
\item $\A\models\Forall{X}\phi'$ iff $\A'\models\phi'$ for each $\Sigma^{\Dex}(X)$-expansion $\A'$ of $\A$.
\end{itemize}
This definition satisfies the Satisfaction condition.
That is, for all signature morphisms $\chi:\Sigma\to\Sigma'$,
all $\Sigma'$-models $\A$ and
all sentences $\phi\in\Sen(\Sigma)$
we have:
$\A\red_{\chi}\models\phi
\quad\text{iff}\quad
\A\models\chi(\phi)$.
\end{definition}
In other words, $\TA$ is an institution.
As we have confirmed, it is also an institution on $\Sig^{\gene}$, but since its properties as a category differ from $\Sig$,
we will still use $\Sig$ for the category of signatures of $\TA$.
The reason for introducing generalized signature morphisms is that substitutions become easier to handle.
\begin{definition}[Substitutions]
Let $\Sigma=\pos{S,F,L}\in|\Sig|$, and $X,Y\in\Sigma_{\Dex}$.
A substitution is defined as a function of $\theta\circ\Sigma^{\Dex}(X)=\Sigma^{\Dex}(Y)$ where $\theta:\Sigma^{\Dex}[X]\to\Sigma^{\Dex}[Y]\in\Sig^{\gene}$.
It is sometimes expressed symbolically as $\theta:X\to Y$. In particular, $\theta:X\to \emptyset$ is called $\Sigma^{\Dex}(X)$-substitution.
The equation $\theta\circ\Sigma^{\Dex}(X)=\id_{\Sigma}$ holds true.
\end{definition}

\section{Sequent calculus}\label{sec:sequent}

\begin{definition}[Sequent]
A sequent on $\TA$ is a family of relations $\vdash=\{\vdash_{\Sigma}\subseteq\mathcal{P}(\Sen(\Sigma))\times\mathcal{P}(\Sen(\Sigma))\}_{\Sigma\in|\Sig|}$ that satisfies the following "structural rules".
When discussing on fragments, we consider it restricted, for example, to $\mathcal{P}(\Sen^{b}(\Sigma))\times\mathcal{P}(\Sen^{b}(\Sigma))$.
\par\noindent 
\begin{center}\small
\textbf{The structural rules}\\
\begin{tabular}{lll}
\\
&$\Modify(\chi)~\proofrule{\,\Gamma\vdash_{\Sigma}\Delta\,}{\,\Gamma'\vdash_{\Sigma'}\Delta'\,}$
$\Space\chi:\Sigma\to\Sigma'\in\Sig^{\gene},\,\,\Gamma'\supseteq\chi(\Gamma),\,\,\Delta'\supseteq\chi(\Delta)$\\
&\\
&$\Init_{\psi}~\proofrule{}{\,\{\psi\}\cup\Gamma\vdash_{\Sigma}\Delta\cup\{\psi\}\,}$
$\Space\Cut^{\psi}~\proofrule{\,\Gamma\vdash_{\Sigma}\Delta\cup\{\psi\}\,\Space\,\{\psi\}\cup\Gamma'\vdash_{\Sigma}\Delta'\,}{\,\Gamma\cup\Gamma'\vdash_{\Sigma}\Delta\cup\Delta'\,}$
\\
\end{tabular}
\end{center}
\medskip
In intuitionistic logic, only one sentence can appear to the right of a sequent, and the three rules are changed from "If the above is true, then the below is true" to "If the above is true (at most one sentence appears to the right in both the top and bottom), then the below is true."
To shorten the notation, $\Gamma\cup\Gamma'$ may be abbreviated as $\Gamma\,\Gamma'$, and $\{\phi\}$ as $\phi$.
Also, $\Gamma\,\Gamma\vdash_{\Sigma}\Delta\,\Delta'$ may be written as $\Gamma'\vdash_{\Gamma\mid\Sigma\mid\Delta}\Delta'$.
\end{definition}

\begin{definition}[Semantic sequent]\label{def:Semantic-sequent}
The semantic sequent $\vDash$ is defined as follows:
\\
\(
\Gamma\vDash\Delta\text{ iff}\)\(\text{ there is no model }\A\text{ such that }\A\models\gamma\text{ for all }\gamma\in\Gamma\text{ and }\A\not\models\gamma\text{ for all }\gamma\in\Delta
\)
\\
In standard notation, $\Gamma\vDash\Delta$ is often used with meanings like $\wedge\Gamma\Rightarrow\wedge\Delta$, but here, in accordance with the term sequent, it is interpreted as $\wedge\Gamma\Rightarrow\vee\Delta$.
\end{definition}

\begin{definition}[Soundness, Completeness, and Compactness]
Let $\vdash$ be the sequent of $\TA$.
$\vdash$ is sound (complete) on $\Sigma\in|\Sig|$ if $\vdash_{\Sigma}\subseteq\vDash_{\Sigma}(\vdash_{\Sigma}\supseteq\vDash_{\Sigma})$ holds true.
$\vdash$ is compact on $\Sigma\in|\Sig|$ if, for any set of sentences $\Gamma,\Delta$,
if $\Gamma\vdash_{\Sigma}\Delta$ holds true, then there exists a finite part
$\iota:\Sigma_{f}\subseteq_{\omega}\Sigma$,
$\Gamma_{f}\subseteq_{\omega}\iota^{-1}(\Gamma)$,
$\Delta_{f}\subseteq_{\omega}\iota^{-1}(\Delta)$,
such thet $\Gamma_{f}\vdash_{\Sigma_{f}}\Delta_{f}$ holds true.
(Here, $A\subseteq_{\alpha}B$ means that $A\subseteq B$ and $\card(A)<\alpha$.)
\end{definition}

\begin{definition}\label{def:basic-proof}
We define the sequent $\vdash^{b}$ for $\Sen^{b}$ as the smallest relation that satisfies the following proof rule.%
\footnote{
Strictly speaking, it is defined as a proof tree.
Also, the $\UL{(\cdot)}$ and $\OL{(\cdot)}$ that appear in the proof rules represent the sentence used and the sentence derived, respectively.
Details will be discussed within the proof of well-definedness.
}
\begin{center}
\textbf{The basic proof rules}\\
\begin{tabular}{l l l l}
\\
$\Init_{\phi}~\proofrule{}{\,\OL{\phi}\vdash_{\Gamma\mid\Sigma\mid\Delta}\OL{\phi}\,}$
&
$\Ry~\proofrule{}{\,\vdash_{\Gamma\mid\Sigma\mid\Delta}\OL{t=t}\,}$ 
&
$\Sy~\proofrule{\,\vdash_{\Gamma\mid\Sigma\mid\Delta}\UL{t=t'}\,}{\,\vdash_{\Gamma\mid\Sigma\mid\Delta}\OL{t'=t}\,}$ 
& 
$\Ty~\proofrule{\,\vdash_{\Gamma\mid\Sigma\mid\Delta}\UL{t=t'}\,\Space\,\vdash_{\Gamma'\mid\Sigma\mid\Delta'}\UL{t'=t''}\,}{\,\vdash_{\Gamma\,\Gamma'\mid\Sigma\mid\Delta\,\Delta'}\OL{t=t''}\,}$\\
&& \\
\multicolumn{4}{l}{
$\Fy~\proofrule{\,\pos{\vdash_{\Gamma\mid\Sigma\mid\Delta}\UL{\BS{t}_{n}=\BS{t}'_{n}}}_{n\in N}\,}{\,\vdash_{\Gamma\mid\Sigma\mid\Delta}\OL{\sigma(\BS{t})=\sigma(\BS{t}')}\,}$ \Space
$\Py~\proofrule{\,\vdash_{\Gamma\mid\Sigma\mid\Delta}\UL{t_{0}=t'_{0}}\,\Space\,\vdash_{\Gamma'\mid\Sigma\mid\Delta'}\UL{t_{1}=t'_{1}}\,\Space\,\vdash_{\Gamma''\mid\Sigma\mid\Delta''}\UL{t_{0}\stackrel{\pi}{\Longrightarrow}t_{1}}\,}
{\,\vdash_{\Gamma\,\Gamma'\,\Gamma''\mid\Sigma\mid\Delta\,\Delta'\,\Delta''}\OL{t'_{0}\stackrel{\pi}{\Longrightarrow}t'_{1}}\,}$}\\
\end{tabular}
\end{center}
In intuitionistic logic, we use the symbol $\vdash^{b,\INT}$ and set $\Delta,\Delta',\Delta''=\emptyset$ for all rules.
\end{definition}

\begin{lemma}[basic soundness, completeness, compactness]\label{lemma:basic-soundness-completeness-compactness}%
Regarding the set of atomic sentences $\Gamma,\Delta\subseteq\Sen^{b}(\Sigma)$,
\(
\Gamma\vdash^{b}_{\Sigma}\Delta\iff\Gamma\vDash_{\Sigma}\Delta
\)
holds true.
In particular, when $\Delta=\{\phi\}$, $\Gamma\vdash^{b,\INT}_{\Sigma}\Delta\iff\Gamma\vdash^{b}_{\Sigma}\Delta\iff\Gamma\vDash_{\Sigma}\Delta$.
(Note that when $\Gamma\vDash_{\Sigma}\Delta$, there exists a sentence $\phi\in\Delta$ that satisfies $\Gamma\vDash_{\Sigma}\phi$, so the proof power of $\vdash^{b}$ and $\vdash^{b,\INT}$ is "the same".)
Furthermore, these are compact for atomic sentences.
\end{lemma}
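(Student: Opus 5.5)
Soundness is a routine induction on proof trees. Each basic rule preserves semantic validity: $\Init$ is trivial. $\Ry,\Sy,\Ty,\Fy$ hold because $=$ is interpreted as identity in the carrier. $\Py$ holds because $t_{0}\stackrel{\pi}{\Longrightarrow}t_{1}$ depends only on the values $t_{0}^{\A},t_{1}^{\A}$. In each multi-premise rule, any model of $\Gamma\,\Gamma'\,\Gamma''$ that falsifies all of $\Delta\,\Delta'\,\Delta''$ satisfies every premise's used sentence, hence the derived one. Since every rule of $\vdash^{b,\INT}$ is a special case of a rule of $\vdash^{b}$, we also get $\Gamma\vdash^{b,\INT}_{\Sigma}\phi\Rightarrow\Gamma\vdash^{b}_{\Sigma}\phi$. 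So it remains to show $\Gamma\vDash_{\Sigma}\Delta\Rightarrow\Gamma\vdash^{b,\INT}_{\Sigma}\phi$ for some single $\phi\in\Delta$ and some finite $\Gamma_{f}\subseteq\Gamma$. This gives all the equivalences, the parenthetical remark, and compactness at once.

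For completeness I would use the term-model (Herbrand) construction. Define $t\equiv t'$ iff $\Gamma\vdash^{b,\INT}_{\Sigma}t=t'$. Rules $\Ry,\Sy,\Ty,\Fy$ make $\equiv$ a congruence on the $\Sigma$-term algebra. Let $\A_{\Gamma}$ be the quotient, with each label $\lambda$ interpreted as $\{\pos{[t_{0}],[t_{1}]}\mid\Gamma\vdash^{b,\INT}_{\Sigma}t_{0}\stackrel{\lambda}{\Longrightarrow}t_{1}\}$. This is well defined by $\Py$. Using $\Init$, $\A_{\Gamma}\models\Gamma$.

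One subtlety is empty sorts. The terms may not inhabit every sort, but a $\Sigma$-algebra with empty carriers is still a model here, since sentences in $\Sen^{b}$ contain no variables. I would check that the paper's notion of model allows empty carriers; if it does not, I would add fresh elements to the empty sorts, on which no atom of $\Sen^{b}(\Sigma)$ can speak.

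Now suppose $\Gamma\vDash\Delta$. Since $\A_{\Gamma}\models\Gamma$, some $\phi\in\Delta$ holds in $\A_{\Gamma}$. By construction of $\A_{\Gamma}$, this means $\Gamma\vdash^{b,\INT}_{\Sigma}\phi$: for an equation it is the definition of $\equiv$, and for $\lambda$-transitions it is the definition of $\lambda^{\A_{\Gamma}}$. Here $\Delta\subseteq\Sen^{b}$ matters, because only labels occur, not compound actions. The derivation is an ordinary finite tree: the only multi-premise rule with unbounded arity is $\Fy$, and its arity is finite by the arity of $\sigma$. So it uses finitely many hypotheses $\Gamma_{f}$ and finitely many symbols $\Sigma_{f}$, and the derivation can be replayed in $\Sigma_{f}$, giving compactness. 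Weakening to the full $\Gamma,\Delta$ is the $\Modify$ structural rule, or the side contexts $\Gamma\mid\Sigma\mid\Delta$ built into the rules.

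The main obstacle is formal, not conceptual: pinning down what a ``proof tree'' is (the footnote's $\UL{\cdot}/\OL{\cdot}$ notation and the context sets), so that the induction for soundness and the finiteness extraction are rigorous, and handling the possibly empty or uninhabited sorts in the term model. I would first fix the proof trees as finite well-founded trees whose leaves are $\Init$ or $\Ry$ instances. Then both inductions are immediate.
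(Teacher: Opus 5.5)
Your proposal is correct and follows essentially the route the paper intends. The paper's proof just defers to the cited earlier TA work, whose argument is this standard term-model construction (the initial model of the atomic theory $\Gamma$): soundness by induction on the rules, and compactness from the finiteness of $\vdash^{b}$ derivations. Your observation that the single model $\A_{\Gamma}$ already makes some one $\phi\in\Delta$ derivable intuitionistically is exactly what gives the classical/intuitionistic equivalence and the parenthetical remark in the statement.
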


\begin{definition}[Syntactic sequent]\label{def:Syntactic-sequent}
We define the sequent $\vdash^{*}$ of $\TA$ as the family of smallest relations satisfying the following: %
\par\noindent 
\begin{center}\small
\textbf{The proof rule for atomic sentences}\\
\begin{tabular}{c}
\\
$\Atom\,\proofrule{}{\,\OL{\Gamma_{b}}\vdash_{\Gamma\mid\Sigma\mid\Delta}\OL{\Delta_{b}}\,}\,(\Gamma_{b}\vdash^{b}_{\Sigma}\Delta_{b})$
\end{tabular}
\end{center}
\medskip
\par\noindent 
\begin{center}\small
\textbf{The proof rules for actions}\\
\begin{tabular}{ll}
\\
${\empact}_{L}~\proofrule{}{\,\OL{t_{0}\stackrel{\empact_{s}}{\Longrightarrow}t_{1}}\vdash_{\Gamma\mid\Sigma\mid\Delta}\,}$
&
\\
\\
${\cupact}_{L}~\proofrule{\,\UL{t_{0}\stackrel{\acta_{0}}{\Longrightarrow}t_{1}}\vdash_{\Gamma_{0}\mid\Sigma\mid\Delta_{0}}\,\Space\,\UL{t_{0}\stackrel{\acta_{1}}{\Longrightarrow}t_{1}}\vdash_{\Gamma_{1}\mid\Sigma\mid\Delta_{1}}}{\,\OL{t_{0}\stackrel{\acta_{0}\cupact\acta_{1}}{\Longrightarrow}t_{1}}\vdash_{\Gamma_{0}\,\Gamma_{1}\mid\Sigma\mid\Delta_{0}\,\Delta_{1}}\,}$
&
${\cupact}_{R}^{i}~\proofrule{\,\vdash_{\Gamma\mid\Sigma\mid\Delta}\UL{t_{0}\stackrel{\acta_{i}}{\Longrightarrow}t_{1}}\,}{\,\vdash_{\Gamma\mid\Sigma\mid\Delta}\OL{t_{0}\stackrel{\acta_{0}\cupact\acta_{1}}{\Longrightarrow}t_{1}}\,}\,(i\in2)$
\\
\\
${\impact}_{L}~\proofrule{\,\vdash_{\Gamma_{0}\mid\Sigma\mid\Delta_{0}}\UL{t_{0}\stackrel{\acta_{0}}{\Longrightarrow}t_{1}}\,\Space\,\UL{t_{0}\stackrel{\acta_{1}}{\Longrightarrow}t_{1}}\vdash_{\Gamma_{1}\mid\Sigma\mid\Delta_{1}}}{\,\OL{t_{0}\stackrel{\acta_{0}\impact\acta_{1}}{\Longrightarrow}t_{1}}\vdash_{\Gamma_{0}\,\Gamma_{1}\mid\Sigma\mid\Delta_{0}\,\Delta_{1}}\,}$
&
${\impact}_{R}~\proofrule{\,\UL{t_{0}\stackrel{\acta_{0}}{\Longrightarrow}t_{1}}\vdash_{\Gamma\mid\Sigma\mid\Delta}\UL{t_{0}\stackrel{\acta_{1}}{\Longrightarrow}t_{1}}\,}{\,\vdash_{\Gamma\mid\Sigma\mid\Delta}\OL{t_{0}\stackrel{\acta_{0}\impact\acta_{1}}{\Longrightarrow}t_{1}}\,}$
\\
\\
${\one}_{L}~\proofrule{\,\UL{t_{0}=t_{1}}\vdash_{\Gamma\mid\Sigma\mid\Delta}\,}{\,\OL{t_{0}\stackrel{\one_{s}}{\Longrightarrow}t_{1}}\vdash_{\Gamma\mid\Sigma\mid\Delta}\,}$
&
${\one}_{R}~\proofrule{\,\vdash_{\Gamma\mid\Sigma\mid\Delta}\UL{t_{0}=t_{1}}\,}{\,\vdash_{\Gamma\mid\Sigma\mid\Delta}\OL{t_{0}\stackrel{\one_{s}}{\Longrightarrow}t_{1}}\,}$
\\
\\
${\comp}_{L}~\proofrule{\,\UL{t_{0}\stackrel{\acta_{0}}{\Longrightarrow}x}\,\UL{x\stackrel{\acta_{1}}{\Longrightarrow}t_{1}}\vdash_{\Sigma^{\Dex}(x)(\Gamma)\mid\Sigma^{\Dex}[x]\mid\Sigma^{\Dex}(x)(\Delta)}\,}{\,\OL{t_{0}\stackrel{\acta_{0}\comp\acta_{1}}{\Longrightarrow}t_{1}}\vdash_{\Gamma\mid\Sigma\mid\Delta}}$
&
${\comp}_{R}^{t}~\proofrule{\,\vdash_{\Gamma\mid\Sigma\mid\Delta}\UL{t_{0}\stackrel{\acta_{0}}{\Longrightarrow}t}\,\Space\,\vdash_{\Gamma'\mid\Sigma\mid\Delta'}\UL{t\stackrel{\acta_{1}}{\Longrightarrow}t_{1}}\,}{\,\vdash_{\Gamma\,\Gamma'\mid\Sigma\mid\Delta\,\Delta'}\OL{t_{0}\stackrel{\acta_{0}\comp\acta_{1}}{\Longrightarrow}t_{1}}\,}$
\\
\\
${\resi}_{L}~\proofrule{\,\UL{x\stackrel{\acta_{0}}{\Longrightarrow}t_{0}}\,\UL{x\stackrel{\acta_{1}}{\Longrightarrow}t_{1}}\vdash_{\Sigma^{\Dex}(x)(\Gamma)\mid\Sigma^{\Dex}[x]\mid\Sigma^{\Dex}(x)(\Delta)}\,}{\,\OL{t_{0}\stackrel{\acta_{0}\resi\acta_{1}}{\Longrightarrow}t_{1}}\vdash_{\Gamma\mid\Sigma\mid\Delta}}$
&
${\resi}_{R}^{t}~\proofrule{\,\vdash_{\Gamma\mid\Sigma\mid\Delta}\UL{t\stackrel{\acta_{0}}{\Longrightarrow}t_{0}}\,\Space\,\vdash_{\Gamma'\mid\Sigma\mid\Delta'}\UL{t\stackrel{\acta_{1}}{\Longrightarrow}t_{1}}\,}{\,\vdash_{\Gamma\,\Gamma'\mid\Sigma\mid\Delta\,\Delta'}\OL{t_{0}\stackrel{\acta_{0}\resi\acta_{1}}{\Longrightarrow}t_{1}}\,}$
\\
\\
${*}_{L}~\proofrule{\pos{\,\UL{t_{0}\stackrel{\acta^{n}}{\Longrightarrow}t_{1}}\vdash_{\Gamma_{n}\mid\Sigma\mid\Delta_{n}}}_{n\in\omega}\,}{\,\OL{t_{0}\stackrel{\acta^{*}}{\Longrightarrow}t_{1}}\vdash_{\cup_{n\in\omega}\Gamma_{n}\mid\Sigma\mid\cup_{n\in\omega}\Delta_{n}}\,}$
&
${*}_{R}^{n}~\proofrule{\,\vdash_{\Gamma\mid\Sigma\mid\Delta}\UL{t_{0}\stackrel{\acta^{n}}{\Longrightarrow}t_{1}}\,}{\,\vdash_{\Gamma\mid\Sigma\mid\Delta}\OL{t_{0}\stackrel{\acta^{*}}{\Longrightarrow}t_{1}}\,}\,(n\in\omega)$
\\
\end{tabular}
\end{center}
\medskip
\par\noindent 
\begin{center}\small
\textbf{The proof rules for complex sentences}\\
\begin{tabular}{ll}
\\
${\to}_{L}~\proofrule{\,\vdash_{\Gamma_{0}\mid\Sigma\mid\Delta_{0}}\UL{\phi_{0}}\,\Space\,\UL{\phi_{1}}\vdash_{\Gamma_{1}\mid\Sigma\mid\Delta_{1}}\,}{\,\OL{\phi_{0}\to\phi_{1}}\vdash_{\Gamma_{0}\,\Gamma_{1}\mid\Sigma\mid\Delta_{0}\,\Delta_{1}}\,}$
&
${\to}_{R}~\proofrule{\,\UL{\phi_{0}}\vdash_{\Gamma\mid\Sigma\mid\Delta}\UL{\phi_{1}}\,}{\,\vdash_{\Gamma\mid\Sigma\mid\Delta}\OL{\phi_{0}\to\phi_{1}}\,}$
\\
\\
${\vee}_{L}~\proofrule{\,\pos{\,\UL{\phi_{n}}\vdash_{\Gamma_{n}\mid\Sigma\mid\Delta_{n}}\,}_{n\in N}\,}{\,\OL{\vee\phi}\vdash_{\cup_{n\in N}\Gamma_{n}\mid\Sigma\mid\cup_{n\in N}\Delta_{n}}\,}\,(\phi\in\Sen(\Sigma)^{N})$
&
${\vee}_{R}^{n}~\proofrule{\,\vdash_{\Gamma\mid\Sigma\mid\Delta}\UL{\phi_{n}}\,}{\,\vdash_{\Gamma\mid\Sigma\mid\Delta}\OL{\vee\phi}\,}\,(\phi\in\Sen(\Sigma)^{N},\,n\in N)$
\\
\\
${\wedge}_{L}^{n}~\proofrule{\,\UL{\phi_{n}}\vdash_{\Gamma\mid\Sigma\mid\Delta}\,}{\,\OL{\wedge\phi}\vdash_{\Gamma\mid\Sigma\mid\Delta}\,}\,(\phi\in\Sen(\Sigma)^{N},\,n\in N)$
&
${\wedge}_{R}~\proofrule{\,\pos{\,\vdash_{\Gamma_{n}\mid\Sigma\mid\Delta_{n}}\UL{\phi_{n}}\,}_{n\in N}\,}{\,\vdash_{\cup_{n\in N}\Gamma_{n}\mid\Sigma\mid\cup_{n\in N}\Delta_{n}}\OL{\wedge\phi}\,}\,(\phi\in\Sen(\Sigma)^{N})$
\\
\\
$\exists_{L}~\proofrule{\,\UL{\phi}\vdash_{\Sigma(X)(\Gamma)\mid\Sigma^{\Dex}[X]\mid\Sigma(X)(\Delta)}\,}{\,\OL{\Exists{X}\phi}\vdash_{\Gamma\mid\Sigma\mid\Delta}\,}$
&
$\exists_{R}^{\theta}~\proofrule{\,\vdash_{\Gamma\mid\Sigma\mid\Delta}\UL{\theta(\phi)}\,}{\,\vdash_{\Gamma\mid\Sigma\mid\Delta}\OL{\Exists{X}\phi}\,}\,(\theta\text{ is a }\Sigma^{\Dex}(X)\text{-substitution})$
\\
\\
$\forall_{L}^{\theta}~\proofrule{\,\UL{\theta(\phi)}\vdash_{\Gamma\mid\Sigma\mid\Delta}\,}{\,\OL{\Forall{X}\phi}\vdash_{\Gamma\mid\Sigma\mid\Delta}\,}\,(\theta\text{ is a }\Sigma^{\Dex}(X)\text{-substitution})$
&
$\forall_{R}~\proofrule{\,\vdash_{\Sigma^{\Dex}(X)(\Gamma)\mid\Sigma^{\Dex}[X]\mid\Sigma^{\Dex}(X)(\Delta)}\UL{\phi}\,}{\,\vdash_{\Gamma\mid\Sigma\mid\Delta}\OL{\Forall{X}\phi}\,}$
\\
\end{tabular}
\end{center}
\medskip
In intuitionistic logic,
we use the symbol $\vdash^{*,\INT}$,
limit the number of sentences to the right of a sequent to one, and
change the rule from
"If the above is true, then the below is true" to
"If the above is true (at most one sentence appears to the right of both the above and below), then the below is true."
\end{definition}

\begin{lemma}[Soundness, Completeness]\label{lemma:Soundness-Completeness}
$\vdash^{*}$ is sound.
Let $\vdash^{*-\{\exists{\lambda}\}}$ be the subrelation of $\vdash^{*}$ restricted to $\Sen^{-\{\exists{\lambda}\}}$ (meaning that label quantification is not used at all).
Let $\Sigma\in\Sig$ be a countable signature.
Then, for any $\Gamma,\Delta\subseteq\Sen^{-\{\exists{\lambda}\}}(\Sigma)$,
\(
\Gamma\vdash^{*-\{\exists{\lambda}\}}_{\Sigma}\Delta
\iff
\Gamma\vDash_{\Sigma}\Delta.
\) holds true.
\end{lemma}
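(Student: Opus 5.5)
Soundness is proved by induction on the height of proof trees, showing that every rule preserves semantic validity of sequents in the sense of Definition~\ref{def:Semantic-sequent}. The structural rules are handled as follows. $\Init$ is immediate. $\Cut$ is a case split on whether the model satisfies $\psi$. $\Modify(\chi)$ follows from the satisfaction condition for $\Sig^{\gene}$: a countermodel $\A'$ of $\Gamma'\vdash\Delta'$ gives the countermodel $\A'\red_\chi$ of $\Gamma\vdash\Delta$. $\Atom$ is sound by Lemma~\ref{lemma:basic-soundness-completeness-compactness}. For the action rules we unfold Definition~\ref{def:Model-functor}:
\begin{itemize}
\item the rules for $\empact,\cupact,\impact,\one$ are pointwise set identities;
\item for $\comp_L$ and $\resi_L$, a countermodel of the conclusion yields a witness $d$ with $t_0^{\A}\stackrel{\acta_0^{\A}}{\Longrightarrow}d\stackrel{\acta_1^{\A}}{\Longrightarrow}t_1^{\A}$ (resp.\ the inverse form), and the expansion $x\mapsto d$ is a countermodel of the premise;
\item for $*_L$, $(\acta^{\A})^*=\bigcup_n(\acta^{\A})^n$, so a countermodel of the conclusion satisfies some $t_0\stackrel{\acta^n}{\Longrightarrow}t_1$ and refutes the $n$-th premise.
\end{itemize}
The connective and quantifier rules are the usual ones. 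For $\exists_R^\theta$ and $\forall_L^\theta$ we use that $\A\models\theta(\phi)$ iff $\A\red_\theta\models\phi$, and that $\A\red_\theta$ is a $\Sigma^{\Dex}(X)$-expansion of $\A$ because $\theta\circ\Sigma^{\Dex}(X)=\id_\Sigma$. Label variables are harmless here: their expansions assign arbitrary relations, so soundness holds for all of $\vdash^*$.

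For completeness on a countable $\Sigma$ without label quantification, I would run a Henkin/forcing-style saturation argument. This is the sequent analogue of the completeness proof of \cite{go-icalp24}. Suppose $\Gamma\nvdash^{*}_{\Sigma}\Delta$.
\begin{enumerate}
\item Extend $\Sigma$ by a countably infinite set $C$ of fresh constants of each sort; no fresh labels are needed, since no label quantifiers occur. Call the result $\Sigma_C$.
\item Enumerate all tasks: pairs of a sentence of $\Sen^{-\{\exists\lambda\}}(\Sigma_C)$ together with a side (left or right). This is a countable list, with each task repeated infinitely often.
\item Build an increasing chain $\Gamma_k\vdash\Delta_k$ of unprovable sequents. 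At each stage, apply the relevant rule backwards. For an invertible rule, or a rule with several premises ($\to_L$, $\vee_L$, $\wedge_R$, $\cupact_L$, $\impact_L$, and $*_L$ with its $\omega$ premises), one premise must remain unprovable, else the conclusion would be provable; we add that premise. For $\exists_L$, $\forall_R$, $\comp_L$, $\resi_L$ we use a fresh constant from $C$, moved in via $\Modify$. For $\exists_R$, $\forall_L$ and the parametrised rules $\comp_R^t$, $\resi_R^t$, $*_R^n$ we add every instance eventually, using the repetitions in the enumeration.
\item Take unions $\Gamma_\omega,\Delta_\omega$. By $\Init$ these are disjoint. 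Since proofs use only countably many premises, we need a compactness-free argument that any proof of a sequent over the union already lives at some finite stage; I would state this as a \emph{stage lemma}.
\end{enumerate}

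The model is the term model: carriers are $\Sigma_C$-terms modulo the congruence $\{t=t'\in\Gamma_\omega\}$, which is a congruence by $\Ry,\Sy,\Ty,\Fy$ and $\Atom$. Labels are interpreted by the atomic transitions in $\Gamma_\omega$, well defined by $\Py$. The truth lemma states that $\phi\in\Gamma_\omega$ implies $\A\models\phi$ and $\phi\in\Delta_\omega$ implies $\A\not\models\phi$. It is proved first for actions by induction on $\acta$, then for sentences by induction on $\phi$; quantifiers range over $\Sigma_C$-terms, which name every element. The $\acta^*$ case is where $*_L$ with infinitely many premises is essential. If $t_0\stackrel{\acta^*}{\Longrightarrow}t_1\in\Gamma_\omega$, then some premise $t_0\stackrel{\acta^n}{\Longrightarrow}t_1$ was added, and by induction on $n$ and on $\acta$ it holds. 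If it lies in $\Delta_\omega$, then all $t_0\stackrel{\acta^n}{\Longrightarrow}t_1$ lie in $\Delta_\omega$ by saturation under $*_R^n$.

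I expect the main obstacle to be the infinitary bookkeeping in the construction. The $*_L$ rule and the $\Cut$/$\Modify$ rules mean that unprovability of the union does not follow from finite stages by compactness. I would prove the stage lemma by induction on the (well-founded, possibly infinitely branching) proof tree: each node uses only countably many sentences, so I would arrange the enumeration so that every choice made is stable. Countability of $\Sigma$ is used exactly to make the task enumeration and the supply of constants countable. The remaining work is routine: the reduct of $\A$ to $\Sigma$ refutes $\Gamma\vdash\Delta$, so $\Gamma\nvDash\Delta$.
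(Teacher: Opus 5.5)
Your soundness argument is fine. The shape of your completeness proof ($\omega$-length saturation, term model, truth lemma along a measure that puts $\acta^{n}$ below $\acta^{*}$) is in substance the forcing argument of \cite{go-icalp24} that the paper defers to. What fails is the stage lemma you single out as the main obstacle: as stated it is false. For a label $\lambda$ and constants $t_{0},t_{1}$, put $\Gamma_{k}=\{t_{0}\stackrel{\lambda^{*}}{\Longrightarrow}t_{1}\}$ and $\Delta_{k}=\{t_{0}\stackrel{\lambda^{n}}{\Longrightarrow}t_{1}\mid n<k\}$. No stage is provable: interpret $\lambda$ as successor on $\omega$, $t_{0}$ as $0$, $t_{1}$ as $k$, and apply soundness. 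Yet the union is provable by one application of ${*}_{L}$ whose $n$-th premise is an instance of $\Init$. Countability of the sentences used at each node cannot repair this, because $\vdash^{*}$ is not compact. Your own chain avoids this example only because processing the ${*}_{L}$ task puts some $t_{0}\stackrel{\lambda^{m}}{\Longrightarrow}t_{1}$ on the left, i.e.\ by saturation, not by any finiteness of proofs.

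The missing idea is that you never need the union to be unprovable. The truth lemma uses only two facts. (i) The saturation clauses: each is secured at the finite stage where its task is handled, using only the unprovability of that finite-stage sequent. For ${*}_{L}$ the infinitary rule is used once there, to see that not all $\omega$ premises can be provable. (ii) At the base, $\Gamma_{\omega}\cap\Sen^{b}(\Sigma_{C})\nvdash^{b}\Delta_{\omega}\cap\Sen^{b}(\Sigma_{C})$, which follows from the compactness of $\vdash^{b}$ (Lemma~\ref{lemma:basic-soundness-completeness-compactness}) together with $\Atom$. Unprovability of $\Gamma_{\omega}\vdash\Delta_{\omega}$ is then an output (via soundness), not an input. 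This is exactly what genericity provides in the forcing proof. It is also why countability matters: an $\omega$-length construction has no intermediate limit stage at which unprovability would have to survive.

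A second gap is your first step: adjoining fresh constants to every sort is not conservative for this calculus. Suppose a sort $s$ of $\Sigma$ has no ground term, and take $x$ of sort $s$. The sequent $\Forall{x}\bot\vdash_{\Sigma}$ with empty right-hand side is then not provable, since the only rule that can act on it, $\forall_{L}^{\theta}$, needs a ground term of sort $s$. (This is also why the lemma can only hold if carriers may be empty.) But the same sequent becomes provable over $\Sigma_{C}$, so your chain has no unprovable starting point. If every sort has a ground term $g_{s}$, conservativity follows from $\Modify$ along $c\mapsto g_{s}$. In general you should do the following, as the paper does with variable blocks in the proof of Theorem~\ref{thm:Completeness-by-TAk}. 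Introduce new symbols only as witnesses for $\exists_{L}$, $\forall_{R}$, ${\comp}_{L}$, ${\resi}_{L}$. Instantiate $\forall_{L}$, $\exists_{R}$, ${\comp}_{R}^{t}$, ${\resi}_{R}^{t}$ over the terms of the growing signature. Let the term model have empty carriers for sorts that never receive a witness.
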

Furthermore, by induction on the proof structure, it can be shown that $\vdash^{*-\{\exists{\lambda}\}}$ is "$\omega_{1}$-compact".
On the other hand, $\omega_{1}$-compactness does not hold true for $\vDash^{-\{\exists{\lambda}\}}$, so in general, completeness does not hold true when the signature is uncountable.

\section{Induction rules for $*$}\label{sec:induction}
%
\begin{lemma}\label{lemma:induction-rules}
When $\vdash$ is the relation $\vdash^{*}\,(\vdash^{*,\INT})$, the following holds true, i.e. when the upper part of the line holds, the lower part also holds.%
\[\tiny
\Ind_{R}^{0}~\proofrule{\vdash_{\Gamma\mid\Sigma\mid\Delta}\UL{t_{0}=t_{1}}}{\vdash_{\Gamma\mid\Sigma\mid\Delta}\OL{t_{0}\stackrel{\acta^*}\Longrightarrow t_{1}}}\ \
\Ind_{R}^{+t}~\proofrule{\vdash_{\Gamma\mid\Sigma\mid\Delta}\UL{t_{0}\stackrel{\acta^*}\Longrightarrow t}\ \ \vdash_{\Gamma'\mid\Sigma\mid\Delta'}\UL{t\stackrel{\acta}\Longrightarrow t_{1}}}{\vdash_{\Gamma\,\Gamma'\mid\Sigma\mid\Delta\,\Delta'}\OL{t_{0}\stackrel{\acta^*}\Longrightarrow t_{1}}}\ \
\Ind_{R}^{-t}~\proofrule{\vdash_{\Gamma\mid\Sigma\mid\Delta}\UL{t_{0}\stackrel{\acta}\Longrightarrow t}\ \ \vdash_{\Gamma'\mid\Sigma\mid\Delta'}\UL{t\stackrel{\acta^{*}}\Longrightarrow t_{1}}}{\vdash_{\Gamma\,\Gamma'\mid\Sigma\mid\Delta\,\Delta'}\OL{t_{0}\stackrel{\acta^*}\Longrightarrow t_{1}}}
\]
$
\Ind_{L}^{+t\,\act}~
\proofrule{
\vdash_{\Gamma_{L}\mid\Sigma\mid\Delta_{L}}\UL{t\stackrel{\act}{\Longrightarrow}t_{0}}\Space
\UL{z\stackrel{\act}{\Longrightarrow}x\llap{\phantom{y}}}\,\UL{x\stackrel{\acta}{\Longrightarrow}y}\vdash_{\iota(\Gamma_{M})\mid\Sigma^{\Dex}[X]\mid\iota(\Delta_{M})}\UL{z\stackrel{\act}{\Longrightarrow}y}\Space
\UL{t\stackrel{\act}{\Longrightarrow}t_{1}}\vdash_{\Gamma_{R}\mid\Sigma\mid\Delta_{R}}}
{\OL{t_{0}\stackrel{\acta^*}\Longrightarrow t_{1}}\vdash_{\Gamma\mid\Sigma\mid\Delta}}\\
$$
\Ind_{L}^{-t\,\act}~
\proofrule{
\vdash_{\Gamma_{L}\mid\Sigma\mid\Delta_{L}}\UL{t_{1}\stackrel{\act}{\Longrightarrow}t}\Space
\UL{x\stackrel{\acta}{\Longrightarrow}y}\,\UL{y\stackrel{\act}{\Longrightarrow}z}\vdash_{\iota(\Gamma_{M})\mid\Sigma^{\Dex}[X]\mid\iota(\Delta_{M})}\UL{x\stackrel{\act}{\Longrightarrow}z\llap{\phantom{y}}}\Space
\UL{t_{0}\stackrel{\act}{\Longrightarrow}t}\vdash_{\Gamma_{R}\mid\Sigma\mid\Delta_{R}}}
{\OL{t_{0}\stackrel{\acta^*}\Longrightarrow t_{1}}\vdash_{\Gamma\mid\Sigma\mid\Delta}}\\
\text{where }\pos{\Gamma,\Delta}=\pos{\Gamma_{L}\cup\Gamma_{M}\cup\Gamma_{R},\Delta_{L}\cup\Delta_{M}\cup\Delta_{R}},\ X=\{x,y,z\},\ \iota:=\Sigma^{\Dex}(X)
$
\end{lemma}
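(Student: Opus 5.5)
We show each rule is admissible in $\vdash^{*}$ (and in $\vdash^{*,\INT}$) by building derivations from the rules of Definition~\ref{def:Syntactic-sequent} together with the structural rules $\Modify$, $\Init$, $\Cut$. No semantic argument is used: Lemma~\ref{lemma:Soundness-Completeness} gives completeness only for countable signatures and without label quantification, so we avoid it. Every derivation we build keeps at most one sentence on the right wherever the premises do, so it also works in the intuitionistic system.

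\emph{The right rules.} $\Ind_{R}^{0}$ is immediate: from $t_{0}=t_{1}$, the rule ${\one}_{R}$ gives $t_{0}\stackrel{\acta^{0}}{\Longrightarrow}t_{1}$, since $\acta^{0}=\one$. Then ${*}_{R}^{0}$ gives $t_{0}\stackrel{\acta^{*}}{\Longrightarrow}t_{1}$.

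For $\Ind_{R}^{+t}$ we first prove an auxiliary sequent:
\[ t_{0}\stackrel{\acta^{*}}{\Longrightarrow}t,\ t\stackrel{\acta}{\Longrightarrow}t_{1}\vdash_{\Sigma} t_{0}\stackrel{\acta^{*}}{\Longrightarrow}t_{1}. \]
\begin{itemize}
\item Apply ${*}_{L}$ to the first hypothesis. This reduces the goal to showing, for every $n\in\omega$, that $t_{0}\stackrel{\acta^{n}}{\Longrightarrow}t$ and $t\stackrel{\acta}{\Longrightarrow}t_{1}$ entail $t_{0}\stackrel{\acta^{*}}{\Longrightarrow}t_{1}$.
\item For each $n$, use ${\comp}_{R}^{t}$ to combine the two hypotheses into $t_{0}\stackrel{\acta^{n}\comp\acta}{\Longrightarrow}t_{1}$.
\item Since $\acta^{n+1}=\acta^{n}\comp\acta$ by definition, this is $t_{0}\stackrel{\acta^{n+1}}{\Longrightarrow}t_{1}$, and ${*}_{R}^{n+1}$ gives $t_{0}\stackrel{\acta^{*}}{\Longrightarrow}t_{1}$.
\end{itemize}
Two $\Cut$s with the given premises then yield $\Ind_{R}^{+t}$, with the contexts united as the rule requires.

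$\Ind_{R}^{-t}$ is handled the same way. The extra ingredient is the sequent $t_{0}\stackrel{\acta}{\Longrightarrow}t,\ t\stackrel{\acta^{n}}{\Longrightarrow}t_{1}\vdash t_{0}\stackrel{\acta^{n+1}}{\Longrightarrow}t_{1}$, which we prove by induction on $n$ at the meta level:
\begin{itemize}
\item In the induction step, decompose $\acta^{n}=\acta^{n-1}\comp\acta$ by ${\comp}_{L}$, introducing a fresh variable $x$ via the extension $\Sigma^{\Dex}(x)$.
\item Apply the induction hypothesis to obtain $t_{0}\stackrel{\acta^{n}}{\Longrightarrow}x$.
\item Recompose with ${\comp}_{R}^{x}$.
\item Use $\Modify$ along $\Sigma^{\Dex}(x)$ to move the context sequents into the extended signature.
\end{itemize}

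\emph{The left rules.} Here the key idea is ordinary induction on $n$ over the $\omega$ premises of ${*}_{L}$. For $\Ind_{L}^{+t\,\act}$ we proceed as follows.
\begin{itemize}
\item Apply ${*}_{L}$. It then suffices, for each $n$, to derive $t_{0}\stackrel{\acta^{n}}{\Longrightarrow}t_{1}\vdash_{\Gamma\mid\Sigma\mid\Delta}$.
\item Prove by meta-induction on $n$ the claim $\Gamma_{L}\,\Gamma_{M},\ t_{0}\stackrel{\acta^{n}}{\Longrightarrow}u\vdash_{\Sigma'} \Delta_{L}\,\Delta_{M},\ t\stackrel{\act}{\Longrightarrow}u$, for arbitrary terms $u$ over a suitable extension $\Sigma'$ of $\Sigma$ by fresh variables.
\item Base case $n=0$: apply ${\one}_{L}$ to turn $t_{0}\stackrel{\one}{\Longrightarrow}u$ into $t_{0}=u$. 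Then apply $\Atom$/$\Py$ to rewrite the left premise $t\stackrel{\act}{\Longrightarrow}t_{0}$ into $t\stackrel{\act}{\Longrightarrow}u$, closing with a $\Cut$.
\item Step $n+1$: apply ${\comp}_{L}$ to $t_{0}\stackrel{\acta^{n}\comp\acta}{\Longrightarrow}u$, obtaining a fresh $w$ with $t_{0}\stackrel{\acta^{n}}{\Longrightarrow}w$ and $w\stackrel{\acta}{\Longrightarrow}u$. The induction hypothesis gives $t\stackrel{\act}{\Longrightarrow}w$.
\item Instantiate the middle premise by $\Modify(\theta)$, where $\theta$ is the generalized substitution $z\mapsto t$, $x\mapsto w$, $y\mapsto u$. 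This is exactly why substitutions were introduced as morphisms in $\Sig^{\gene}$; note that $\iota(\Gamma_{M})$ is sent back to $\Gamma_{M}$ because $\theta\circ\iota$ is the inclusion. It yields $t\stackrel{\act}{\Longrightarrow}w,\ w\stackrel{\acta}{\Longrightarrow}u\vdash t\stackrel{\act}{\Longrightarrow}u$.
\item $\Cut$ these together to finish the step.
\end{itemize}
Finally, take $u=t_{1}$ and cut with the right premise $t\stackrel{\act}{\Longrightarrow}t_{1}\vdash_{\Gamma_{R}\mid\Sigma\mid\Delta_{R}}$. Unioning the contexts gives exactly $\pos{\Gamma,\Delta}$.

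The rule $\Ind_{L}^{-t\,\act}$ is symmetric. We induct on the decomposition $\acta^{n+1}=\acta\comp\acta^{n}$, which requires the reassociation lemma from the $\Ind_{R}^{-t}$ paragraph, propagating $u\stackrel{\act}{\Longrightarrow}t$ backwards from $t_{1}$ to $t_{0}$.

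\emph{Main obstacle.} The delicate step is the bookkeeping in the left rules.
\begin{itemize}
\item The induction must be stated for a general target $u$ living in an extended signature. We must check that the fresh variables introduced by successive ${\comp}_{L}$ steps do not clash, and that $\Modify$ transports $\Gamma_{M},\Delta_{M}$ correctly.
\item $\Delta_{M}$ may be nonempty. Because it is carried as side context through each cut, and the unions over $n$ stay bounded by $\Delta_{L}\cup\Delta_{M}\cup\Delta_{R}$, this causes no problem classically.
\item In the intuitionistic case $\Delta$ is empty or a singleton by assumption, so the cuts remain legal.
\end{itemize}
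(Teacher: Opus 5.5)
Your overall architecture matches the paper's. You discharge the $\omega$ premises of ${*}_{L}$ by a meta-level induction on $n$, and you instantiate the middle premise with a $\Sig^{\gene}$ substitution via $\Modify$. The paper instead states the invariant with both endpoints free, $t\stackrel{\act}{\Longrightarrow}x,\ x\stackrel{\acta^{n}}{\Longrightarrow}y\vdash t\stackrel{\act}{\Longrightarrow}y$, and cuts in the outer premises at the end; folding the left premise into the base case, as you do, is harmless for $\Ind_{L}^{+}$. However, two steps fail as written.

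\textbf{Base case.} You move $t\stackrel{\act}{\Longrightarrow}t_{0}$ along $t_{0}=u$ using $\Atom$/$\Py$. But $\Atom$ only covers $\Sen^{b}$, i.e.\ equations and transitions along a label in $L_{\Sigma}$. Here $\act$ is an arbitrary action; the paper later instantiates it with $\acta_{0}^{*}$. You need a separate substitutivity lemma,
$s_{0}=s'_{0},\ s_{1}=s'_{1},\ s_{0}\stackrel{\act}{\Longrightarrow}s_{1}\vdash s'_{0}\stackrel{\act}{\Longrightarrow}s'_{1}$ for every action $\act$. Prove it directly in $\vdash^{*}$ by induction on $\act$:
\begin{itemize}
\item labels by $\Atom$, and $\empact$ by ${\empact}_{L}$;
\item $\one$ by ${\one}_{L}$, ${\one}_{R}$ and $\Atom$;
\item $\cupact$, $\comp$, $\resi$ by their left and right rules;
\item $\impact$ by using the hypothesis for $\acta_{0}$ with the equations reversed via $\Sy$;
\item $\acta_{0}^{*}$ by ${*}_{L}$ and ${*}_{R}^{n}$, with the $\acta_{0}^{n}$ instances obtained by an inner induction on $n$ from the $\one$ and $\comp$ cases.
\end{itemize}
The same lemma is needed at $n=0$ of your auxiliary sequent for $\Ind_{R}^{-t}$. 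There $\acta^{1}=\one\comp\acta$, so you must transport $t_{0}\stackrel{\acta}{\Longrightarrow}t$ along $t=t_{1}$. The paper's sketch also uses this fact at $n=0$ without comment, but your justification via $\Atom$ does not work.

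\textbf{The rule $\Ind_{L}^{-t\,\act}$.} You keep the target $t_{1}$ fixed. So in the step you must decompose a \emph{hypothesis} $u\stackrel{\acta^{n+1}}{\Longrightarrow}t_{1}$ as $\acta\comp\acta^{n}$, i.e.\ you need $u\stackrel{\acta^{n+1}}{\Longrightarrow}t_{1}\vdash u\stackrel{\acta\comp\acta^{n}}{\Longrightarrow}t_{1}$. The lemma from your $\Ind_{R}^{-t}$ paragraph gives only the converse (from $\acta\comp\acta^{n}$ to $\acta^{n+1}$), which is useless on the left.

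You can prove the needed direction by its own induction on $n$. It is simpler to do as the paper does and let both endpoints vary: show $w\stackrel{\act}{\Longrightarrow}t,\ v\stackrel{\acta^{n}}{\Longrightarrow}w,\ \Gamma_{M}\vdash\Delta_{M},\ v\stackrel{\act}{\Longrightarrow}t$ for all terms $v,w$ over variable extensions of $\Sigma$. Then the definitional split $\acta^{n+1}=\acta^{n}\comp\acta$ suffices:
\begin{itemize}
\item ${\comp}_{L}$ yields $v\stackrel{\acta^{n}}{\Longrightarrow}m$ and $m\stackrel{\acta}{\Longrightarrow}w$;
\item the middle premise under $x\mapsto m,\ y\mapsto w,\ z\mapsto t$ yields $m\stackrel{\act}{\Longrightarrow}t$;
\item the induction hypothesis at $(v,m)$ finishes the step.
\end{itemize}
Taking $v=t_{0}$ and $w=t_{1}$, applying ${*}_{L}$, and cutting with the two outer premises gives the rule, with no reassociation needed.
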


\begin{definition}\label{def:sequent-with-ind}
We write $\vdash^{\Ind}$ as the relation obtained
by removing $(*_{L})$ and $(*_{R})$ from $\vdash^{*}$
and adding $(\Ind_{R}^{0})$, $(\Ind_{R}^{+})$, $(\Ind_{R}^{-})$, $(\Ind_{L}^{+})$, $(\Ind_{L}^{-})$ and the following $(\Init_{\acta^{*}})$, $(\Cut^{\acta^{*}})$.%
\footnote{The possibility of elimination of these rules remains unresolved.}
\begin{align*}
&\Init_{\acta^{*}}~\proofrule{}{\OL{t_{0}\stackrel{\acta^{*}}{\Longrightarrow}t_{1}}\vdash_{\Gamma\mid\Sigma\mid\Delta}\OL{t_{0}\stackrel{\acta^{*}}{\Longrightarrow}t_{1}}}&
&\Cut^{\acta^{*}}~\proofrule
{\,\vdash_{\Gamma\mid\Sigma\mid\Delta}\UL{t_{0}\stackrel{\acta^{*}}{\Longrightarrow}t_{1}}\,\Space\,\UL{t_{0}\stackrel{\acta^{*}}{\Longrightarrow}t_{1}}\vdash_{\Gamma'\mid\Sigma\mid\Delta'}\,}
{\,\vdash_{\Gamma\,\Gamma'\mid\Sigma\mid\Delta\,\Delta'}\,}&
\end{align*}
Intuitionistic $\vdash^{\Ind,\INT}$ is defined similarly from $\vdash^{*,\INT}$.
\end{definition}

Induction is a method that utilizes the hypothesis for the previous points.
Therefore, raising the goal to be proven strengthens the inductive hypothesis, which can actually make the proof easier.
This modified goal can be understood as the meaning of $\act$ appearing in the rule.
Below are some claims that can be proven by induction.
(See the proofs for examples of modifying goals.)
\begin{example}\label{example:natural-Ind}
Let
$\Sigma:=\pos{\{\Nat\},\{0:\to\Nat,s:\Nat\to\Nat\},\{suc\}}$,
$\gamma(z)\in\Sen(\Sigma[z])$, and let
$\Gamma:=\{def_{suc},\Exists{\pi}def_{\pi,\gamma}\}\,(def_{suc}:=\Forall{x,y}y\stackrel{suc}{\Longrightarrow}x\leftrightarrow y=s(x),\,def_{\pi,\gamma}:=\Forall{x,y}x\stackrel{\pi}{\Longrightarrow}y\leftrightarrow x\in\mathbb{N}\wedge\gamma(x)\,(t\in\mathbb{N}:=t\stackrel{suc^{*}}{\Longrightarrow}0))$
The following holds true. (In other words, $\vdash^{\Ind}$ includes the usual induction of natural numbers.)
\[\gamma(0)\,\Forall{x\in\mathbb{N}}\gamma(x)\to\gamma(s(x))\vdash^{\Ind,\INT}_{\Gamma\mid\Sigma\mid}\Forall{z\in\mathbb{N}}\gamma(z)\]
\end{example}

Let
$\acta^{\bot}:=(\one\impact(\acta\comp\acta^{-1}))^{\Co}$ and
$\acta^{\top}:=\one\cap(\acta\comp\acta^{-1})$.
(where $\acta_{0}\cap\acta_{1}:=(\acta^{\Co}\cup\acta^{\Co})^{\Co}$ and $\acta_{0}^{-1}:=\acta_{0}\resi\one$.)
$\acta^{\bot}$ returns $a$ itself if there is no $b$ such that $a\stackrel{\acta}{\Longrightarrow}b$ for the input $a$ on the left; otherwise, it returns nothing.
Conversely, $\acta^{\top}$ returns $a$ itself if there is a $b$ such that $a\stackrel{\acta}{\Longrightarrow}b$.
For example, $\acta^{*}\comp\acta^{\bot}$ means repeating $\acta$ until it can no longer be applied.
Other expressions such as if statements, while statements, and some of Maude's strategy language can also be expressed.
Below, we will introduce another expression about paths using the example of two people eating a meal.
\begin{example}\label{example:meal}
Consider the following signature and theory.
\begin{itemize}\small
\item $\Sigma:=\{\{\texttt{member},\texttt{state}\},\{a,b,0:\to\texttt{member},\pos{\cdot,\cdot}:\texttt{member}\,\texttt{member}\to\texttt{state}\},$\\$\phantom{0}\,\,\,\,\,\,\,\,\,\,\,\,\,\,\{get_{a},get_{b},eat_{a},eat_{b}:\to\texttt{state}\,\texttt{state}\}\cup\{[t]:\texttt{state}\,\texttt{state}\mid t\in\T(\Sigma)_{\texttt{state}}\}\}$
\item $\Gamma_{0}:=\{\pos{m,0}{\stackrel{get_{i}}{\Longrightarrow}}\pos{m,i}, \pos{0,m}{\stackrel{get_{i}}{\Longrightarrow}}\pos{i,m}\mid m\in\{a,b,0\}, i\in\{a,b\}\}$ $\cup$\\
$\phantom{0}\,\,\,\,\,\,\,\,\,\,\,\,\,\{\pos{i,i}\stackrel{get_{i}}{\Longrightarrow}\pos{0,0}\mid i\in\{a,b\}\}$
$\cup$ $\{t\stackrel{[t]}{\Longrightarrow}t\mid t\in\T(\Sigma)_{\texttt{state}}\}$
\end{itemize}
$a$ and $b$ represent the people eating, and $\pos{\cdot,\cdot}$ indicates who is holding which chopstick.
If neither person is holding the chopsticks, it is represented as $0$.
$get_{i}$ is the operation where $i$ acquires the chopstick, and $eat_{i}$ is the operation where $i$ eats and puts down the chopsticks.
{\par}
Let $T_{\Sigma,\Gamma_{0}}$ be the initial model of $\Gamma_{0}$. (That is, it does not satisfy conditions outside of $\Gamma_{0}$.)
Let $\Gamma$ be the set of all sentences that $T_{\Sigma,\Gamma_{0}}$ satisfies and do not contain $*$ or a second-order quantifier.
This is a bit rough, but since the purpose here is to discuss $*$, for simplicity we proceed assuming that we already know the information about $\Gamma$.
The following hold true.
\[
\vdash^{\Ind}_{\Gamma\mid\Sigma\mid}\pos{0,0}\stackrel{(\acta\impact\act)^{\Co+\Co}}{===\Rightarrow}\pos{0,0}
\Space
\vdash^{\Ind}_{\Gamma\mid\Sigma\mid}\pos{0,0}\stackrel{(\acta'^{*}\comp{\acta'^{*}}^{\Co})^{\Co}}{====\Rightarrow}\pos{0,0}
\Space
\vdash^{\Ind}_{\Gamma\mid\Sigma\mid}\pos{0,0}\stackrel{\acta^{*}\comp eat_{i}\comp\acta^{*}}{====\Rightarrow}\pos{0,0}\,(i\in\{a,b\})
\]
Here,
$\acta:=get_{a}\cupact get_{b}\cupact eat_{a}\cupact eat_{b}$,
$\acta':=(\acta^{\top}\comp\acta)\cupact(\acta^{\bot}\comp(get_{b}\resi get_{a}))$,
$\act:=(eat_{a}\cupact eat_{b})$,
$\acta_{0}^{+}:=\acta_{0}\comp\acta_{0}^{*}$, which means that the action is applied one or more times.
{\par}
$(\acta\impact\act)^{\Co}$ semantically means $\acta\setminus\act$, that is, applying $\acta$ without going through $\act$.
Since $\Co$ is a negation, the left side means that in order to apply $\acta$ to $\pos{0,0}$ at least once and return to $\pos{0,0}$, it is necessary to pass through $\act$ along the way (one of them must "eat").
The right side means that there is a path to obtain food for both $a$ and $b$.
Unlike temporal logic, $\TA$ cannot discuss paths of infinite length, but it can capture events that occur along paths of finite length.
The middle part means that a deadlock will not occur if, for example, $b$ gives a chopstick to $a$ when they get stuck.
{\par}
These claims themselves are trivial.
However, we might use left introduction of $*$, because $*$ is in a "negative position".
In previous systems, proofs required an infinite number of assumptions.
\end{example}


\begin{lemma}\label{lemma:ind-is-sound-on-ta}
$\vdash^{\Ind}$ is sound (not complete), compact.
\end{lemma}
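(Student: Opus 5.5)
The statement has three parts: soundness, compactness, and failure of completeness. I would treat them in that order.

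\emph{Soundness.} I would argue by induction on the height of a $\vdash^{\Ind}$ proof tree that every derivable sequent satisfies $\Gamma\vDash_{\Sigma}\Delta$.

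All rules except $\Ind_{R}^{0},\Ind_{R}^{\pm},\Ind_{L}^{\pm},\Init_{\acta^{*}},\Cut^{\acta^{*}}$ are rules of $\vdash^{*}$. Their soundness is therefore the soundness part of Lemma~\ref{lemma:Soundness-Completeness}, which is a local, rule-by-rule check that does not depend on which other rules are present. $\Init_{\acta^{*}}$ and $\Cut^{\acta^{*}}$ are instances of the structural $\Init$ and $\Cut$, which are trivially sound.

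For the right rules I would not recheck semantics. Lemma~\ref{lemma:induction-rules} shows they are admissible in $\vdash^{*}$, and $\vdash^{*}$ is sound. Directly, they just say that $(\acta^{\A})^{*}$ contains $\id$ and is closed under composition with $\acta^{\A}$ on either side.

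The real work is $\Ind_{L}^{+t\,\act}$. Fix a model $\A$ satisfying $\Gamma$ and falsifying $\Delta$, and suppose $\A\models t_{0}\stackrel{\acta^{*}}{\Longrightarrow}t_{1}$. Then there are $d_{0}=t_{0}^{\A},d_{1},\dots,d_{n}=t_{1}^{\A}$ with consecutive $d_{i}$ related by $\acta^{\A}$.
\begin{enumerate}
\item By the induction hypothesis on the left premise, $\A\models t\stackrel{\act}{\Longrightarrow}t_{0}$, since otherwise some member of $\Delta_{L}$ holds.
\item Apply the middle premise to every expansion of $\A$ along $\iota$ with $z\mapsto t^{\A}$, $x\mapsto d_{i}$, $y\mapsto d_{i+1}$. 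Using the satisfaction condition for $\iota$, an induction on $i$ gives $t^{\A}\stackrel{\act^{\A}}{\Longrightarrow}d_{i}$ for every $i$.
\item In particular $\A\models t\stackrel{\act}{\Longrightarrow}t_{1}$. The right premise then forces some member of $\Delta_{R}$ to hold, a contradiction.
\end{enumerate}
$\Ind_{L}^{-}$ is symmetric: run the induction backwards along the path, starting from $t_{1}$.

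For the intuitionistic variant the same argument applies, since every intuitionistic proof is a classical one.

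\emph{Compactness.} I would argue by induction on the proof tree. After removing $*_{L}$, every rule of $\vdash^{\Ind}$ has finitely many premises: $\vee_{L}$ and $\wedge_{R}$ range over finite $N$, and $\Atom$ is compact by Lemma~\ref{lemma:basic-soundness-completeness-compactness}. Also, every sentence and every substitution $\theta$ mentions only finitely many symbols.

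So from finite witnesses $\langle\Sigma_{f}^{k},\Gamma_{f}^{k},\Delta_{f}^{k}\rangle$ for the premises, I would take the union of the $\Sigma_{f}^{k}$ together with the symbols of the principal formula, the substituted terms and the labels of $\act$. I would collect the corresponding finite subsets of sentences, and transport each premise derivation into this common finite signature by $\Modify$ along the inclusions.

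Variable-binding rules ($\comp_{L}$, $\exists_{L}$, $\forall_{R}$, $\Ind_{L}$) need care. The witness for $\Sigma^{\Dex}[X]$ must be pulled back to a finite part of $\Sigma$ together with $X$. This works because $\iota^{-1}$ commutes with finite sub-signatures.

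\emph{Non-completeness.} This follows from compactness once a semantic sequent is exhibited that has no finite sub-consequence. I would take the countable $\Gamma=\{\neg(c\stackrel{\lambda^{n}}{\Longrightarrow}d)\mid n\in\omega\}$ and $\Delta=\{\neg(c\stackrel{\lambda^{*}}{\Longrightarrow}d)\}$. Then $\Gamma\vDash\Delta$ holds, yet every finite $\Gamma_{f}\subseteq\Gamma$ is satisfiable together with $c\stackrel{\lambda^{*}}{\Longrightarrow}d$, using a long enough finite $\lambda$-chain from $c$ to $d$. So $\Gamma_{f}\nvDash\Delta$, and by soundness $\Gamma_{f}\nvdash^{\Ind}\Delta$. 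Compactness then rules out $\Gamma\vdash^{\Ind}\Delta$.

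The main obstacle is the bookkeeping in the compactness step for rules that introduce variables and generalized substitutions, together with the careful handling of the expansion in the soundness of $\Ind_{L}$. The remaining steps are routine.
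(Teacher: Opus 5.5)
Your proposal is correct and follows the paper's proof. Soundness is obtained by reducing to $\vdash^{*}$; the one difference is that you check $\Ind_{L}^{\pm}$ directly along a finite $\acta$-path instead of citing its admissibility in $\vdash^{*}$ via Lemma~\ref{lemma:induction-rules}, and the two are equivalent. Compactness comes from induction on finitely-branching proof trees, and non-completeness from the non-compactness of $\vDash$, for which you helpfully give an explicit witness where the paper only asserts it.
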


\section{Kleene algebraic semantics}\label{sec:complete}

\begin{definition}
For $\Sigma=\pos{S,F,L}\in|\Sig|$ and $\A\in|\Mod(\Sigma)|$,
let $\A^{All}_{ss}$ be the set of all relations on $\A_{s}$.
$\pos{\A^{All}_{ss};\empact,\cupact,\impact,\one,\comp,\resi}$ is a relation algebra.
The meaning of the symbols is the same as in the previous interpretation.
\end{definition}

\begin{definition}[Kleene algebraic $\TA$]\label{def:Kleene-algebraic-TA}
The Kleene algebraic $\TA$ (abbreviated as $\TA_{k}$) is defined by changing the part $\Mod$ and $\models$ as follows:
\begin{itemize}
\item[]$[\,\Mod^{\TA_{k}}\,]$
$\A=\pos{\A,\{*_{\BS{s}}^{\A}:\A_{\BS{s}}\to\A_{\BS{s}}\}_{\BS{s}\in S^{=}}}$ is a $\Sigma$-model if
\begin{itemize}
\item $\A\in|\Mod^{\TA}(\Sigma)|$
\item $\A_{\BS{s}}\subseteq\A^{All}_{\BS{s}}$ is a subalgebra (In other words, it is a set that is closed under relation algebraic operations.) and includes all interpretations of $\lambda\in L$.
\item $\{*^{\A}_{\BS{s}}:\A_{\BS{s}}\to\A_{\BS{s}}\}_{\BS{s}\in S^{=}}$%
is a $S^{=}$-sorted function such that $\pos{\A_{\BS{s}};\empact,\cupact,\one,\comp,*^{\A}_{\BS{s}}}\,(\BS{s}\in S^{=})$ is a Kleene algebra.
\end{itemize}
We consider only $\id_{\A}$ as a homomorphism.
For $\chi:\Sigma\to\Sigma'$, $\pos{\A,\{*_{\BS{s}}^{\A}:\A_{\BS{s}}\to\A_{\BS{s}}\}_{\BS{s}\in S'^{=}}}\in|\Mod^{\TA_{k}}(\Sigma')|$, the reduct is
\[
\Mod^{\TA_{k}}(\chi)(\pos{\A,\{*_{\BS{s}}^{\A}:\A_{\BS{s}}\to\A_{\BS{s}}\}_{\BS{s}\in S'^{=}}}):=
\pos{
\Mod^{\TA}(\chi)(\A),
\{*_{\chi(\BS{s})}^{\A}:\A_{\chi(\BS{s})}\to\A_{\chi(\BS{s})}\}_{\BS{s}\in S^{=}}
}.
\]
For $\Sig^{\gene}$, the $\Mod^{\TA}(\chi)(\A)$ part is considered to be the definition for $\gene$.
\item[] $[\,\models^{\TA_{k}}\,]$
We will change the $*$ part of the action interpretation from a transitive closure to interpretation of $*$ equipped in the model.
The rest remains the same.
\end{itemize}
\end{definition}

\begin{lemma}\label{lemma:IndNK}
We use the following abbreviations for actions.
\begin{align*}
&(\acta_{1}\leqq\acta_{2}):=\Forall{x,y}x\stackrel{\acta_{1}}{\Longrightarrow}y\to x\stackrel{\acta_{2}}{\Longrightarrow}y&
&(\acta_{1}\equiv\acta_{2}):=(\acta_{1}\leqq\acta_{2})\wedge(\acta_{2}\leqq\acta_{2})&
\end{align*}
Then the following holds true for $\vdash^{\Ind,\INT}$.
\footnote{
When considering fragments where some of $\empact$, $\cupact$, $\impact$, and $\resi$ are missing,
the rules concerning these does not need to be assumed.
}
\begin{itemize}
\item action sentences
\(\Space t_{0}=t'_{0}\wedge t_{1}=t'_{1}\wedge\acta\equiv\acta'\to(t_{0}\stackrel{\acta}{\Longrightarrow}t_{1}\leftrightarrow t'_{0}\stackrel{\acta'}{\Longrightarrow}t'_{1})\)
\item order and monotonicity
\begin{align*}\small
&\acta\leqq\acta\Space
\acta_{0}\leqq\acta_{1}\wedge\acta_{1}\leqq\acta_{2}\to\acta_{0}\leqq\acta_{1}&\\
&(\acta_{0}\leqq\acta'_{0})\wedge(\acta_{1}\leqq\acta'_{1})\to(\acta_{0}\rho\acta_{1})\leqq(\acta'_{0}\rho\acta'_{1})& &(\rho\in\{\cupact,\comp,\resi\})&\\
&(\acta'_{0}\leqq\acta_{0})\wedge(\acta_{1}\leqq\acta'_{1})\to(\acta_{0}\rho\acta_{1})\leqq(\acta'_{0}\rho\acta'_{1})& &(\rho\in\{\impact\})&\\
&(\acta_{0}\leqq\acta'_{0})\to(\acta_{0}^{\rho})\leqq({\acta'}_{0}^{\rho})& &(\rho\in\{*\})&
\end{align*}
Therefore $(\acta_{0}\equiv^{\Ind(,\INT)}_{\Gamma\mid\Sigma\mid\Delta}\acta_{1}:\Leftrightarrow\,\vdash^{\Ind(,\INT)}_{\Gamma\mid\Sigma\mid\Delta}\acta_{0}\equiv\acta_{1})$ is a congruence on $\pos{L_{\T(\Sigma)},\empact,\cupact,\impact,\one,\comp,\resi,*}$.
\item Kleene algebra
\begin{center}\hspace*{-0.19cm}
$\acta_{1}\comp\acta_{2}\leqq\acta_{1}\to\acta_{1}\comp\acta_{2}^{*}\leqq\acta_{1}\Space
\acta^{*}\comp\acta\leqq\acta^{*}\Space
\one\leqq\acta^{*}\Space
\acta\comp\acta^{*}\leqq\acta^{*}\Space
\acta_{1}\comp\acta_{2}\leqq\acta_{2}\to\acta_{1}^{*}\comp\acta_{2}\leqq\acta_{2}$
\end{center}
Conversely,
by removing $(*_{L})$ and $(*_{R})$ from $\vdash^{*}\,(\vdash^{*,\INT})$ and instead,
adding $(\Init_{\acta^{*}})$, $(\Cut^{\acta^{*}})$, and the rule $(\KEL_{\phi})$ (where $\phi$ is one of the five forms above) which eliminate sentences of the five forms from the left,
we obtain a system equivalent to $\vdash^{\Ind}\,(\vdash^{\Ind,\INT})$.
\end{itemize}
\end{lemma}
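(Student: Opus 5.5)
Every listed sentence will be derived in $\vdash^{\Ind,\INT}$ by unfolding the abbreviations with $\forall_{R}$, $\to_{R}$, $\wedge_{R}$ and then working on atomic action sentences over $\Sigma^{\Dex}[x,y]$. The plan has three stages: the non-$*$ laws, the Kleene laws, and the equivalence of the two systems.

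\textbf{Non-$*$ laws.} The first law, about action sentences, follows from $\Py$ (packaged in $\Atom$) once the label $\acta$ is replaced by $\acta'$. We instantiate $\acta\leqq\acta'$ by $\forall_{L}^{\theta}$ with $\theta$ sending $x,y$ to $t_{0},t_{1}$, and apply $\to_{L}$. Reflexivity uses $\Init$, and transitivity uses two $\forall_{L}$ and $\to_{L}$ steps. Monotonicity for $\cupact,\comp,\resi$ goes by the left rule followed by the right rule of the same connective. For $\comp$ this means applying $\comp_{L}$ to get an eigenvariable $w$, pushing each factor through its hypothesis, and closing with $\comp_{R}^{w}$; $\resi$ is symmetric. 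For $\impact$, apply $\impact_{R}$ and then $\impact_{L}$, which gives the contravariance in the first argument. The congruence claim is then immediate from these laws.

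\textbf{Monotonicity of $*$ and the Kleene laws.} For $\acta_{0}^{*}\leqq\acta_{0}'^{*}$, assume $x\stackrel{\acta_{0}^{*}}{\Longrightarrow}y$ and apply $\Ind_{L}^{+t\,\act}$ with $t:=x$ and $\act:=\acta_{0}'^{*}$. The left premise $x\stackrel{\acta_{0}'^{*}}{\Longrightarrow}x$ comes from $\Ind_{R}^{0}$. The middle premise, from $z\stackrel{\acta_{0}'^{*}}{\Longrightarrow}x'$ and $x'\stackrel{\acta_{0}}{\Longrightarrow}y'$ infer $z\stackrel{\acta_{0}'^{*}}{\Longrightarrow}y'$, follows from $\Ind_{R}^{+}$ together with the hypothesis $\acta_{0}\leqq\acta_{0}'$. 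The right premise is $\Init$.

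The laws $\one\leqq\acta^{*}$, $\acta^{*}\comp\acta\leqq\acta^{*}$ and $\acta\comp\acta^{*}\leqq\acta^{*}$ are direct: use $\one_{L}$, $\Ind_{R}^{0}$, $\comp_{L}$ and $\Ind_{R}^{\pm}$.

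For the induction law $\acta_{1}\comp\acta_{2}\leqq\acta_{1}\to\acta_{1}\comp\acta_{2}^{*}\leqq\acta_{1}$, assume $x\stackrel{\acta_{1}}{\Longrightarrow}w$ and $w\stackrel{\acta_{2}^{*}}{\Longrightarrow}y$ (after $\comp_{L}$). Apply $\Ind_{L}^{+t\,\act}$ to the starred premise with $t:=x$ and $\act:=\acta_{1}$. The left premise $x\stackrel{\acta_{1}}{\Longrightarrow}w$ is available by $\Init$. The middle premise $z\stackrel{\acta_{1}}{\Longrightarrow}x'$, $x'\stackrel{\acta_{2}}{\Longrightarrow}y'\vdash z\stackrel{\acta_{1}}{\Longrightarrow}y'$ is obtained from $\comp_{R}^{x'}$ and the hypothesis. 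The right premise is $\Init$. The law $\acta_{1}\comp\acta_{2}\leqq\acta_{2}\to\acta_{1}^{*}\comp\acta_{2}\leqq\acta_{2}$ is handled dually with $\Ind_{L}^{-}$.

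\textbf{Equivalence of the two systems.} This is where the main work lies, in two directions.

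First, $\vdash^{\Ind}$ contains the $\KEL$ system. A $\KEL_{\phi}$ rule eliminates $\phi$ from the left. We simulate it by $\Cut^{\phi}$ against the derivation of $\vdash\phi$ obtained above; this is a general $\Cut$ and not the special $\Cut^{\acta^{*}}$. I expect $\Cut$ to be admissible in $\vdash^{\Ind}$ as a structural rule, since sequents are required to satisfy the structural rules.

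Second, the $\KEL$ system contains $\vdash^{\Ind}$, so each $\Ind$ rule must be derived from the five axioms.
\begin{itemize}
\item $\Ind_{R}^{0}$ is $\one\leqq\acta^{*}$ combined with $\one_{R}$.
\item $\Ind_{R}^{\pm}$ come from $\comp_{R}$ and the unfolding axioms.
\item For $\Ind_{L}^{+t\,\act}$, the middle premise yields, via $\forall_{R}$, $\to_{R}$ and $\comp_{L}$, the sentence $\act\comp\acta\leqq\act$ in context $\Gamma_{M}$. The induction axiom then gives $\act\comp\acta^{*}\leqq\act$. From $t\stackrel{\act}{\Longrightarrow}t_{0}$ and $t_{0}\stackrel{\acta^{*}}{\Longrightarrow}t_{1}$ we get $t\stackrel{\act\comp\acta^{*}}{\Longrightarrow}t_{1}$, hence $t\stackrel{\act}{\Longrightarrow}t_{1}$, and we cut with the right premise.
\item $\Ind_{L}^{-t\,\act}$ is dual.
\end{itemize}

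\textbf{Main obstacle.} The delicate point is the context bookkeeping in the second direction. The middle premise lives over $\Sigma^{\Dex}[X]$ with $\iota(\Gamma_{M})$ and must be turned into a sentence usable over $\Sigma$. This uses $\forall_{R}$ with the eigenvariables and requires that $x,y,z$ be fresh. In the intuitionistic case there is an additional constraint of keeping a single formula on the right throughout, which forces cuts instead of multi-conclusion steps.
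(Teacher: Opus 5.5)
Your treatment of monotonicity, of $\acta_{0}^{*}\leqq{\acta'}_{0}^{*}$ via $\Ind_{L}^{+}$ with $\act:={\acta'}_{0}^{*}$, of the Kleene axioms, and of both directions of the equivalence matches the paper's proof. The gap is in the first item (action sentences).

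You replace $\acta$ by $\acta'$ using $\forall_{L}$ and $\to_{L}$, and then move the endpoints with $\Py$ ``packaged in $\Atom$''. But $\Atom$ only imports sequents of $\vdash^{b}$. These are built from $\Sen^{b}(\Sigma)$, i.e.\ equations and action sentences whose action is a \emph{label} $\lambda\in L_{\Sigma}$. The lemma concerns arbitrary $\acta,\acta'\in L_{\T(\Sigma)}$, and it must, since the congruence is on $L_{\T(\Sigma)}$. For, say, $\acta'=\lambda^{*}$ or $\acta'=\acta_{0}\comp\acta_{1}$, no rule of $\vdash^{\Ind}$ lets you rewrite the endpoints of $t_{0}\stackrel{\acta'}{\Longrightarrow}t_{1}$ along equations. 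Even with $\acta=\acta'$, the sequent $t_{0}=t'_{0}\,\,t_{1}=t'_{1}\,\,t_{0}\stackrel{\acta}{\Longrightarrow}t_{1}\vdash t'_{0}\stackrel{\acta}{\Longrightarrow}t'_{1}$ requires proof. You also cannot shortcut this by proving the instance for a label variable $\pi$ with $\Atom$ and then applying $\Modify$ along the generalized substitution $\pi\mapsto\acta'$. The translated $\Atom$ leaf is no longer an $\Atom$ instance, and justifying it is exactly the same problem.

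The missing ingredient is an induction on the structure of $\acta$ establishing that sequent. After that, your instantiation of $\acta\leqq\acta'$ together with a cut finishes the item. Labels are the base case, where your $\Atom$ argument is correct. For the constructors, combine the induction hypothesis with the paired rules:
\begin{itemize}
\item $\empact$: use ${\empact}_{L}$.
\item $\cupact$: apply ${\cupact}_{R}^{i}$ to each instance of the hypothesis, then ${\cupact}_{L}$.
\item $\impact$: apply ${\impact}_{L}$, using the hypothesis for $\acta_{0}$ in the reverse direction (from $t'_{0},t'_{1}$ back to $t_{0},t_{1}$), then ${\impact}_{R}$.
\item $\one$: use ${\one}_{R}$ and ${\one}_{L}$ around $\Atom$ on equations.
\item $\comp$ (resp.\ $\resi$): apply the hypothesis with the fresh intermediate variable $x$ as the fixed endpoint, combine by ${\comp}_{R}^{x}$ (resp.\ ${\resi}_{R}^{x}$), and discharge $x$ by ${\comp}_{L}$ (resp.\ ${\resi}_{L}$).
\end{itemize}
The case $\acta=\acta_{0}^{*}$ genuinely needs the induction rules.
\begin{itemize}
\item To move the source, obtain $t'_{0}\stackrel{\acta_{0}^{*}}{\Longrightarrow}t_{0}$ from $t_{0}=t'_{0}$ by $\Ind_{R}^{0}$. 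Then apply $\Ind_{L}^{+}$ to $t_{0}\stackrel{\acta_{0}^{*}}{\Longrightarrow}t_{1}$ with $t:=t'_{0}$ and $\act:=\acta_{0}^{*}$; the middle premise comes from $\Ind_{R}^{+}$ and the right one from $\Init_{\acta_{0}^{*}}$.
\item To move the target, obtain $t_{1}\stackrel{\acta_{0}^{*}}{\Longrightarrow}t'_{1}$ by $\Ind_{R}^{0}$. Then apply $\Ind_{L}^{-}$ with $t:=t'_{1}$ and $\act:=\acta_{0}^{*}$; the middle premise comes from $\Ind_{R}^{-}$.
\item Chain the two with $\Cut^{\acta^{*}}$.
\end{itemize}
This is how the paper proves the item.
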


\begin{theorem}
\label{thm:Completeness-by-TAk}
For each $\Sigma\in|\Sig|$, and $\Gamma,\Delta\subseteq\Sen(\Sigma)$,
$
\Gamma\vDash^{\TA_{k}}_{\Sigma}\Delta\iff\Gamma\vdash^{\Ind}_{\Sigma}\Delta.
$
\end{theorem}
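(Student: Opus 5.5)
The plan splits the equivalence into soundness ($\Leftarrow$) and completeness ($\Rightarrow$). The completeness half carries the weight; it is a Henkin construction whose quotient algebra must be verified to be a Kleene algebra and to interpret $*$ correctly.

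\emph{Soundness.} Induct on the proof tree of $\Gamma\vdash^{\Ind}_{\Sigma}\Delta$. For every rule except the $*$-rules, the argument from Lemma~\ref{lemma:Soundness-Completeness} carries over unchanged. It only uses that actions in a $\TA_{k}$-model are genuine relations closed under $\empact,\cupact,\impact,\one,\comp,\resi$, and that reducts along $\Sigma^{\Dex}(X)$ and substitutions respect the chosen $*^{\A}$. The latter holds because the reduct in Definition~\ref{def:Kleene-algebraic-TA} keeps the same Kleene algebra. $\Init_{\acta^{*}}$ and $\Cut^{\acta^{*}}$ are trivially sound. For the remaining $*$-rules, I use the equivalent axiomatization from Lemma~\ref{lemma:IndNK}, with the $\KEL_{\phi}$ rules replacing the $\Ind$ rules. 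It then suffices to check that the five Kleene-algebra sentences hold in every $\TA_{k}$-model. Each quantifies over individuals only and asserts an inclusion between elements of $\A_{ss}$, so it follows directly from the Kleene algebra axioms for $\pos{\A_{ss};\empact,\cupact,\one,\comp,*^{\A}}$. Note that $\acta_{1}\comp\acta_{2}\leqq\acta_{1}$ states exactly $\acta_{1}^{\A}\comp\acta_{2}^{\A}\subseteq\acta_{1}^{\A}$ in the relation algebra.

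\emph{Completeness.} Take the contrapositive: assume $\Gamma\nvdash^{\Ind}_{\Sigma}\Delta$ and build a $\TA_{k}$-model of $\Gamma$ refuting every member of $\Delta$. Since the $*$-rules have become finitary (Lemma~\ref{lemma:ind-is-sound-on-ta}), an ordinary Henkin construction should work for signatures of arbitrary size.
\begin{enumerate}
\item Add a set of fresh constants for each sort, and fresh labels for each $s\in S$, of size $\card(\Sen(\Sigma))+\omega$. The labels are needed because the language quantifies over labels (cf.\ Example~\ref{example:natural-Ind}).
\item By transfinite enumeration, extend $\pos{\Gamma,\Delta}$ to a maximal pair $\pos{\Gamma^{+},\Delta^{+}}$ with $\Gamma^{+}\nvdash\Delta^{+}$ and $\Gamma^{+}\cup\Delta^{+}=\Sen$. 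Use compactness and $\Modify$ so that each stage stays unprovable, and add Henkin witnesses for $\exists$ on the left and $\forall$ on the right, as well as for $\comp_{L}$ and $\resi_{L}$.
\item Form the term model: carriers are closed terms modulo $t=t'\in\Gamma^{+}$, labels are interpreted by $\pos{t_{0},t_{1}}\in\acta^{\A}$ iff $t_{0}\stackrel{\acta}{\Rightarrow}t_{1}\in\Gamma^{+}$, and $\A_{ss}:=\{\acta^{\A}\mid\acta\in L_{\T(\Sigma^{+})}\}$.
\item Define $*^{\A}(\acta^{\A}):=(\acta^{*})^{\A}$. This is well defined because of the congruence of $\equiv$ in Lemma~\ref{lemma:IndNK} together with the Henkin labels: if $\acta^{\A}=\acta'^{\A}$ extensionally, then $\acta\equiv\acta'\in\Gamma^{+}$, since otherwise a witness pair distinguishing them lies in $\Gamma^{+}$.
\item Check closure under relation-algebra operations. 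For example, $(\acta_{0}\comp\acta_{1})^{\A}=\acta_{0}^{\A}\comp\acta_{1}^{\A}$ uses $\comp_{R}$ for one inclusion and the Henkin witness from $\comp_{L}$ for the other. The Kleene algebra axioms for $*^{\A}$ then follow from the five $\KEL$ sentences being in $\Gamma^{+}$, again transferred through extensional equality via the label witnesses.
\end{enumerate}
A truth lemma by induction on sentences, with $*$ handled tautologically by the definition of $*^{\A}$, yields $\A\models\Gamma^{+}$ and $\A\not\models\Delta^{+}$; reducting to $\Sigma$ finishes the argument.

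\emph{Main obstacle.} The hardest part is the interaction between the label quantifiers and extensionality in step 4: the model must identify actions that are equal as relations. I would ensure this by adding, for each pair $\acta,\acta'$ with $\acta\equiv\acta'\notin\Gamma^{+}$, witness constants $x,y$ such that one of $x\stackrel{\acta}{\Rightarrow}y$, $x\stackrel{\acta'}{\Rightarrow}y$ lies in $\Gamma^{+}$ and the other in $\Delta^{+}$. Quantification over labels ranges over $\A_{ss}$, and every element of $\A_{ss}$ is named by some action, so the $\exists_{R}^{\theta}$ and $\forall_{L}^{\theta}$ substitutions with action terms cover all elements. This is exactly why the Henkin labels are needed.
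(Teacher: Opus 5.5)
Apart from one step, your plan is the paper's. Soundness goes through the $\KEL$ reformulation in Lemma~\ref{lemma:IndNK}. Completeness is a Henkin construction whose term model takes $\A_{ss}$ to be the set of interpretations of actions and defines $*$ through the provable congruence. Your extensionality argument in step~4 is more explicit than the paper's.

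The step that fails is 1--2. You put fresh individual constants of \emph{every} sort into the signature in advance, then extend $\pos{\Gamma,\Delta}$ to a maximal unprovable pair over the enlarged signature $\Sigma^{+}$. This needs $\Gamma\nvdash^{\Ind}_{\Sigma^{+}}\Delta$, which does not follow from $\Gamma\nvdash^{\Ind}_{\Sigma}\Delta$. Eliminating a fresh constant $c$ by $\Modify$ along $c\mapsto t$ requires a closed $\Sigma$-term $t$ of the sort of $c$, and there may be none.

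In this calculus the signature plays the role of the variable context, and $(\forall_{L}^{\theta})$, $(\exists_{R}^{\theta})$ instantiate only with closed terms of the current signature. So every rule is sound for models with empty carriers, which plain $\pos{S,F}$-algebras permit, and the theorem is only true if such models are admitted. Concretely, suppose the sort $s$ of $x$ has no closed $\Sigma$-terms. Then $\nvdash^{\Ind}_{\Sigma}\Exists{x}\top$, and every countermodel has $\A_{s}=\emptyset$. Yet $\vdash^{\Ind}_{\Sigma^{+}}\Exists{x}\top$ by $(\exists_{R}^{\theta})$ with $\theta\colon x\mapsto c$. So your construction cannot even start, and your term models never have an empty carrier.

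The repair is how the paper organizes its construction:
\begin{itemize}
\item Carry the current signature $\Sigma_{n}$ along with $\pos{\Gamma_{n},\Delta_{n}}$.
\item Enumerate sentences of the signatures already reached, using a pairing with $pair(l,m)\geq l$.
\item Extend $\Sigma_{n}$ by a variable block $X$ only when $\Exists{X}\psi$ is put on the left or $\Forall{X}\psi$ on the right. Unprovability is then preserved by $(\exists_{L})$ and $(\forall_{R})$ read backwards.
\end{itemize}
Sorts with no closed terms over the final signature then get empty carriers. The truth lemma still holds for them, because a $\Forall{x}\psi$ over such a sort lying in $\Delta_{\alpha}$ would have forced a witness. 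Your fresh labels can stay up front, since a fresh label is eliminated by substituting $\empact_{s}$. The same holds for constants of sorts already inhabited by closed $\Sigma$-terms.

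A smaller point: in steps 4--5, the passage from extensional inclusion of relations to $\acta\leqq\acta'\in\Gamma^{+}$ comes from the individual witnesses for $\forall$ on the right, not from the Henkin labels. The labels are what the truth lemma needs at label quantifiers, namely $\exists\lambda$ on the left and $\forall\lambda$ on the right.
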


\section{Interpolation}\label{sec:interpolation}

We will prove Craig's interpolation theorem for the classical part $\vdash^{\Ind}$.
We will not use $\Sig^{\gene}$ in the interpolation theorem. (It would almost certainly not hold true.)

\begin{definition}
Let the following be a commutative diagram of signature morphisms.
\begin{center}
\(\xymatrix@R=20pt@C=70pt{ 
\Sigma_{0}\ar[r]|{\chi'_{0}}&\Sigma'
\\
\ar[u]|{\chi_{0}}\Sigma\ar[r]|{\chi_{1}}&\Sigma_{1}\ar[u]|{\chi'_{1}}
}\)
\end{center}
\begin{enumerate}
\item
The square is said to satisfy the elementary gluing condition if
for any $\A_{i}\in|\Mod^{\TA_{k}}(\Sigma_{i})|\,(i\in2)$, such that $\A_{0}\red_{\chi_{0}}$ and $\A_{1}\red_{\chi_{1}}$ are elementary equivalent,
there exists a $\A'\in|\Mod^{\TA_{k}}(\Sigma')|$ such that $\A'\red_{\chi'_{i}}$ and $\A_{i}$ are elementary equivalent for $i\in2$.
\item
The square is called the $\mathrm{RJC}$-square if
for any satisfiable theories $T_{i}\subseteq\Sen(\Sigma_{i})\,(i\in2)$ such that $\chi_{0}^{-1}(T_{0})=\chi_{1}^{-1}(T_{1})$ and these are complete theories,
$\cup_{i\in2}\chi'_{i}(T_{i})$ is satisfiable.
\item
If for any $\phi\in\Sen(\Sigma)$,
$\Gamma_{i}\not\vDash^{\TA_{k}}_{\Sigma_{i}}\Delta_{i}\,\chi_{i}(\phi)\text{ for each }i\in2$ or
$\chi_{i}(\phi)\,\Gamma_{i}\not\vDash^{\TA_{k}}_{\Sigma_{i}}\Delta_{i}\text{ for each }i\in2$ hold true,
we say that $\pos{\Gamma_{i},\Delta_{i}}_{i\in2}$ are agreeable.
{\par}
If $\cup_{i\in 2}\chi'_{i}(\Gamma_{i})\not\vDash^{\TA_{k}}_{\Sigma'}\cup_{i\in 2}\chi'_{i}(\Delta_{i})$ holds true
for any given agreeable $\pos{\Gamma_{i},\Delta_{i}}_{i\in2}$,
this square is called $\mathrm{CI'}$-square.
\item
If there is $\phi\in\Sen(\Sigma)$ such that $\Gamma_{0}\vdash^{\Ind}_{\Sigma_{0}}\Delta_{0}\,\chi_{0}(\phi)$ and $\chi_{1}(\phi)\,\Gamma_{1}\vdash^{\Ind}_{\Sigma_{1}}\Delta_{1}$ hold true
for each $\Gamma_{i},\Delta_{i}\subseteq\Sen(\Sigma_{i})\,(i\in2)$ such that $\cup_{i\in 2}\chi'_{i}(\Gamma_{i})\vdash^{\Ind}_{\Sigma'}\cup_{i\in 2}\chi'_{i}(\Delta_{i})$,
this square is called $\mathrm{CI}$-square.
\end{enumerate}
\end{definition}

\begin{lemma}\label{lemma:1234}
$1,2$ are equivalent.
$3,4$ are equivalent.
$3,4$ imply $1,2$.
The converse holds true if the signature is countable.
\end{lemma}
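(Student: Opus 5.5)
Throughout we work in $\TA_{k}$. We use Theorem~\ref{thm:Completeness-by-TAk} to move freely between $\vDash^{\TA_{k}}$ and $\vdash^{\Ind}$, together with the compactness of $\vdash^{\Ind}$ from Lemma~\ref{lemma:ind-is-sound-on-ta}.

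\emph{$1\Leftrightarrow2$.}
For $1\Rightarrow2$, take complete satisfiable $T_{i}$ with $\chi_{0}^{-1}(T_{0})=\chi_{1}^{-1}(T_{1})$ and pick $\A_{i}\models T_{i}$. Each $\A_{i}\red_{\chi_{i}}$ has theory $\chi_{i}^{-1}(T_{i})$ by the satisfaction condition and completeness of $T_{i}$. So the two reducts are elementarily equivalent, and gluing yields $\A'$ with $\A'\red_{\chi'_{i}}\equiv\A_{i}$. Hence $\A'\models\cup_{i}\chi'_{i}(T_{i})$.
For $2\Rightarrow1$, put $T_{i}:=Th(\A_{i})$. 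These are complete, and elementary equivalence of the reducts gives $\chi_{0}^{-1}(T_{0})=\chi_{1}^{-1}(T_{1})$. A model $\A'$ of $\cup\chi'_{i}(T_{i})$ then satisfies $Th(\A'\red_{\chi'_{i}})\supseteq T_{i}$, and completeness gives equality.

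\emph{$3\Leftrightarrow4$.}
For $4\Rightarrow3$, suppose $\pos{\Gamma_{i},\Delta_{i}}$ is agreeable but $\cup\chi'_{i}(\Gamma_{i})\vDash\cup\chi'_{i}(\Delta_{i})$. Completeness gives $\vdash^{\Ind}$, and the CI property gives $\phi$ with $\Gamma_{0}\vdash\Delta_{0}\,\chi_{0}(\phi)$ and $\chi_{1}(\phi)\,\Gamma_{1}\vdash\Delta_{1}$. By soundness this contradicts agreeability, whichever disjunct holds.
For $3\Rightarrow4$, suppose no interpolant exists. Then for every $\phi$ we have $\Gamma_{0}\not\vDash\Delta_{0}\,\chi_{0}(\phi)$ or $\chi_{1}(\phi)\,\Gamma_{1}\not\vDash\Delta_{1}$. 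This is not yet agreeability, since agreeability asks for the same side for both $i$. So we extend $\pos{\Gamma_{i},\Delta_{i}}$ to a maximal pair by a Zorn/Lindenbaum argument. At each $\phi\in\Sen(\Sigma)$ we add $\chi_{i}(\phi)$ to both right sides or to both left sides, choosing a side on which no interpolant arises. One of the two choices always works: if both failed, the interpolants $\psi_{1},\psi_{2}$ could be combined via cut into an interpolant for the previous stage, using $\phi\vee\psi$ and $\phi\to\psi$ style combinations. Compactness of $\vdash^{\Ind}$ handles limit stages. The resulting maximal pair is agreeable, and the CI$'$ property contradicts the provability of the pushout sequent, which persists after enlarging.

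\emph{$3\Rightarrow2$.}
Given complete $T_{i}$ as in condition 2, take $\Gamma_{i}:=T_{i}$ and $\Delta_{i}:=\emptyset$. For each $\phi\in\Sen(\Sigma)$, either $\chi_{0}(\phi)\in T_{0}$, in which case $\phi\in\chi_{1}^{-1}(T_{1})$ and the "left" disjunct holds for both $i$ by satisfiability, or $\neg\phi$ behaves analogously and the "right" disjunct holds. So the pair is agreeable, and $\cup\chi'_{i}(T_{i})\not\vDash\emptyset$, i.e.\ it is satisfiable.

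\emph{Converse under countability.}
Given agreeable $\pos{\Gamma_{i},\Delta_{i}}$, we must build complete theories $T_{i}\supseteq\Gamma_{i}\cup\neg\Delta_{i}$ that agree on $\Sigma$, then apply the RJC property. We enumerate $\Sen(\Sigma)$, which is countable, and at each step decide $\phi$ consistently on both sides using agreeability, which is preserved at each step by the same cut argument as above. The completed theories inside $\Sigma_{i}$ are then obtained by ordinary Lindenbaum, relative to the fixed decided $\Sigma$-part. A complete consistent extension that is maximal in $\Sen(\Sigma_{i})$ requires the countable enumeration so that the $\omega$-chain argument closes, together with completeness (Theorem~\ref{thm:Completeness-by-TAk}) to turn consistency into satisfiability. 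We expect the main obstacle to be the preservation of agreeability when deciding $\phi$. This needs a careful propositional manipulation of interpolants, and it is where disjunction and implication in the base signature $\Sigma$ are used.
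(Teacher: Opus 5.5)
Your argument is correct. For $1\Leftrightarrow2$, $4\Rightarrow3$ and $3\Rightarrow2$ it coincides with the paper's; $3\Rightarrow4$ and the countable converse go differently.

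In $3\Rightarrow4$ you say that the absence of an interpolant is weaker than agreeability, and you therefore run a Lindenbaum extension. The paper instead observes that the two conditions are equivalent:
\begin{itemize}
\item the no-interpolant hypothesis at $\phi$ gives $\Gamma_0\not\vDash\Delta_0\,\chi_0(\phi)$ or $\chi_1(\phi)\,\Gamma_1\not\vDash\Delta_1$;
\item at $\neg\phi$ it gives $\chi_0(\phi)\,\Gamma_0\not\vDash\Delta_0$ or $\Gamma_1\not\vDash\Delta_1\,\chi_1(\phi)$;
\item the candidates $\bot,\top$ give $\Gamma_i\not\vDash\Delta_i$, hence $\Gamma_i\not\vDash\Delta_i\,\chi_i(\phi)$ or $\chi_i(\phi)\,\Gamma_i\not\vDash\Delta_i$ for each $i$.
\end{itemize}
By distributivity, the conjunction of these four disjunctions is exactly agreeability at $\phi$. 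So $3\Leftrightarrow4$ is just completeness plus propositional bookkeeping. Your extension argument, with the combined interpolant $(\phi\to\psi_1)\wedge(\phi\vee\psi_2)$, is valid but not needed there.

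For $2\Rightarrow3$ the paper first shows that every finite tuple $\phi\in\Sen(\Sigma)^n$ admits a jointly consistent assignment $v\in 2^n$. Otherwise a partition $A\subseteq 2^n$ yields the interpolant $\bigvee_{v\in A^c}\bigwedge_j\neg^{v(j)}\phi(j)$. It then extracts an infinite branch by K\"onig's lemma, and that is where countability of $\Sigma$ enters. This route avoids transfinite recursion at the price of the countability hypothesis.

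Your one-sentence-at-a-time extension needs only the two-case combination. Since you already pass limit stages by compactness of $\vdash^{\Ind}$ (as in your $3\Rightarrow4$), it runs over $\Sen(\Sigma)$ of any cardinality. Contrary to your closing remark, countability is not what makes the chain close. Your route therefore gives the converse for arbitrary signatures, which would also make the reduction to countable squares in the proof of Theorem~\ref{theorem:CISQ} unnecessary.

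Finally, completing the theories inside $\Sen(\Sigma_i)$ is superfluous: condition 2 only needs the $\Sigma$-parts to agree and be complete. If you want complete $\Sigma_i$-theories anyway, take $Th(\A_i)$ for a model $\A_i$ supplied by Theorem~\ref{thm:Completeness-by-TAk}; no enumeration is required.
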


\begin{lemma}[exactness]\label{fact:exactness}
Let $\chi_{i}:\Sigma\to\Sigma_{i} \ (i\in2)$ be signature morphisms, $\{\chi'_{i}:\Sigma_{i}\to\Sigma'\}_{i\in2}$ is a colimit of $\{\chi_{i}:\Sigma\to\Sigma_{i}\}_{i\in2}$.
Let $\A_i\in|\Mod^{\TA_{k}}(\Sigma_i)| \ (i\in2)$.
The following are equivalent:
\begin{itemize}
\item $\A_{0}\upharpoonright_{\chi_0}=\A_{1}\upharpoonright_{\chi_1}$
\item there is a $\Sigma'$-model $\A'$ such that $\A'\upharpoonright_{\chi'_0}=\A_{1}$ and $\A'\upharpoonright_{\chi'_1}=\A_{0}$ hold true
\end{itemize}
\end{lemma}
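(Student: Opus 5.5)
We read the second condition as asking for $\A'$ with $\A'\red_{\chi'_{i}}=\A_{i}$ for $i\in 2$; the indices in the statement appear to be swapped. The plan is to prove the two implications separately, treating a $\TA_{k}$-model as a sort-indexed package of data and building the amalgamation componentwise.

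\emph{From the second condition to the first.} This direction is formal. Since the square commutes, $\chi'_{0}\circ\chi_{0}=\chi'_{1}\circ\chi_{1}$. Since $\Mod^{\TA_{k}}$ is contravariantly functorial, we get $\A_{0}\red_{\chi_{0}}=(\A'\red_{\chi'_{0}})\red_{\chi_{0}}=\A'\red_{\chi'_{0}\circ\chi_{0}}=\A'\red_{\chi'_{1}\circ\chi_{1}}=\A_{1}\red_{\chi_{1}}$. The only thing to check is that the reduct of Definition~\ref{def:Kleene-algebraic-TA} is functorial, including the $*$-component. This holds because it merely re-indexes the families $\A_{\BS{s}}$ and $*^{\A}_{\BS{s}}$ along $\chi$.

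\emph{From the first condition to the second.} We use that colimits in $\Sig$ are computed componentwise.
\begin{itemize}
\item $S_{\Sigma'}$ is the pushout of sets $S_{\Sigma_{0}}\leftarrow S_{\Sigma}\to S_{\Sigma_{1}}$, i.e.\ the disjoint union $S_{\Sigma_{0}}\sqcup S_{\Sigma_{1}}$ quotiented by the equivalence $\sim$ generated by $\chi_{0}(s)\sim\chi_{1}(s)$.
\item $F_{\Sigma'}$ and $L_{\Sigma'}$ are pushouts of the corresponding (arity-indexed) families in the same way.
\end{itemize}
To each symbol $e$ of $\Sigma_{0}\sqcup\Sigma_{1}$ we attach its interpretation datum from $\A_{i}$:
\begin{itemize}
\item for a sort $s$: the triple consisting of the carrier $(\A_{i})_{s}$, the subalgebra $(\A_{i})_{ss}\subseteq(\A_{i})^{All}_{ss}$, and the operation $*^{\A_{i}}_{ss}$;
\item for a function symbol: its interpreting function;
\item for a label: its interpreting relation.
\end{itemize}
The key claim is that this datum is constant on $\sim$-classes. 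It suffices to check a single generating step $\chi_{0}(e)\sim\chi_{1}(e)$ with $e\in\Sigma$, and there the two data are exactly the $e$-components of $\A_{0}\red_{\chi_{0}}$ and $\A_{1}\red_{\chi_{1}}$. These are equal by hypothesis, and the reduct on the $\TA_{k}$ side carries the Kleene component $*_{\chi(\BS{s})}$ as well. Transitivity then propagates equality along zig-zags.

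We define $\A'$ by sending each symbol of $\Sigma'$ to the common datum of its class. This is well typed: arities of function symbols and sorts of labels are translated compatibly, so, for example, the relation interpreting a label $\lambda'\in L_{\Sigma'}$ lives on the carrier assigned to its sort.

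It then remains to verify that $\A'\in|\Mod^{\TA_{k}}(\Sigma')|$.
\begin{itemize}
\item The algebra part is clear.
\item For each sort $s'$, the set $\A'_{s's'}$ equals some $(\A_{i})_{ss}$. It is therefore a relation subalgebra on which $*$ gives a Kleene algebra.
\item It contains every label interpretation of sort $s'$, since every such label $\lambda'$ is $\chi'_{j}(\lambda)$ for some $\lambda\in L_{\Sigma_{j}}$. The sort of $\lambda$ is some $s''$ with $\chi'_{j}(s'')=s'$, and $\lambda^{\A_{j}}\in(\A_{j})_{s''s''}$, which equals $\A'_{s's'}$ by the key claim.
\end{itemize}
Finally, $\A'\red_{\chi'_{i}}=\A_{i}$ holds by construction, componentwise. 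Uniqueness of $\A'$ also follows, although it is not needed.

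\emph{Main obstacle.} We expect the main difficulty to be the well-definedness step, specifically for the Kleene-algebra component. In the $\TA_{k}$ setting a model is not determined by its carriers and relations alone: the chosen subalgebra and the chosen $*$ are extra data attached to sorts. We must make sure the pushout identification of sorts never forces two different choices onto the same $\Sigma'$-sort. This is exactly where the definition of the reduct in $\TA_{k}$ (re-indexing $*_{\chi(\BS{s})}$) and the zig-zag description of the pushout are used. Everything else is routine bookkeeping.
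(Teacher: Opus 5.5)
Your proposal is correct and follows essentially the same route as the paper. The easy direction comes from commutativity of the square. For the converse, the paper likewise realizes the pushout as $\Sigma_{0}\sqcup\Sigma_{1}$ modulo the equivalence generated by $\chi_{0}(c)\sim\chi_{1}(c)$, defines $\A'$ on representatives (including the subalgebra and $*$ data on sorts), checks well-definedness on single generating steps, and then propagates along zig-zags. You are also right that the indices in the second condition are swapped, and your explicit check that $\A'$ is a $\TA_{k}$-model is a detail the paper leaves implicit.
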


\begin{definition}[disjoint]
Let $f_{i}:S\to S_{i}\,(i\in2)$ be functions.
Let $M_{i}=\{s\in S\mid \text{ there is } s'\neq s \text{ which satisfies } \chi_i(s)=\chi_i(s') \}\ (i\in2)$.
When $M_{0}\cap M_{1}=\emptyset$, we say that $f_{0}$ and $f_{1}$ are disjoint.
Let $\chi_{i}:\Sigma\to\Sigma_{i}\,(i\in2)$ be signature morphisms.
$\chi_0$ and $\chi_1$ are disjoint (about sort confluence) iff $S_{\chi_{0}}$ and $S_{\chi_{1}}$ are disjoint.
\end{definition}

\begin{theorem}\label{theorem:CISQ}
Pushout for disjoint pairs is a $\mathrm{CI}$-square.
Conversely, for a pair of non-disjoint functions $f_{i}:S\to S_{i}\,(i\in2)$, there exists a pair of signature morphisms $\chi_{i}:\Sigma\to\Sigma_{i}\,(i\in2)$ such that
$S_{\chi_{i}}=f_{i}\,(i\in2)$ and the pushout is not $\mathrm{CI}$-square (not even $\mathrm{RJC}$-square).
\end{theorem}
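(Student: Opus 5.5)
The plan is to handle the positive half semantically, through Lemma~\ref{lemma:1234} and the completeness Theorem~\ref{thm:Completeness-by-TAk}, and the negative half with an explicit small counterexample.

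\emph{Reduction.} By Lemma~\ref{lemma:1234}, $\mathrm{CI}$ is equivalent to $\mathrm{CI'}$. It therefore suffices to show: for every agreeable $\pos{\Gamma_{i},\Delta_{i}}_{i\in2}$ the union $\cup_{i}\chi'_{i}(\Gamma_{i})\cup\neg\cup_{i}\chi'_{i}(\Delta_{i})$ has a $\TA_{k}$-model over the pushout $\Sigma'$. Because compactness follows from Theorem~\ref{thm:Completeness-by-TAk} together with Lemma~\ref{lemma:ind-is-sound-on-ta}, I may freely extend signatures by constants and do a Henkin-style construction, in the spirit of the forcing of \cite{gaina2026forcinginterpolationfirstorderhybrid}.

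\emph{Construction for the positive half.} I add countably many fresh constants of every sort and fresh labels to $\Sigma$, and propagate them along $\chi_{i}$ and $\chi'_{i}$; the extended square is again a pushout of a disjoint pair. I then enumerate all sentences of the extended $\Sigma_{0}$ and $\Sigma_{1}$ and grow $\pos{\Gamma_{i},\Delta_{i}}$ in $\omega$ (or transfinitely many) steps, preserving agreeability at each step. At each step I decide $\phi$ on one side, add Henkin witnesses for $\exists$ (and for $\comp$ and $\resi$ on the left), and, importantly, push each $\Sigma$-sentence to the same side on both halves, which agreeability permits. In the limit this yields maximal pairs whose $\Sigma$-parts $\chi_{i}^{-1}(\Gamma_{i})$ coincide and are complete.

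The term models $\A_{i}$ are built from constants modulo provable equality. Their action algebras consist of the interpretations of all actions, and $*$ is interpreted via the provable $\equiv$-classes; Lemma~\ref{lemma:IndNK} guarantees that this is a Kleene algebra and a congruence. This gives $\Sigma_{i}$-models whose $\Sigma$-reducts are literally equal, since both are the term model of the same complete $\Sigma$-theory with witnesses.

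The main obstacle is this equality of reducts when $\chi_{i}$ identify sorts. When $\chi_{0}$ merges $s,s'$, equalities between $s$- and $s'$-constants become expressible in $\Sigma_{0}$ but not in $\Sigma$. Disjointness ensures that for each $\Sigma$-sort at most one side performs such identifications. So I let that side determine the quotient on the affected sorts, and check that the other side's theory is indifferent to it: its sentences cannot distinguish the sorts, so I can rename constants accordingly. Once the reducts coincide, Lemma~\ref{fact:exactness} amalgamates $\A_{0},\A_{1}$ into a $\Sigma'$-model satisfying both translated pairs, which proves $\mathrm{CI'}$.

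\emph{Counterexample for the converse.} Take $s\in M_{0}\cap M_{1}$, merged by $f_{0}$ with $a$ and by $f_{1}$ with $b$. If $a=b$, let $\Sigma$ have constants $c:s$ and $d:a$, put $\chi_{0}(c)=\chi_{0}(d)$ in $T_{0}$ and $\chi_{1}(c)\neq\chi_{1}(d)$ in $T_{1}$. Both $T_{i}$ are taken as complete theories of models in which every sort has two elements. Their $\Sigma$-reducts are isomorphic, hence have the same complete $\Sigma$-theory, yet the union over $\Sigma'$ is inconsistent, so the square is not even $\mathrm{RJC}$. If $a\neq b$, I would add $\Sigma$-function symbols linking $a$ and $b$, say $h:b\to a$ with a $\Sigma$-axiom fixing $h(e)$ relative to $d$. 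Then $T_{0}$, which identifies $c,d$, and $T_{1}$, which identifies $c,e$ together with a clause on $h$, each remain consistent separately but force contradictory equalities once all three sorts are merged in $\Sigma'$. Finding the exact sentences for this case is the part I expect to require some trial.
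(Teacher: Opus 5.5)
\textbf{Positive half.} This part is sound, and it follows a different route from the paper.

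\begin{itemize}
\item The paper proves $\mathrm{RJC}$ by an alternating Henkin construction on complete theories. It then reaches $\mathrm{CI}$ only through the countable direction of Lemma~\ref{lemma:1234}, plus a compactness reduction to countable pushout subsquares.
\item You run the Henkin construction directly on agreeable pairs and conclude with $3\Leftrightarrow4$. This avoids both the countability restriction and the subsquare construction.
\end{itemize}

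You should state explicitly that agreeability is maintained in its equivalent form: there is no separating $\phi\in\Sen(\Sigma)$. The justification is short: if deciding $\phi$ either way produced separators $\psi_1,\psi_2$, then $(\phi\wedge\psi_1)\vee(\neg\phi\wedge\psi_2)$ would already separate. Your renaming step uses disjointness as the paper does, and your $a=b$ counterexample is essentially Example~\ref{example:2sort}.

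\textbf{The gap.} The genuine gap is the case $a\neq b$, and it cannot be closed, because the converse fails there.

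Take $S=\{s,a,b\}$, with $f_0$ identifying only $s,a$ (into $u$) and $f_1$ identifying only $s,b$ (into $v$). This pair is non-disjoint. Fix any $\chi_i$ over these sort maps and any complete $T_0,T_1$ with common $\Sigma$-part.
\begin{itemize}
\item The alternating Henkin construction never uses disjointness. It gives $\A_i\models T_i$ whose reducts are named by the same constants.
\item Hence there is a $\Sigma$-isomorphism $\beta=(\beta_s,\beta_a,\beta_b)\colon\A_0\red_{\chi_0}\to\A_1\red_{\chi_1}$, respecting the action algebras and $*$.
\item Let $W:=(\A_0)_u$. Transport $\A_0$ along $\id_W$ on $u$ and along $\beta_s^{-1}\circ\beta_b$ on $b$.
\item Transport $\A_1$ along $\beta_s^{-1}$ on $v$ and along $\beta_a^{-1}$ on $a$.
\item Both $\Sigma$-reducts then become the transport of $\A_0\red_{\chi_0}$ along $(\id_W,\id_W,\beta_s^{-1}\circ\beta_b)$. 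So they are equal, and Lemma~\ref{fact:exactness} amalgamates them.
\end{itemize}
Thus every such pushout is $\mathrm{RJC}$, and also $\mathrm{CI}$ when countable. This holds whatever linking symbols, such as $h\colon b\to a$, you add.

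The paper's Example~\ref{example:3sort} is exactly this chain, and it is not a counterexample either. In $\Sigma'$, a two-element sort with $d'\neq d''$ and $d''\neq d'''$ is satisfied by taking $d'=d'''$.

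\textbf{What is actually true.} Consider the bipartite graph whose vertices are the $\ker f_0$-classes and the $\ker f_1$-classes, and whose edges are the sorts of $S$. Your $a=b$ example really exploits a cycle in this graph: there, $s$ and $a$ are parallel edges.

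\begin{itemize}
\item Disjointness rules out cycles, but non-disjointness does not force one.
\item On any cycle, put one constant per edge. Impose equalities at all but one vertex and an inequality at the last; this gives a counterexample.
\item On a forest, the transport above can be propagated tree by tree, so amalgamation succeeds.
\end{itemize}

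So the converse can be proved only for pairs whose merging graph has a cycle, not for all non-disjoint pairs as stated.
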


\section{Conclusions and Developments}
Since $\TA$'s proof system is not compact, we need to go through the stage of introducing some kind of subsystem for practical applications.
\subparagraph*{Completeness}
We introduced induction rules and corresponding semantics to demonstrate completeness.
Completeness allows us to understand the properties of proofs through models.
While we dealt with the interpolation theorem as an application,
this semantics could potentially be used for other compactness-based methods, such as ultrapower.
\subparagraph*{Interpolation}
It remains unresolved whether $(\Init_{\acta^{*}})$ and $(\Cut^{\acta^{*}})$ can be removed within a proof system using induction.
\begin{description}\small
\item[$(\Init)$]
For example, in Kleene three-valued logic, in addition to $\pos{T,F}=\pos{+1,-1}$ as truth values, $U=0$ is introduced,
and sentences are interpreted as $\neg\phi=-\phi$, $\vee\phi=\max_{i\in N}(\phi_{i})$, $\wedge\phi=\min_{i\in N}(\phi_{i})$, and $\phi_{0}\to\phi_{1}=\neg\phi_{0}\vee\phi_{1}$.
In this case, $0\to0=0$, so "$(\Init)$" does not necessarily hold true.
Such semantics might negate the possibility of removing this rule.
\item[$(\Cut)$]
In general, the cut-elimination theorem for proofs involving induction does not necessarily hold true unconditionally;
for example, in \cite{MCDOWELL200091}, $\Cut$-elimination is performed by restricting the logic to intuitionistic logic and utilizing the “complexity” of the unique conclusion.
However, since the sign exists in $\vdash^{\Ind}$ here, for example, if we try to remove the $(\Cut)$ of $\acta^{*}$ introduced by $(\Ind_{R}^{\pm})$ and $(\Ind_{L}^{\mp})$,
the $(\Cut)$ for the intermediate goal $\act$ will be duplicated, which risks actually increasing complexity.
\end{description}
Interpolation theorems are usually constructed by directly constructing interpolation using proof-theoretic methods,
but since the $\Cut$-elimination cannot be used as described above, we performed a semantic proof this time.
We obtained conditions that are almost necessary and sufficient for sort.
\\ \ \\
Considering the above circumstances, the interpolation theorem actually provides valuable information for predicting the intermediate goal for the application of $(\Cut)$.
Remember that $(\Ind_{L})$ also involves modifying the goal to $\act$.
As an extension of this discussion, if there is a corresponding interpolation theorem for induction, it could be useful in semi-automating this interactive process.
\subparagraph*{Extension}
This paper dealt with induction on $*$, but by introducing induction on terms and recursively defined actions (for example, the smallest action closed by the transitions of an action and its application to subterms), it becomes possible to apply this to a wider range of problems.
For example, in \cite{hashimoto2025}, the completeness of a constructor-based proof system is demonstrated as an application of the Omitting Types Theorem.
This is similar to the proof system in \cite{go-icalp24}, requiring infinite assumptions, but it is thought that induction and appropriate semantics can be introduced, as in this case.
\bibliographystyle{plainurl}
\bibliography{citab}
\appendix

\section{Proof of the results presented in Section~\ref{sec:ta}}

\begin{proof}[Proof of well-definedness of Definition~\ref{def:Model-functor}]
We will verify that the definition is valid for $\chi\in\Sig^{\gene}$.
First, as preparation, we will verify that $t^{\A\red_{\chi}}=\T_{\chi}(t)^{\A}$ and $\acta^{\A\red_{\chi}}=\T_{\chi}(\acta)^{\A}$.
\begin{proofcases}
\item[$t^{\A\red_{\chi}}=\T_{\chi}(t)^{\A}$]
I will show this using induction on the structure.
Let $t=\pos{\sigma,t_{0},t_{1},\dots,t_{n-1}}$, and $\pos{\sigma'}=\chi(\sigma)$.
\begin{align*}
&\pos{\sigma,t_{0},t_{1},\dots,t_{n-1}}^{\A\red_{\chi}}
=\sigma^{\A\red_{\chi}}(t_{0}^{\A\red_{\chi}},t_{1}^{\A\red_{\chi}},\dots,t_{n-1}^{\A\red_{\chi}})
=\sigma'^{\A}(t_{0}^{\A\red_{\chi}},t_{1}^{\A\red_{\chi}},\dots,t_{n-1}^{\A\red_{\chi}})\\
&\stackrel{IH}{=}\sigma'^{\A}(\T_{\chi}(t_{0})^{\A},\T_{\chi}(t_{1})^{\A},\dots,\T_{\chi}(t_{n-1})^{\A})
=\pos{\sigma',\T_{\chi}(t_{0}),\T_{\chi}(t_{1}),\dots,\T_{\chi}(t_{n-1})}^{\A}\\
&=\chi(\sigma)*\pos{\T_{\chi}(t_{0}),\T_{\chi}(t_{1}),\dots,\T_{\chi}(t_{n-1})}^{\A}
=\T_{\chi}({\sigma,t_{0},t_{1},\dots,t_{n-1}})^{\A}
\end{align*}
\item[$\acta^{\A\red_{\chi}}=\T_{\chi}(\acta)^{\A}$]
We will demonstrate this using induction on structure.
We will divide the cases based on form.
\begin{proofcases}
\item[$\acta=\pos{\mathrm{label},\lambda}$]
$\pos{\mathrm{label},\lambda}^{\A\red_{\chi}}=\lambda^{\A\red_{\chi}}=\chi(\lambda)^{\A}=\T_{\chi}(\pos{\mathrm{label},\lambda})^{\A}$
\item[$\acta=\empact_{s}$]
$\empact^{\A\red_{\chi}}_{s}=\emptyset_{(\A\red_{\chi})_{s}}=\emptyset_{\A_{\T_{\chi}(s)}}=\empact^{\A}_{\T_{\chi}(s)}=\T_{\chi}(\empact_{s})^{\A}$
\item[$\acta=\one_{s}$]
$\one^{\A\red_{\chi}}_{s}=\id_{(\A\red_{\chi})_{s}}=\id_{\A_{\T_{\chi}(s)}}=\one^{\A}_{\T_{\chi}(s)}=\T_{\chi}(\one_{s})^{\A}$
%
%
%
\item[$\acta=\acta_{0}^{*}$]
$(\acta_{0}^{*})^{\A\red_{\chi}}
=(\acta_{0}^{\A\red_{\chi}})^{*}
\stackrel{IH}{=}(\T_{\chi}(\acta_{0})^{\A})^{*}
=(\T_{\chi}(\acta_{0})^{*})^{\A}
=\T_{\chi}(\acta_{0}^{*})^{\A}$
\item[$\acta=\acta_{0}\cupact\acta_{1}$]
$(\acta_{0}\cupact\acta_{1})^{\A\red_{\chi}}
=(\acta_{0}^{\A\red_{\chi}})\cup(\acta_{1}^{\A\red_{\chi}})
\stackrel{IH}{=}(\T_{\chi}(\acta_{0})^{\A})\cup(\T_{\chi}(\acta_{1})^{\A})
=(\T_{\chi}(\acta_{0})\cupact\T_{\chi}(\acta_{1}))^{\A}
=\T_{\chi}(\acta_{0}\cupact\acta_{1})^{\A}$
\item[$\acta=\acta_{0}\impact\acta_{1}$]
$(\acta_{0}\impact\acta_{1})^{\A\red_{\chi}}
=(\acta_{0}^{\A\red_{\chi}})^{c}\cup(\acta_{1}^{\A\red_{\chi}})
\stackrel{IH}{=}(\T_{\chi}(\acta_{0})^{\A})^{c}\cup(\T_{\chi}(\acta_{1})^{\A})
=(\T_{\chi}(\acta_{0})\impact\T_{\chi}(\acta_{1}))^{\A}
=\T_{\chi}(\acta_{0}\impact\acta_{1})^{\A}$
\item[$\acta=\acta_{0}\comp\acta_{1}$]
$(\acta_{0}\comp\acta_{1})^{\A\red_{\chi}}
=(\acta_{0}^{\A\red_{\chi}})\comp(\acta_{1}^{\A\red_{\chi}})
\stackrel{IH}{=}(\T_{\chi}(\acta_{0})^{\A})\comp(\T_{\chi}(\acta_{1})^{\A})
=(\T_{\chi}(\acta_{0})\comp\T_{\chi}(\acta_{1}))^{\A}
=\T_{\chi}(\acta_{0}\comp\acta_{1})^{\A}$
\item[$\acta=\acta_{0}\resi\acta_{1}$]
$(\acta_{0}\resi\acta_{1})^{\A\red_{\chi}}
=(\acta_{0}^{\A\red_{\chi}})^{-1}\comp(\acta_{1}^{\A\red_{\chi}})
\stackrel{IH}{=}(\T_{\chi}(\acta_{0})^{\A})^{-1}\comp(\T_{\chi}(\acta_{1})^{\A})
=(\T_{\chi}(\acta_{0})\resi\T_{\chi}(\acta_{1}))^{\A}
=\T_{\chi}(\acta_{0}\resi\acta_{1})^{\A}$
\end{proofcases}
\end{proofcases}
Since it is clear that $\Mod(\id_{\Sigma})=\id_{\Mod(\Sigma)}$, we will verify the conservation of composition.
\begin{align*}
&\sigma^{\A\red_{\chi_{1}\circ\chi_{0}}}=
\T_{\chi_{1}\circ\chi_{0}}(\sigma)^{\A}=
\T_{\chi_{1}}(\T_{\chi_{0}}(\sigma))^{\A}=
\T_{\chi_{0}}(\sigma)^{\A\red_{\chi_{1}}}=
\sigma^{\A\red_{\chi_{1}}\red_{\chi_{0}}}
\\
&\lambda^{\A\red_{\chi_{1}\circ\chi_{0}}}=
\T_{\chi_{1}\circ\chi_{0}}(\lambda)^{\A}=
\T_{\chi_{1}}(\T_{\chi_{0}}(\lambda))^{\A}=
\T_{\chi_{0}}(\lambda)^{\A\red_{\chi_{1}}}=
\lambda^{\A\red_{\chi_{1}}\red_{\chi_{0}}}
\end{align*}
\end{proof}

\begin{proof}[Proof of well-definedness of Definition~\ref{def:Satisfaction-relation}]
We will verify the satisfaction condition.
When we defined $\Mod$, we verified that
$t^{\A\red_{\chi}}=\T_{\chi}(t)^{\A}$,
$\acta^{\A\red_{\chi}}=\T_{\chi}(\acta)^{\A}$
Using this, it is easy to see that the satisfaction condition holds true for action sentences and equations.
We can prove this by induction for other operators as well, but since we are using second-order quantification this time, we will only verify that there are no problems in that part.
First, as preparation, We will show that
"for each $\chi:\Sigma\to\Sigma'\in\Sig^{\gene}$, $X\in\Sigma_{\Dex}$, $\hat{\A}\in|\Mod(\Sigma^{\Dex}[X])|$, $\A'\in|\Mod(\Sigma')|$ such that$\hat{\A}\red_{\Sigma^{\Dex}(X)}=\A'\red_{\chi}$, there is unique $\hat{\A}'\in|\Mod(\Sigma^{\Dex}[X'])|\,(X':=\chi_{\Dex}(X))$ that satisfies $\hat{\A}'\red_{\Sigma^{\Dex}(X')}=\A'$ and $\hat{\A}'\red_{\chi^{\Dex}[X]}=\hat{\A}$".
\[\xymatrix@R=25pt@C=50pt{ 
\hat{\A}\ar@/^+20pt/@{.}[rrrrr]|{\red}&\Sigma^{\Dex}[X] \ar[rrr]|{\chi^{\Dex}[X]} &&& \Sigma'^{\Dex}[X']&\hat{\A}' \\
\ar@/^10pt/@{.}[u]|{\red}\A\ar@/^-20pt/@{.}[rrrrr]|{\red}&\Sigma \ar[u]|{\Sigma^{\Dex}(X)} \ar[rrr]|{\chi} &&& \Sigma' \ar[u]|{\Sigma'^{\Dex}(X')}&\A'\ar@/^-10pt/@{.}[u]|{\red}
}\]
First, from $\hat{\A}'\red_{\Sigma^{\Dex}(X')}=\A'$ and $\hat{\A}'\red_{\chi^{\Dex}[X]}=\hat{\A}$, we can see that it cannot be done unless the definition is at least as follows.
\begin{itemize}
\item
For symbols of $\Sigma'=\pos{S',F',L'}$,
$\hat{\A}'_{s}:=\A'_{s}\,(s\in S')$,
$\sigma^{\hat{\A}'}:=\sigma^{\A'}\,(\sigma\in F')$, and
$\lambda^{\hat{\A}'}:=\lambda^{\A'}\,(\lambda\in L')$.
\item
For variables
$\pos{\mathrm{var},n,\Sigma'}^{\hat{\A}'}:=\pos{\mathrm{var},n,\Sigma}^{\hat{\A}}$.
\end{itemize}
Conversely, the above model can be defined%
\footnote{Using a many-sorted relation variable would make it impossible to guarantee the existence in this part.}
and satisfies $\hat{\A}'\red_{\Sigma^{\Dex}(X')}=\A'$ and $\hat{\A}'\red_{\chi^{\Dex}[X]}=\hat{\A}$.
\ \\
The proof of the satisfaction condition is done by induction.
(We will just check the $\exists$ part.)
\begin{align*}
&\A'\models\chi(\Exists{X}\phi)\,(=\Exists{X'}\chi^{\Dex}[X](\phi))\\
&\iff\hat{\A}'\models\chi^{\Dex}[X](\phi)\text{ for some }\Sigma^{\Dex}(X')\text{-expansion }\hat{\A}'\text{ of }\A'\\
&\stackrel{IH}{\iff}\hat{\A}'\red_{\chi^{\Dex}[X]}\models\phi\text{ for some }\Sigma^{\Dex}(X')\text{-expansion }\hat{\A}'\text{ of }\A'\\
&\iff\hat{\A}\models\phi\text{ for some }\Sigma^{\Dex}(X')\text{-expansion }\hat{\A}'\text{ of }\A'\text{ and }\hat{\A}=\hat{\A}'\red_{\chi^{\Dex}[X]}\\
&\iff\hat{\A}\models\phi\text{ for some }\hat{\A}\,\,\&\,\,\hat{\A}'\text{ such that }\hat{\A}'\red_{\Sigma^{\Dex}(X')}=\A'\text{ and }\hat{\A}'\red_{\chi^{\Dex}[X]}=\hat{\A}\\
&\stackrel{*}{\iff}\hat{\A}\models\phi\text{ for some }\hat{\A}\text{ such that }\hat{\A}\red_{\Sigma^{\Dex}(X)}=\A'\red_{\chi}\\
&\iff\hat{\A}\models\phi\text{ for some }\Sigma^{\Dex}(X)\text{-expansion }\hat{\A}\text{ of }\A'\red_{\chi}\\
&\iff\A'\red_{\chi}\models\Exists{X}\phi
\end{align*}
The equivalence of the $*$ part uses the existence of the prepared extension.
\end{proof}

\section{Proof of the results presented in Section~\ref{sec:sequent}}

\begin{proof}[Proof of well-definedness of Definition~\ref{def:Semantic-sequent}]
We confirm that this satisfies the structural rules.
\begin{proofcases}
\item[$\Modify$] Let $\A'\in|\Mod(\Sigma')|$.
Assuming the premise holds true, $\A'\red_{\chi}$ cannot satisfies $\Gamma\cup\{\neg\delta\mid\delta\in\Delta\}$.
Also by satisfaction condition, $\A'$ cannot satisfies $\chi(\Gamma)\cup\{\neg\delta\mid\delta\in\chi(\Delta)\}$.
So $\A'$ cannot satisfies $\Gamma'\cup\{\neg\delta\mid\delta\in\Delta'\}$.
Since the choice of $\A'$ was arbitrary, this means that the conclusion holds true.
\item[$\Init$] Because $\A\models\phi$ and $\A\not\models\phi$ cannot be true at the same time.
\item[$\Cut$] We show that if the conclusion is not true, then one of the premises must not be true.
From the assumption, there exists a model $\A$ that satisfies $\Gamma\cup\Gamma'$ and completely negates $\Delta\cup\Delta'$.
This satisfies either $\A\models\psi$ or $\A\not\models\psi$, so one of the premises cannot be held.
\end{proofcases}
\end{proof}

\begin{proof}[Proof of well-definedness of Definition~\ref{def:basic-proof}]
Strictly speaking, proof trees are defined recursively in the form
\[\pos{\text{Rule name}^{\{\text{Information about choice}\}},\proofrule{\text{Sequence of premise proof trees}}{\text{Root}},\text{Applied part}}.\]
\begin{itemize}
\item In the rule display, only the root portion of "Sequence of premises proof trees" is shown.
\item
The "Applied part" is $\proofrule{\pos{\Gamma'_{i}\vdash_{\Sigma_{i}}\Delta'_{i}}_{i\in I}}{\Gamma'\vdash_{\Sigma}\Delta'}$
for rules written in the form $\proofrule{\pos{\UL{\Gamma'_{i}}\vdash_{\Gamma_{i}\mid\Sigma_{i}\mid\Delta_{i}}\UL{\Delta'_{i}}}_{i\in I}}{\OL{\Gamma'}\vdash_{\Gamma\mid\Sigma\mid\Delta}\OL{\Delta'}}$.
\\
Note that in this paper, formally, sequent is defined using a set rather than a sequence of sentences, so it may be difficult to determine what the rule was applied to from the surrounding structure.
This is redundant in notation, so it will generally be omitted.
\item
"Information about choice" doesn't appear here, but it's an example of $\theta$ for $(\exists_{R}^{\theta})$ which will be introduced later.
If it exists, it's written as a subscript to the upper right of the rule.
This is information to avoid non-deterministic choices in recursive operations on proof trees.
(For example, note that if $\chi(\phi)=\chi(\phi')=\psi$, then the translation of $\phi\vee\phi'$ is $\psi\vee\psi$.)
\end{itemize}
{\par}
We will prove the well-definedness of $\Sen^{b}$. We will verify that it satisfies three axioms.
\begin{proofcases}
\item[$\Modify$]
Suppose $\Gamma\vdash^{b}_{\Sigma}\Delta$ has been proven.
We have a proof tree. Since proof rules hold true when translated, translating all nodes with $\chi:\Sigma\to\Sigma'$ gives us the proof of $\chi(\Gamma)\vdash^{b}_{\Sigma'}\chi(\Delta)$.
Next, adding $\Gamma'$ and $\Delta'$ to the left and right of all nodes respectively gives us the proof tree of $\Gamma'\vdash^{b}_{\Sigma'}\Delta'$.
\item[$\Init$]
It exists as a rule.
\item[$\Cut$]
Assume that $\Gamma\vdash^{b}_{\Sigma}\Delta\,\psi$ and $\psi\,\Gamma'\vdash^{b}_{\Sigma}\Delta'$ have been proven.
Hereafter, we will refer to $\psi\,\Gamma'\vdash^{b}_{\Sigma}\Delta'$ as the right-hand side.
We will show that $(\Cut)$ is possible by induction on the complexity of the proof diagram on the right.%
\footnote{
Note that the basic rules only deal with sentences that appear to the right of the sequent.
}
Since the others are the same, we will only check four cases.
\begin{proofcases}
\item[If the last rule on the right is $(\Init_{\phi})$]
We divide it into cases based on the form of $\phi$ used in the rule.
\begin{proofcases}
\item[if $\psi=\phi$]
In this case, the sequent obtained by $(\Cut^{\psi})$ is a modified version of the left ($\Gamma\vdash^{b}_{\Sigma}\Delta\cup\{\psi\}$).
Therefore, it can be proven.
\item[if $\psi\neq\phi$]
In this case, the sequent obtained by $(\Cut^{\psi})$ contains the sentence used in the rule, so it can be obtained directly by the $(\Init)$ rule.
\end{proofcases}
\item[if the last rule on the right is $(\Ry)$]
In this case, the sequent obtained by $(\Cut)$ contains the sentence used in the rule, so it can be obtained directly by the $(\Ry)$ rule.
\item[If the last rule on the right is $(\Sy)$]
This can be done simply by changing the order in which $(\Sy)$ and $(\Cut)$ are applied.
\item[If the last rule on the right is $(\Ty)$]
Similar to $(\Sy)$, we apply $(\Cut)$ to the premises instead of the conclusion, but
if only one of $\Gamma,\Gamma'$ contains $\psi$, then we apply $(\Modify(\id_{\Sigma}))$ to the other instead of $(\Cut)$.
Since we are simply adding a set of sentences to the entire proof tree, the complexity of the proof tree does not increase.
\end{proofcases}
\end{proofcases}
In intuitionistic logic, sentences are not added to the right, so
for $\Modify$, sentence additions to each node are only made to the left.
Otherwise, it's the same.
\end{proof}

\begin{proof}[Proof of Lemma~\ref{lemma:basic-soundness-completeness-compactness}]
The proof is almost the same as \cite{go-icalp24}.
\end{proof}

\begin{proof}[Proof of well-definedness of Definition~\ref{def:Syntactic-sequent}]
We prove that the proof is well-defined (by showing that it satisfies three axioms).
We verify that the proof can be constructed recursively.
\begin{proofcases}
\item[$\Modify$]
For proof tree $T$,
signature morphism $\chi:\Sigma\to\Sigma'$ whose domain is the root signature of $T$, and
sentence sets $\hat{\Gamma}',\hat{\Delta}'\subseteq\Sen(\Sigma')$,
such that $\hat{\Gamma}'\supseteq\chi(\hat{\Gamma})$, $\hat{\Delta}'\supseteq\chi(\hat{\Delta})$ hold true for root $\hat{\Gamma}\vdash_{\Sigma}\hat{\Delta}$ of $T$,
we define new proof tree $\Modify(\chi,T,\hat{\Gamma}',\hat{\Delta}')$ for $\hat{\Gamma}'\vdash_{\Sigma'}\hat{\Delta}'$ recursively with respect to the structure of the proofs.%
\footnote{
If we represent all signatures obtained by repeatedly adding variables to $\Sigma$ as $\Dex_{\Sigma}$, this becomes a set.
Therefore, all proof trees whose root signature is an element of $\Dex_{\Sigma}$ also become a set.
Strictly speaking, recursive definitions are only possible if the order satisfies set-likeness, so using a function that assigns a structure to perform recursion is not allowed.
However, in practice, we only use proof information that has $\Dex_{\Sigma}$ as its root, so we can consider set-likeness to be guaranteed.
}
{\par}
Generally, proof rules take the following form:
\[name^{choice}~\proofrule
{\pos{\UL{\Phi_{i}}\vdash_{\iota_{i}(\Gamma_{i})\mid\Sigma_{i}\mid\iota_{i}(\Delta_{i})}\UL{\Psi_{i}}}_{i\in I}}
{\OL{\Phi}\vdash_{\cup_{i\in I}\Gamma_{i}\mid\Sigma\mid\cup_{i\in I}\Delta_{i}}\OL{\Psi}}\]
where $\hat{\Gamma}=\cup_{i\in I}\Gamma_{i}\cup\Phi$, $\hat{\Delta}=\cup_{i\in I}\Delta_{i}\cup\Psi$.
\begin{align*}\hspace*{-1cm}
\Modify(\chi,
  &\pos{name^{choice},
  \proofrule{\pos{T_{i}}_{i\in I}}{\hat{\Gamma}\vdash_{\Sigma}\hat{\Delta}},
  \proofrule{\pos{\Phi_{i}\vdash_{\Sigma_{i}}\Psi_{i}}_{i\in I}}{\Phi\vdash_{\Sigma}\Psi}},
\hat{\Gamma}',\hat{\Delta}')&\\
:=&\pos{name^{choice'},
\proofrule{\pos{\Modify(\chi_{i},T_{i},\hat{\Gamma}'_{i},\hat{\Delta}'_{i})}_{i\in I}}{\hat{\Gamma}'\vdash_{\Sigma'}\hat{\Delta}'},
\proofrule{\pos{\chi_{i}(\Phi_{i})\vdash_{\Sigma'_{i}}\chi_{i}(\Psi_{i})}_{i\in I}}{\chi(\Phi)\vdash_{\Sigma'}\chi(\Psi)}}&\\
&\chi_{i}:\Sigma_{i}\to\Sigma'_{i}:=
\begin{cases}
\chi^{\Dex}[X]:\Sigma^{\Dex}[X]\to\Sigma'^{\Dex}[X']\,(X':=\chi_{\Dex}(X))&(\text{when the rule has variable adding})\\
\chi:\Sigma\to\Sigma'&(\text{else})
\end{cases}
&\\
&\pos{\hat{\Gamma}'_{i},\hat{\Delta}'_{i}}:=
\begin{cases}
\pos{\chi_{i}(\Phi_{i})\,\iota'_{i}(\hat{\Gamma}'),\iota'_{i}(\hat{\Delta}')\,\chi_{i}(\Psi_{i})}\,(\iota'_{i}:=\Sigma'^{\Dex}(\chi_{\Dex}(X)))&(\text{when the rule has variable adding})\\
\pos{\chi_{i}(\Phi_{i})\,\hat{\Gamma}',\hat{\Delta}'\,\chi_{i}(\Psi_{i})}&(\text{else})
\end{cases}
&\\
&choice':=
\begin{cases}
\chi(t)&(choice=t)\\
n&(choice=n)\\
\theta'&(choice=\theta:\Sigma^{\Dex}(X)\text{-substitution})
\end{cases}
&\\
&(\theta':\text{ unique }\Sigma'^{\Dex}(\chi_{\Dex}(X))\text{-substitution which satisfies }\chi\circ\theta=\theta'\circ\chi_{\Dex}[X])&
\end{align*}
The process is the same as with $\vdash^{b}$: translating the proof diagram and adding extra sentences to each node.
Since the rule structure is preserved, the result obtained by applying $\Modify$ is also a proof tree.
{\par}
In intuitionistic case, the operation of adding to the right changes.
Since we are not using $\neg\phi,\acta^{\Coo}$, the right side will not suddenly become empty in the middle of the proof.
The right side becomes empty when we did not add extra sentences to the right when introducing $\empact$ or $\bot=\vee\pos{}$ to the left, or when the proof retains them.
Therefore, we just need to go back up the proof and add the corresponding sentences to the nodes where the right side is empty.
\begin{align*}
&
\hat{\Delta}'_{i}:=
\begin{cases}
\chi_{i}(\Psi_{i})&(\text{if }\chi_{i}(\Psi_{i})\neq\emptyset)\\
\iota'_{i}(\hat{\Delta}')\,(\iota'_{i}:=\Sigma'^{\Dex}(\chi_{\Dex}(X)))&(\text{else if the rule involves variable adding})\\
\hat{\Delta}'&(\text{else})
\end{cases}
&
\end{align*}
\item[preparation]
Before discussing $\Init$ and $\Cut$, let's define the complexity of the sentences.
We define $C^{0}(\acta)\in\omega^{2}$ and $C(\phi)\in\omega^{3}$ as follows.
All $\alpha,\beta,\gamma$ and that with subscripts appear below are natural numbers.
\\
$\small
C^{0}(\acta):=
\begin{cases}
0&(\acta=\lambda)\\
1&(\acta=\empact,\one)\\
\omega(\beta+1)+\gamma&(\acta=\acta_{0}^{*}, C^{0}(\acta)=\omega\beta+\gamma)\\
\omega(\max_{i\in2}\beta_{i})+\sum_{i\in2}\gamma_{i}&(\acta=\acta_{0}\rho\acta_{1}\,(\rho\in\{\cupact,\impact,\comp,\resi\}),\,C^{0}(\acta_{i})=\omega\beta_{i}+\gamma_{i}\,(i\in2))
\end{cases}
$
\\
$\small
C(\phi):=
\begin{cases}
0&(\phi=(t_{0}=t_{1}))\\
C(\acta)&(\phi=t_{0}\stackrel{\acta}{\Longrightarrow}t_{1})\\
\omega^{2}(1+\sum_{i\in2}\alpha_{i})+\omega(\max_{i\in2}\beta_{i})+\sum_{i\in2}\gamma_{i}&(\phi=\phi_{0}\to\phi_{1},\,C(\phi_{i})=\omega^{2}\alpha_{i}+\omega\beta_{i}+\gamma_{i}\,(i\in2))\\
\omega^{2}(1+\sum_{i\in N}\alpha_{i})+\omega(\max_{i\in N}\beta_{i})+\sum_{i\in N}\gamma_{i}&(\phi=\vee\psi\,(\wedge\psi),\,C(\psi_{i})=\omega^{2}\alpha_{i}+\omega\beta_{i}+\gamma_{i}\,(i\in N))\\
\omega^{2}(\alpha+1)+\omega\beta+\gamma&(\phi=\Exists{X}\psi\,(\Forall{X}\psi),\,C(\psi)=\omega^{2}\alpha+\omega\beta+\gamma)\\
\end{cases}
$
\\
When $C(\phi)=\omega^{2}\alpha+\omega\beta+\gamma$,
$\alpha$ is the number of $\{\to,\vee,\wedge,\exists,\forall\}$ that appear,
$\beta$ is the "height" due to $*$,
$\gamma$ is the number of symbols other than $*$ that appear in the actions that appear, and $C$ evaluates these left-first.
The order defined from here satisfies, for example, $(\Exists{\lambda}t_{0}\stackrel{\lambda}{\Longrightarrow}t_{1})>(t_{0}\stackrel{(\pi\comp\acta)^{*}}{\Longrightarrow}t_{1})>(t_{0}\stackrel{(\pi\comp\acta)^{n}}{\Longrightarrow}t_{1})>(t_{0}=t_{1})$.
\item[$\Init$]
We will see that the function $\Init$ can be recursively defined for the complexity of the sentence $\psi$ defined above, which constructs a new proof tree $\Init(\Gamma,\Delta,\Sigma,\psi)$ for a sentence $\psi$, a signature $\Sigma$ such that $\psi\in\Sen(\Sigma)$ holds true, and a sentence set $\Gamma,\Delta\subseteq\Sen(\Sigma)$.
We will consider different cases based on the sentence's form. Since others are similar, we will only discuss some of them.
\begin{proofcases}
\item[$\psi$ is atomic]
Since $\psi\vDash^{b}_{\Sigma}\psi$, it can be defined by the following diagram.
\begin{align*}
\Atom~\proofrule{}{\,\psi\vdash_{\Gamma\mid\Sigma\mid\Delta}\psi\,}
\end{align*}
\item[$\psi=\vee\psi'$]
We draw it as a diagram.
\[
{\vee}_{L}~\proofrule
{\pos{{\vee}_{R}^{n}~\proofrule{\proofrulet{:}{\,\psi'_{n}\vdash_{\Gamma\mid\Sigma\mid\Delta}\psi'_{n}\,}}
{\,\psi'_{n}\vdash_{\Gamma\mid\Sigma\mid\Delta}\vee\psi'\,}}_{n\in N}}
{\,\vee\psi'\vdash_{\Gamma\mid\Sigma\mid\Delta}\vee\psi'\,}
\]
The part above $\psi'\,\Gamma\vdash_{\Sigma}\Delta\,\psi'$ is already defined by $\Init(\Gamma,\Delta,\Sigma,\psi')$.
This argument is independent of the size of $N$. Therefore, a similar argument can be made for $*$.
\item[$\psi=\Exists{X}\psi'$]
Let $X'$ be the result of translating the contents of $X$ using $\Sigma^{\Dex}(X):\Sigma\to\Sigma^{\Dex}[X]$.
We have $\Sigma^{\Dex}(X)^{\Dex}[X]:\Sigma^{\Dex}[X]\to\Sigma^{\Dex}[X]^{\Dex}[X']\,(X\mapsto X')$.
If we reverse the translation of the elements of $X$, we obtain a substitution $\theta$ such that $X'\ni x'\mapsto x\in X$.
Note that $\theta(\Sigma^{\Dex}(X)^{\Dex}[X](\psi'))=\psi'$,
\[
\exists_{L}~\proofrule
{\proofrule
{\exists_{R}^{\theta}~\proofrule
 {\proofrulet{:}{\psi'\vdash_{\Sigma^{\Dex}(X)(\Gamma)\mid\Sigma^{\Dex}[X]\mid\Sigma^{\Dex}(X)(\Delta)}\psi'}}
 {\psi'\vdash_{\Sigma^{\Dex}(X)(\Gamma)\mid\Sigma^{\Dex}[X]\mid\Sigma^{\Dex}(X)(\Delta)}\Exists{X'}\Sigma^{\Dex}(X)^{\Dex}[X](\psi')}}
{\psi'\vdash_{\Sigma^{\Dex}(X)(\Gamma)\mid\Sigma^{\Dex}[X]\mid\Sigma^{\Dex}(X)(\Delta)}\Sigma^{\Dex}(X)(\Exists{X}\psi')}}
{\Exists{X}\psi'\vdash_{\Gamma\mid\Sigma\mid\Delta}\Exists{X}\psi'}
\]
\normalsize
The part above $\psi'\,\Sigma(X)(\Gamma)\vdash_{\Sigma[X]}\Sigma(X)(\Delta)\,\psi'$ is already defined by $\Init(\Gamma,\Delta,\Sigma,\psi')$.
\end{proofcases}
In the case of intuitionistic, we can simply set $\Delta=\emptyset$.
\item[$\Cut$]
We will see that for two proof trees $T$ and $T'$ whose roots are $\Gamma\vdash_{\Sigma}\Delta\,\psi$ and $\psi\,\Gamma'\vdash_{\Sigma}\Delta'$ respectively,
a new proof tree $\Cut(\Gamma\,\Gamma'\vdash_{\Sigma}\Delta\,\Delta'$ whose root is $\Gamma\,\Gamma'\vdash_{\Sigma}\Delta\,\Delta'$
can be recursively defined for the lexicographic order of ($C(\psi)$ : the complexity of $\psi$) and (the Cartesian product order of proof structures).
\begin{align*}
&\Cut^{\psi}~\proofrule
{\,\proofrulet{T}{\vdash_{\Gamma\mid\Sigma\mid\Delta}\UL{\psi}}\,\Space\,\proofrulet{T'}{\UL{\psi}\vdash_{\Gamma'\mid\Sigma\mid\Delta'}}\,}
{\,\vdash_{\Gamma\,\Gamma'\mid\Sigma\mid\Delta\,\Delta'}\,}
\end{align*}
We will divide this into several cases.
In the following, the earlier cases will take precedence, and the lower cases will be those where the upper cases do not hold true.
\begin{proofcases}
\item[$\psi\in\Delta\cup\Delta'\cup\Gamma\cup\Gamma'$]
Since both are similar, we consider the case $\psi\in\Delta\cup\Delta'$.
The conclusion obtained by $(\Cut)$ is the conclusion of $T$ with an extra sentence added.
Therefore, we can obtain it by ignoring $T'$ and applying $\Modify$ to $T$.
(The same applies to intuitionistic case.)
\item[if $\psi$ is not the applied part of the rule]
Since we have already considered the above case, we can think of it as $\psi\not\in\Delta\cup\Delta'\cup\Gamma\cup\Gamma'$.
Since both are similar, we consider the case where the root rule of $T'$ is unrelated to $\psi$.
Generally, $T'$ takes the following form:
\[name^{choice}~\proofrule
{\pos{\proofrulet{T'_{i}}{\UL{\Phi'_{i}}\vdash_{\iota_{i}(\Gamma'_{i})\mid\Sigma_{i}\mid\iota_{i}(\Delta'_{i})}\UL{\Psi'_{i}}}}_{i\in I}}
{\OL{\Phi'}\vdash_{\cup_{i\in I}\Gamma'_{i}\mid\Sigma\mid\cup_{i\in I}\Delta'_{i}}\OL{\Psi'}}\]
where $\Gamma'\cup\{\psi\}=\cup_{i\in I}\Gamma'_{i}\cup\Phi'$, $\Delta'=\cup_{i\in I}\Delta'_{i}\cup\Psi'$.
\begin{proofcases}
\item[if $I=\emptyset$]
In other words, if $\psi$ is introduced as an extra sentence by $(\Atom)$ or "left introduction of $\empact,\bot$",
then instead of introducing $\psi$, we can add an extra sentence to $T'$ so that the result is the same as what we get with $(\Cut)$.
We ignore $T$.
(The same applies to intuitionistic case.)
\item[if $I\neq\emptyset$] There is $i\in I$ such that $\psi\in\Gamma'_{i}$.
We define a new proof diagram as follows:
\[name^{choice}~\proofrule
{\pos{\proofrulet{T''_{i}}{\UL{\Phi'_{i}}\vdash_{\Gamma''_{i}\mid\Sigma_{i}\mid\Delta''_{i}}\UL{\Psi'_{i}}}}_{i\in I}}
{\OL{\Phi'}\vdash_{\Gamma\cup(\cup_{i\in I}\Gamma'_{i}\setminus\{\psi\})\mid\Sigma\mid\Delta\cup(\cup_{i\in I}\Delta'_{i})}\OL{\Psi'}}\]
where
\begin{proofcases}
\item[if $\psi\in\Gamma'_{i}$] $\pos{\Gamma''_{i},\Delta''_{i}}:=\pos{\iota_{i}(\Gamma)\cup\iota_{i}(\Gamma'_{i}\setminus\{\psi\}),\iota_{i}(\Delta)\cup\iota_{i}(\Delta'_{i})}$, and
\\ \ \\
$
T''_{i}
:=
\Cut^{\iota_{i}(\psi)}~\proofrule{\proofrulet{\Modify(\iota_{i},T,\iota_{i}(\Gamma),\iota_{i}(\Delta)\cup\iota_{i}(\psi))}{\,\vdash_{\iota_{i}(\Gamma)\mid\Sigma_{i}\mid\iota_{i}(\Delta)}\UL{\iota_{i}(\psi)}}\Space\proofrulet{T'_{i}}{\UL{\iota_{i}(\psi)}\,\Phi'_{i}\vdash_{\iota_{i}(\Gamma'_{i}\setminus\{\psi\})\mid\Sigma_{i}\mid\iota_{i}(\Delta'_{i})}\Psi'_{i}}}
{\UL{\Phi'_{i}}\vdash_{\iota_{i}(\Gamma)\cup\iota_{i}(\Gamma'_{i}\setminus\{\psi\})\mid\Sigma_{i}\mid\iota_{i}(\Delta)\cup\iota_{i}(\Delta'_{i})}\UL{\Psi'_{i}}}
$
\\
\item[if $\psi\not\in\Gamma'_{i}$] $\pos{\Gamma''_{i},\Delta''_{i}}:=\pos{\iota_{i}(\Gamma'_{i}),\iota_{i}(\Delta'_{i})}$, and $T''_{i}:=T'_{i}$.
\end{proofcases}
(In the intuitionistic case, since $\psi\not\in\Delta\cup\Delta'\cup\Gamma\cup\Gamma'$, $\Delta=\emptyset$ holds true, and
the elements of $\iota_{i}(\Delta)\cup\iota_{i}(\Delta_{i})\cup\Psi_{i}$ remain at most one. In other words, that operation does not break the intuitionistic proof.)
\end{proofcases}
\item[If $\psi$ is the applied part in both rules]
Since we have already considered the above case, we can think of it as $\psi\not\in\Delta\cup\Delta'\cup\Gamma\cup\Gamma'$.
Since others are similar, we will only discuss some of them.
\begin{proofcases}
\item[$\psi$ is atomic]
The $\Atom$ rule involves adding an extra compound sentence to what has been proven by $\vdash^{b}$.
Since $\vdash^{b}$ satisfies $(\Cut)$, we can apply $(\Cut)$ first and then add the extra sentence.
\item[$\psi=\vee\phi$]
Since $\bot=\vee\pos{}$ cannot be introduced by ${\vee}_{R}$, we can assume that both are not empty sequences.
The root parts of $T$ and $T'$ are in the form
${\vee}_{R}^{n}~\proofrule
{\Gamma\vdash_{\Sigma}\Delta\,\phi_{n}}
{\Gamma\vdash_{\Sigma}\Delta\,\vee\phi}$ and
${\vee}_{L}~\proofrule
{\pos{\phi_{n}\,\Gamma'\vdash_{\Sigma}\Delta'}_{n\in N}}
{\vee\phi\,\Gamma'\vdash_{\Sigma}\Delta'}$ respectively.
Let the proofs of the premises be $T_{n}$, $T'_{i}\,(i\in N)$ respectively.
We can define it by $\Cut(\Gamma,\Gamma',\Delta,\Delta',\psi,T,T')
:=\Cut(\Gamma,\Gamma',\Delta,\Delta',\phi_{n},T_{n},T'_{n})$.
\item[$\psi=\Exists{X}\phi$]
The root parts of $T$ and $T'$ are in the form
$\exists_{R}^{\theta}~\proofrule
{\Gamma\vdash_{\Sigma}\Delta\,\theta(\phi)}
{\Gamma\vdash_{\Sigma}\Delta\,\Exists{X}\phi}$ and
$\exists_{L}~\proofrule
{\phi\,\Sigma^{\Dex}(X)(\Gamma')\vdash_{\Sigma^{\Dex}[X]}\Sigma^{\Dex}(X)(\Delta')}
{\Exists{X}\phi\,\Gamma'\vdash_{\Sigma}\Delta'}$ respectively.
Translating the premise $\exists_{L}$ in terms of $\theta$ gives
$\theta(\phi)\,\Gamma'\vdash_{\Sigma}\Delta'$.%
\footnote{The lengthy explanation at the beginning is preparation for using "translation by substitution" here.}
Let the proofs of the premises be $T_{0}$, $T'_{0}$ respectively.
We can define it by
\[
\Cut(\Gamma,\Gamma',\Delta,\Delta',\psi,T,T')
:=\Cut(\Gamma,\Gamma',\Delta,\Delta',\theta(\phi),T_{0},\Modify(\theta,T'_{0},\theta(\phi)\,\Gamma',\Delta')).
\]
\item[$\psi=t_{1}\stackrel{\acta_{1}\comp\acta_{2}}{\Longrightarrow}t_{2}$]
The root parts of $T$ and $T'$ are in the form
\\
${\comp}_{R}^{t}~\proofrule{\Gamma_{L}\vdash_{\Sigma}\Delta_{L}\,t_{1}\stackrel{\acta_{1}}{\Longrightarrow}t\Space\Gamma_{R}\vdash_{\Sigma}\Delta_{R}\,t\stackrel{\acta_{2}}{\Longrightarrow}t_{2}}{\Gamma\vdash_{\Sigma}\Delta\,t_{1}\stackrel{\acta_{1}\comp\acta_{2}}{\Longrightarrow}t_{2}}$
and \\
${\comp}_{L}~\proofrule{t_{1}\stackrel{\acta_{1}}{\Longrightarrow}x\,x\stackrel{\acta_{2}}{\Longrightarrow}t_{2}\,\Sigma^{\Dex}(x)(\Gamma')\vdash_{\Sigma^{\Dex}[x]}\Sigma^{\Dex}(x)(\Delta')}{t_{1}\stackrel{\acta_{1}\comp\acta_{2}}{\Longrightarrow}t_{2}\,\Gamma'\vdash_{\Sigma}\Delta'}$
respectively.\\
Translating the premise ${\comp}_{L}$ in terms of $\pos{x\mapsto t}$ gives the proof $T'_{t}$ of
$t_{1}\stackrel{\acta_{1}}{\Longrightarrow}t\,t\stackrel{\acta_{2}}{\Longrightarrow}t_{2}\,\Gamma'\vdash_{\Sigma}\Delta'$. 
If we let $T_{L}$ be the proof of the left premise of $T$ and $T_{R}$ be the proof of the right premise of $T$, then from the recursive hypothesis, we can use $(\Cut)$ twice to obtain the proof of $\Gamma\,\Gamma'\vdash_{\Sigma}\Delta\,\Delta'$.
Even though the complexity of the proof structure increases with the application of $(\Cut)$ to $T_{R},T'_{t}$, the format order is complexity-first of $\psi$, so $(\Cut)$ can be applied again to $T_{L}$.
\end{proofcases}
(The same applies to intuitionistic case.)
\end{proofcases}
\end{proofcases}
\end{proof}

\begin{proof}[Proof of Lemma~\ref{lemma:Soundness-Completeness}]
The proof is almost the same as \cite{go-icalp24}.
Although $(\to)$ was not originally a constructor,
classically, it is equivalent if $\phi_{0}\to\phi_{1}$ is restricted to $\phi_{1}=\bot$ and used instead of $\neg\phi$.
The same restriction can be applied to $(\impact)$, and if it is treated the same way as $\neg$, its completeness can be proven by forcing.
\end{proof}

\section{Proof of the results presented in Section~\ref{sec:induction}}

\begin{proof}[Proof of Lemma~\ref{lemma:induction-rules}]
Since the same applies to other things, I will only discuss a few.
	\begin{proofcases}
		\item[$\Ind_{R}^{0}$] By $(\Ry)$ and $(*_{R})$.
		\item[$\Ind_{R}^{+}$]
		$t_{0}\stackrel{\acta^{n}}\Longrightarrow t\,t\stackrel{\acta}\Longrightarrow t_{1}\vdash_{\Sigma}t_{0}\stackrel{\acta^{n+1}}\Longrightarrow t_{1}\vdash_{\Sigma} t_{0}\stackrel{\acta^{*}}\Longrightarrow t_{1}$ holds true for all $n\in\omega$.
		By $(*_{L})$, $t_{0}\stackrel{\acta^{*}}\Longrightarrow t\,t\stackrel{\acta}\Longrightarrow t_{1}\vdash_{\Sigma} t_{0}\stackrel{\acta^{*}}\Longrightarrow t_{1}$.
		By $(\Cut)$ and assumptions, $\Gamma\,\Gamma'\vdash_{\Sigma}\Delta\,\Delta'\,t_{0}\stackrel{\acta^{*}}\Longrightarrow t_{1}$
		\item[$\Ind_{L}^{+}$]
		Let $\phi(v):=t\stackrel{\act}\Longrightarrow v$.
		For any term $u\in F_{\T(\Sigma^{\Dex}[x,y])}$ let $\phi(u):=\theta(\phi)$ where $\theta$ is a substitution which satisfies $\theta(v)=u$.
		By the assumption and $(\Modify\,z\mapsto t)$, $\phi(x)\,x\stackrel{\acta}\Longrightarrow y\,\Sigma^{\Dex}(x,y)(\Gamma_{M})\vdash_{\Sigma^{\Dex}[x,y]}\Sigma^{\Dex}(x,y)(\Delta_{M})\,\phi(y)$.
		By induction on $n\in\omega$, we can show that
		$\phi(x)\,x\stackrel{\acta^{n}}\Longrightarrow y\,\Sigma^{\Dex}(x,y)(\Gamma_{M})\vdash_{\Sigma^{\Dex}[x,y]}\Sigma^{\Dex}(x,y)(\Delta_{M})\,\phi(y)$ for all $n\in\omega$.
		By $(\Modify)$, $\phi(t_{0})\,t_{0}\stackrel{\acta^{n}}\Longrightarrow t_{1}\,\Gamma_{M}\vdash_{\Sigma}\Delta_{M}\,\phi(t_{1})$.
		By $(*_{L})$, $\phi(t_{0})\,t_{0}\stackrel{\acta^{*}}\Longrightarrow t_{1}\,\Gamma_{M}\vdash_{\Sigma}\Delta_{M}\,\phi(t_{1})$.
		By the assumptions, and $(\Cut)$, $t_{0}\stackrel{\acta^{*}}\Longrightarrow t_{1}\,\Gamma\vdash_{\Sigma}\Delta$.
	\end{proofcases}
	The same applies to intuitionistic case.
\end{proof}

\begin{proof}[Proof of well-definedness of Definition~\ref{def:sequent-with-ind}]
We show the well-definedness.
\begin{proofcases}
\item[$\Modify$]
The process is the same as with $\vdash^{*}$: translating each node and adding extra sentence sets to the left and right of the node.
While it is necessary to verify the newly added rules, the details are omitted as they are almost identical to before.
\item[$\Init$]
This can also be done in the same way as before. (The $\acta^{*}$ part does not need to be discussed because $\Init_{\acta^{*}}$ is accepted.)
\item[$\Cut$]
This can also be done in the same way as before. (The $\acta^{*}$ part does not need to be discussed because $\Cut^{\acta^{*}}$ is accepted.)
\end{proofcases}
\end{proof}

\begin{proof}[Proof for Example~\ref{example:natural-Ind}]
To avoid inconvenience, we use $\phi:=\Forall{x\in\mathbb{N}}\gamma(x)\to\gamma(s(x))$,
and express it as $\gamma(0)\vdash^{\Ind,\INT}_{\Gamma\,\phi\mid\Sigma\mid}\Forall{z\in\mathbb{N}}\gamma(z)$.
It is sufficient to show that $\gamma(0)\,z\stackrel{suc^{*}}{\Longrightarrow}0\vdash^{\Ind,\INT}_{\Gamma\,\phi\mid\Sigma[z]\mid}\gamma(z)$.
We want to obtain this result by applying induction to $suc$ and $\gamma$,
but since we cannot directly apply induction to $\gamma$,
we strengthen our goal a bit and consider using $z\stackrel{\pi}{\Longrightarrow}t$ instead of $\gamma(z)$.
First, we remove the existence quantifier by $(\exists_{L})$.
$\gamma(0)\,z\stackrel{suc^{*}}{\Longrightarrow}0\vdash^{\Ind,\INT}_{\{def_{suc},def_{\pi,\gamma},\phi\}\mid\Sigma[z][\pi]\mid}\gamma(z)$
This holds true by induction if the following three conditions hold.
\begin{align*}
&\gamma(0)\vdash^{\Ind,\INT}_{\{def_{suc},def_{\pi,\gamma},\phi\}\mid\Sigma[z][\pi][wxy]\mid}0\stackrel{\pi}{\Longrightarrow}t&\\
&w\stackrel{suc}{\Longrightarrow}x\,x\stackrel{\pi}{\Longrightarrow}y\vdash^{\Ind,\INT}_{\{def_{suc},def_{\pi,\gamma}\,\phi\}\mid\Sigma[z][\pi][wxy]\mid}w\stackrel{\pi}{\Longrightarrow}y&\\
&z\stackrel{\pi}{\Longrightarrow}t\vdash^{\Ind,\INT}_{\{def_{suc},def_{\pi,\gamma},\phi\}\mid\Sigma[z][\pi][wxy]\mid}\gamma(z)&
\end{align*}
The first and last of these are obvious from $def_{\pi,\gamma}$.
The middle one can be shown from the following two.
\begin{align*}
&w\stackrel{suc}{\Longrightarrow}x\,x\stackrel{suc^{*}}{\Longrightarrow}0\vdash^{\Ind,\INT}_{\{def_{suc},def_{\pi,\gamma},\phi\}\mid\Sigma[z][\pi][wxy]\mid}w\stackrel{suc^{*}}{\Longrightarrow}0&\\
&w\stackrel{suc}{\Longrightarrow}x\,\gamma(x)\vdash^{\Ind,\INT}_{\{def_{suc},def_{\pi,\gamma},\phi\}\mid\Sigma[z][\pi][wxy]\mid}\gamma(w)&
\end{align*}
The above is obvious.\\
The lower part can be shown from $def_{suc}$ and
$w=s(x)\,\gamma(x)\vdash^{\Ind,\INT}_{\{def_{suc},def_{\pi,\gamma},\phi\}\mid\Sigma[z][\pi][wxy]\mid}\gamma(w)$, which can be shown from $\phi=\Forall{x\in\mathbb{N}}\gamma(x)\to\gamma(s(x))$.
\end{proof}

\begin{proof}[Proof for Example~\ref{example:meal}]\
\begin{proofcases}
\item[Right]
There is a transition $\pos{0,0}=\pos{0,0}\stackrel{get_{i}}{\Longrightarrow}\pos{i,0}\stackrel{get_{i}}{\Longrightarrow}\pos{i,i}\stackrel{eat_{i}}{\Longrightarrow}\pos{0,0}=\pos{0,0}\,(i\in\{a,b\})$.
The equalities become $\acta^{*}$ using $(\Ind_{R}^{0})$.
$get_{i}$ becomes $\acta$ using $({\cupact}_{R})$.
The two introduced $\acta$ are absorbed into $\acta^{*}$ by $(\Ind_{R}^{+})$.
By using $({\comp}_{R})$, we obtain
$\pos{0,0}\stackrel{\acta^{*}\comp eat_{i}\comp\acta^{*}}{====\Rightarrow}\pos{0,0}$.
\item[Left]
Since that is semantically correct on $\TA$, we can say that
$\Forall{xy}x\stackrel{(\acta\impact\act)^{\Co}}{\Longrightarrow}y\leftrightarrow x\stackrel{get_{a}\cupact get_{b}}{\Longrightarrow}y\in\Gamma$.
As shown later in Lemma~\ref{lemma:IndNK}, equivalent actions can be substituted for each other.
Therefore, we show $\vdash^{\Ind}_{\Gamma\mid\Sigma\mid}\pos{0,0}\stackrel{(get_{a}\cupact get_{b})^{+\Co}}{=====\Rightarrow}\pos{0,0}$.
Paying attention to the definitions of $\Co$ and $+$, it is sufficient to show that $\pos{0,0}\stackrel{(get_{a}\cupact get_{b})}{=====\Rightarrow}x\,x\stackrel{(get_{a}\cupact get_{b})^{*}}{=====\Rightarrow}\pos{0,0}\vdash^{\Ind}_{\Gamma\mid\Sigma[x]\mid}$.
Now we change the goal to apply the introduction of $*$ to the left.
Let $\lambda=[\pos{0,0}]\comp U\comp[\pos{0,0}]^{\Co}\,(U=\empact^{\Co})$.
The following hold true.
\begin{itemize}
\item $\pos{0,0}\stackrel{(get_{a}\cupact get_{b})}{=====\Rightarrow}x\vdash^{\Ind}_{\Gamma\mid\Sigma[x]\mid}\pos{0,0}\stackrel{\lambda}{\Longrightarrow}x$
\item $w'\stackrel{\lambda}{\Longrightarrow}x'\,x'\stackrel{(get_{a}\cupact get_{b})}{=====\Rightarrow}y'\vdash^{\Ind}_{\Gamma\mid\Sigma[x][w'x'y']\mid}w'\stackrel{\lambda}{\Longrightarrow}y'$
\item $\pos{0,0}\stackrel{\lambda}{\Longrightarrow}\pos{0,0}\vdash^{\Ind}_{\Gamma\mid\Sigma[x]\mid}$
\end{itemize}
$\lambda$ rewrites $\pos{0,0}$ to something else.
The application of $(get_{a}\cupact get_{b})$ is closed under the complement of $\pos{0,0}$, so the first two statements are true.
From the definition of $\lambda$, $\pos{0,0}\stackrel{\lambda}{\Longrightarrow}\pos{0,0}$ does not hold true, so the last statement is true.
Then, applying $(\Ind_{L}^{+})$ will give the desired result.
\item[Middle]
If we can prove that
$\pos{0,0}\stackrel{\acta'^{*}}{\Longrightarrow}x
\vdash^{\Ind}_{\Gamma\mid\Sigma[x]\mid}
x\stackrel{\acta'^{*}}{\Longrightarrow}\pos{0,0}$
then we can use $({\comp}_{L})$, $({\to}_{L})$, $({\to}_{R})$, etc., to obtain the conclusion.
We will change the goal.
Let $\lambda:=[\pos{0,0}]\comp U\,(U=\empact^{\Co})$
The following hold true.
\begin{itemize}
\item $\vdash^{\Ind}_{\Gamma\mid\Sigma[x]\mid}\pos{0,0}\stackrel{\lambda}{\Longrightarrow}\pos{0,0}$
\item $w'\stackrel{\lambda}{\Longrightarrow}x'\,x'\stackrel{\acta'}{\Longrightarrow}y'\vdash^{\Ind}_{\Gamma\mid\Sigma[x][w'x'y']\mid}x'\stackrel{\lambda}{\Longrightarrow}y'$
\item $\pos{0,0}\stackrel{\lambda}{\Longrightarrow}x\vdash^{\Ind}_{\Gamma\mid\Sigma[x]\mid}x\stackrel{\acta'^{*}}{\Longrightarrow}\pos{0,0}$
\end{itemize}
The first one can be said from the definition of $\lambda$.
From the definition of $\acta'$, it is possible to go to $\pos{0,0}$ from any state, so the latter two statements hold true.
By applying $(\Ind_{R}^{+})$, we get
$
\pos{0,0}\stackrel{\acta'^{*}}{\Longrightarrow}x
\vdash^{\Ind}_{\Gamma\mid\Sigma[x]\mid}
x\stackrel{\acta'^{*}}{\Longrightarrow}\pos{0,0}
$.
\end{proofcases}
\end{proof}


\begin{proof}[Proof of Lemma~\ref{lemma:ind-is-sound-on-ta}]
From the previous lemma, the rule for $\vdash^{\Ind}$ can be shown from $\vdash^{*}$. Since $\vdash^{*}$ is sound, $\vdash^{\Ind}$ is also sound.
Compactness can be easily proven by induction on the complexity of the proof tree. On the other hand, $\vDash$ is not compact, so completeness does not hold true.
\end{proof}

\section{Proof of the results presented in Section~\ref{sec:complete}}

\begin{proof}[Proof of well-definedness of Definition~\ref{def:Kleene-algebraic-TA}]
As we did before, $\acta^{\A\red_{\chi}}=\T(\chi)(\acta)^{\A}$ (for $\chi\in\Sig^{\gene}$, $\T_{\chi}(\acta)^{\A}$) is important.
The proof is induction on the structure, but only the $*$ part changes as follows.
\begin{proofcases}
\item[$\acta=\acta_{0}^{*}$]
$(\acta_{0}^{*})^{\A\red_{\chi}}
=(\acta_{0}^{\A\red_{\chi}})^{*_{\BS{s}}^{\A\red_{\chi}}}
\stackrel{IH}{=}(\T_{\chi}(\acta_{0})^{\A})^{*_{\BS{s}}^{\A\red_{\chi}}}
=(\T_{\chi}(\acta_{0})^{\A})^{*_{\chi(\BS{s})}^{\A}}
=(\T_{\chi}(\acta_{0})^{*})^{\A}
=\T_{\chi}(\acta_{0}^{*})^{\A}$
\end{proofcases}
The rest is the same.
\end{proof}

\begin{proof}[Proof of Lemma~\ref{lemma:IndNK}]
Note that
$t_{0}\stackrel{\acta_{0}}{\Longrightarrow}t_{1}\vdash^{\Ind,\INT}_{\acta_{0}\leqq\acta_{1}\mid\Sigma\mid}t_{0}\stackrel{\acta_{1}}{\Longrightarrow}t_{1}$
holds true from
$t_{0}\stackrel{\acta_{0}}{\Longrightarrow}t_{1}\vdash^{\Ind,\INT}_{\Sigma}t_{0}\stackrel{\acta_{0}}{\Longrightarrow}t_{1}$,
$t_{0}\stackrel{\acta_{1}}{\Longrightarrow}t_{1}\vdash^{\Ind,\INT}_{\Sigma}t_{0}\stackrel{\acta_{1}}{\Longrightarrow}t_{1}$,
$({\to}_{L})$, and
$({\forall}_{L})$.
\begin{proofcases}
\item[action sentence]
Since $t'_{0}\stackrel{\acta}{\Longrightarrow}t'_{1}\vdash^{\Ind,\INT}_{\acta\equiv\acta'\mid\Sigma\mid}t'_{0}\stackrel{\acta'}{\Longrightarrow}t'_{1}$,
if we show $t_{0}=t'_{0}\,t_{1}=t'_{1}\,t_{0}\stackrel{\acta}{\Longrightarrow}t_{1}\vdash^{\Ind,\INT}_{\Sigma}t'_{0}\stackrel{\acta}{\Longrightarrow}t'_{1}$,
all that remains is to apply $(\Cut)$, $(\wedge_{L})$, and $({\to}_{R})$.
Therefore, we will prove this by induction on the complexity of $\acta$.
\begin{proofcases}
\item[$\acta=\empact$] By $({\empact}_{L})$.
\item[$\acta=\acta_{0}\cupact\acta_{1}$]
From the recursion hypothesis
$t_{0}=t'_{0}\,t_{1}=t'_{1}\,t_{0}\stackrel{\acta_{i}}{\Longrightarrow}t_{1}\vdash^{\Ind,\INT}_{\Sigma}t'_{0}\stackrel{\acta_{i}}{\Longrightarrow}t'_{1}\,(i\in2)$.
By applying $({\cupact}_{R})$ to each of these once, we get
$t_{0}=t'_{0}\,t_{1}=t'_{1}\,t_{0}\stackrel{\acta_{i}}{\Longrightarrow}t_{1}\vdash^{\Ind,\INT}_{\Sigma}t'_{0}\stackrel{\acta_{0}\cupact\acta_{1}}{\Longrightarrow}t'_{1}\,(i\in2)$.
By combining these with $({\cupact}_{L})$, we get
$t_{0}=t'_{0}\,t_{1}=t'_{1}\,t_{0}\stackrel{\acta_{0}\cupact\acta_{1}}{\Longrightarrow}t_{1}\vdash^{\Ind,\INT}_{\Sigma}t'_{0}\stackrel{\acta_{0}\cupact\acta_{1}}{\Longrightarrow}t'_{1}$.
\item[$\acta=\acta_{0}\impact\acta_{1}$]
From the recursion hypothesis
$t_{0}=t'_{0}\,t_{1}=t'_{1}\,t'_{0}\stackrel{\acta_{0}}{\Longrightarrow}t'_{1}\vdash^{\Ind,\INT}_{\Sigma}t_{0}\stackrel{\acta_{0}}{\Longrightarrow}t_{1}$, and
$t_{0}=t'_{0}\,t_{1}=t'_{1}\,t_{0}\stackrel{\acta_{1}}{\Longrightarrow}t_{1}\vdash^{\Ind,\INT}_{\Sigma}t'_{0}\stackrel{\acta_{1}}{\Longrightarrow}t'_{1}$ hold true.
By combining these with $({\impact}_{L})$, we get
$t_{0}=t'_{0}\,t_{1}=t'_{1}\,t_{0}\stackrel{\acta_{0}\impact\acta_{1}}{\Longrightarrow}t_{1}\,t'_{0}\stackrel{\acta_{0}}{\Longrightarrow}t'_{1}\vdash^{\Ind,\INT}_{\Sigma}t'_{0}\stackrel{\acta_{1}}{\Longrightarrow}t'_{1}$.
By applying $({\impact}_{R})$ once, we get $t_{0}=t'_{0}\,t_{1}=t'_{1}\,t_{0}\stackrel{\acta_{0}\impact\acta_{1}}{\Longrightarrow}t_{1}\vdash^{\Ind,\INT}_{\Sigma}t'_{0}\stackrel{\acta_{0}\impact\acta_{1}}{\Longrightarrow}t'_{1}$.
\item[$\acta=\one$]
Since $t_{0}=t'_{0}\,t_{1}=t'_{1}\,t_{0}=t_{1}\vdash^{b}_{\Sigma}t'_{0}=t'_{1}$,
by $({\one}_{L})$, $({\one}_{R})$, we get
$t_{0}=t'_{0}\,t_{1}=t'_{1}\,t_{0}\stackrel{\one}{\Longrightarrow}t_{1}\vdash^{\Ind,\INT}_{\Sigma}t'_{0}\stackrel{\one}{\Longrightarrow}t'_{1}$.
\item[$\acta=\acta_{0}\comp\acta_{1}$]
From the recursion hypothesis
$t_{0}=t'_{0}\,t_{0}\stackrel{\acta_{0}}{\Longrightarrow}x\vdash^{\Ind,\INT}_{\Sigma[x]}t'_{0}\stackrel{\acta_{0}}{\Longrightarrow}x$, and
$t_{1}=t'_{1}\,x\stackrel{\acta_{1}}{\Longrightarrow}t_{1}\vdash^{\Ind,\INT}_{\Sigma[x]}x\stackrel{\acta_{1}}{\Longrightarrow}t'_{1}$ hold true.
By $({\comp}_{R})$,
$t_{0}=t'_{0}\,t_{1}=t'_{1}\,t_{0}\stackrel{\acta_{0}}{\Longrightarrow}x\,x\stackrel{\acta_{1}}{\Longrightarrow}t_{1}\vdash^{\Ind,\INT}_{\Sigma[x]}t'_{0}\stackrel{\acta_{0}\comp\acta_{1}}{\Longrightarrow}t'_{1}$ holds true.
By $({\comp}_{L})$, we get
$t_{0}=t'_{0}\,t_{1}=t'_{1}\,t_{0}\stackrel{\acta_{0}\comp\acta_{1}}{\Longrightarrow}t_{1}\vdash^{\Ind,\INT}_{\Sigma}t'_{0}\stackrel{\acta_{0}\comp\acta_{1}}{\Longrightarrow}t'_{1}$.
\item[$\acta=\acta_{0}\resi\acta_{1}$]
From the recursion hypothesis
$t_{0}=t'_{0}\,x\stackrel{\acta_{0}}{\Longrightarrow}t_{0}\vdash^{\Ind,\INT}_{\Sigma[x]}x\stackrel{\acta_{0}}{\Longrightarrow}t'_{0}$, and
$t_{1}=t'_{1}\,x\stackrel{\acta_{1}}{\Longrightarrow}t_{1}\vdash^{\Ind,\INT}_{\Sigma[x]}x\stackrel{\acta_{1}}{\Longrightarrow}t'_{1}$ hold true.
By $({\resi}_{R})$,
$t_{0}=t'_{0}\,t_{1}=t'_{1}\,x\stackrel{\acta_{0}}{\Longrightarrow}t_{0}\,x\stackrel{\acta_{1}}{\Longrightarrow}t_{1}\vdash^{\Ind,\INT}_{\Sigma[x]}t'_{0}\stackrel{\acta_{0}\resi\acta_{1}}{\Longrightarrow}t'_{1}$ holds true.
By $({\resi}_{L})$, we get
$t_{0}=t'_{0}\,t_{1}=t'_{1}\,t_{0}\stackrel{\acta_{0}\resi\acta_{1}}{\Longrightarrow}t_{1}\vdash^{\Ind,\INT}_{\Sigma}t'_{0}\stackrel{\acta_{0}\resi\acta_{1}}{\Longrightarrow}t'_{1}$.
\item[$\acta=\acta_{0}^{*}$]
First since $x=w\vdash^{b}_{\Sigma[wxyz]}w=x$, by $(\Ind_{R}^{0})$ we get $x=w\vdash^{\Ind,\INT}_{\Sigma[wxyz]}w\stackrel{\acta_{0}^{*}}{\Longrightarrow}x$.
By $(\Init_{\acta_{0}^{*}})$, $w'\stackrel{\acta_{0}^{*}}{\Longrightarrow}x'\vdash^{\Ind,\INT}_{\Sigma[wxyz][w'x'y'z']}w'\stackrel{\acta_{0}^{*}}{\Longrightarrow}x'$ holds true.
From the recursion hypothesis and $(\Init)$ for equation, we get $x'\stackrel{\acta_{0}}{\Longrightarrow}z'\vdash^{\Ind,\INT}_{\Sigma[wxyz][w'x'y'z']}x'\stackrel{\acta_{0}}{\Longrightarrow}z'$.
By applying $(\Ind_{R}^{+x'})$ to them we get $w'\stackrel{\acta_{0}^{*}}{\Longrightarrow}x'\,x'\stackrel{\acta_{0}}{\Longrightarrow}z'\vdash^{\Ind,\INT}_{\Sigma[wxyz][w'x'y'z']}w'\stackrel{\acta_{0}^{*}}{\Longrightarrow}z'$.
By $(\Init_{\acta_{0}^{*}})$, we get $w\stackrel{\acta_{0}^{*}}{\Longrightarrow}z\vdash^{\Ind,\INT}_{\Sigma[wxyz]}w\stackrel{\acta_{0}^{*}}{\Longrightarrow}z$.
\\
{\small
$x=w\vdash^{\Ind,\INT}_{\Sigma[wxyz]}w\stackrel{\acta_{0}^{*}}{\Rightarrow}x$,
$w'\stackrel{\acta_{0}^{*}}{\Rightarrow}x'\,x'\stackrel{\acta_{0}}{\Rightarrow}z'\vdash^{\Ind,\INT}_{\Sigma[wxyz][w'x'y'z']}w'\stackrel{\acta_{0}^{*}}{\Rightarrow}z'$,
$w\stackrel{\acta_{0}^{*}}{\Rightarrow}z\vdash^{\Ind,\INT}_{\Sigma[wxyz]}w\stackrel{\acta_{0}^{*}}{\Rightarrow}z$.
}
\\
By applying $(\Ind_{L}^{+t\,\acta_{0}^{*}})$ to the above three, we obtain $x=w\,x\stackrel{\acta_{0}^{*}}{\Longrightarrow}z\vdash^{\Ind,\INT}_{\Sigma[wxyz]}w\stackrel{\acta_{0}^{*}}{\Longrightarrow}z$.
{\par}
Similarly, by using $(\Ind_{R}^{-})$ and the above result, we get
\\
$y=z\vdash^{\Ind,\INT}_{\Sigma[wxyz]}y\stackrel{\acta_{0}^{*}}{\Longrightarrow}z$
$\Space$
$x'\stackrel{\acta_{0}}{\Longrightarrow}y'\,y'\stackrel{\acta_{0}^{*}}{\Longrightarrow}z'\vdash^{\Ind,\INT}_{\Sigma[wxyz][w'x'y'z']}x'\stackrel{\acta_{0}^{*}}{\Longrightarrow}z'$
$\Space$
$x=w\,x\stackrel{\acta_{0}^{*}}{\Longrightarrow}z\vdash^{\Ind,\INT}_{\Sigma[wxyz]}w\stackrel{\acta_{0}^{*}}{\Longrightarrow}z$
\\
By applying $(\Ind_{L}^{-z\,\acta_{0}^{*}})$ to the above three, we obtain $x=w\,y=z\,x\stackrel{\acta_{0}^{*}}{\Longrightarrow}y\vdash^{\Ind,\INT}_{\Sigma[wxyz]}w\stackrel{\acta_{0}^{*}}{\Longrightarrow}z$.
{\par}
Finally, we translate this process to obtain $t_{0}=t'_{0}\,t_{1}=t'_{1}\,t_{0}\stackrel{\acta_{0}^{*}}{\Longrightarrow}t_{1}\vdash^{\Ind,\INT}_{\Sigma}t'_{0}\stackrel{\acta_{0}^{*}}{\Longrightarrow}t'_{1}$.
\end{proofcases}
\item[order and monotonicity]
Using $(\Init)$ and $(\Cut)$, it is immediately clear that this is an order.
Therefore, we will only check for monotonicity in part.
\begin{proofcases}
\item[$\cupact$]
$(\acta_{i}\leqq\acta'_{i})\,x\stackrel{\acta_{i}}{\Longrightarrow}y\vdash^{\Ind,\INT}_{\Sigma[x,y]}x\stackrel{\acta'_{i}}{\Longrightarrow}y\,(i\in2)$ holds true.
By applying $({\cupact}_{R})$ to both, we get
$(\acta_{i}\leqq\acta'_{i})\,x\stackrel{\acta_{i}}{\Longrightarrow}y\vdash^{\Ind,\INT}_{\Sigma[x,y]}x\stackrel{\acta'_{0}\cupact\acta'_{1}}{\Longrightarrow}y\,(i\in2)$.
By combining these with $({\cupact}_{L})$, we get
$(\acta_{0}\leqq\acta'_{0})\,(\acta_{1}\leqq\acta'_{1})\,x\stackrel{\acta_{0}\cupact\acta_{1}}{\Longrightarrow}y\vdash^{\Ind,\INT}_{\Sigma[x,y]}x\stackrel{\acta'_{0}\cupact\acta'_{1}}{\Longrightarrow}y$.
Then, by applying $({\to}_{R})$ and $({\forall}_{R})$ several times, we can obtain
$\vdash^{\Ind,\INT}_{\Sigma}(\acta_{0}\leqq\acta'_{0})\wedge(\acta_{1}\leqq\acta'_{1})\to(\acta_{0}\cupact\acta_{1})\leqq(\acta'_{0}\cupact\acta'_{1})$.
\item[$\comp$]
$(\acta_{0}\leqq\acta'_{0})\,x\stackrel{\acta_{0}}{\Longrightarrow}w\vdash^{\Ind,\INT}_{\Sigma[w,x,y]}x\stackrel{\acta'_{0}}{\Longrightarrow}w$, and
$(\acta_{1}\leqq\acta'_{1})\,w\stackrel{\acta_{1}}{\Longrightarrow}y\vdash^{\Ind,\INT}_{\Sigma[w,x,y]}w\stackrel{\acta'_{1}}{\Longrightarrow}y$ hold true.
By combining these with $({\comp}_{R})$, we get
$(\acta_{0}\leqq\acta'_{0})\,(\acta_{1}\leqq\acta'_{1})\,
x\stackrel{\acta_{0}}{\Longrightarrow}w\,w\stackrel{\acta'_{0}}{\Longrightarrow}y
\vdash^{\Ind,\INT}_{\Sigma[w,x,y]}x\stackrel{\acta'_{0}\comp\acta'_{1}}{\Longrightarrow}y$.
By applying $({\comp}_{L})$, we get
$(\acta_{0}\leqq\acta'_{0})\,(\acta_{1}\leqq\acta'_{1})\,x\stackrel{\acta_{0}\comp\acta_{1}}{\Longrightarrow}y\vdash^{\Ind,\INT}_{\Sigma[x,y]}x\stackrel{\acta'_{0}\comp\acta'_{1}}{\Longrightarrow}y$.
Then, by applying $({\to}_{R})$ and $({\forall}_{R})$ several times, we can obtain
$\vdash^{\Ind,\INT}_{\Sigma}(\acta_{0}\leqq\acta'_{0})\wedge(\acta_{1}\leqq\acta'_{1})\to(\acta_{0}\comp\acta_{1})\leqq(\acta'_{0}\comp\acta'_{1})$.
\item[$\resi$]
$(\acta_{0}\leqq\acta'_{0})\,w\stackrel{\acta_{0}}{\Longrightarrow}x\vdash^{\Ind,\INT}_{\Sigma[w,x,y]}w\stackrel{\acta'_{0}}{\Longrightarrow}x$, and
$(\acta_{1}\leqq\acta'_{1})\,w\stackrel{\acta_{1}}{\Longrightarrow}y\vdash^{\Ind,\INT}_{\Sigma[w,x,y]}w\stackrel{\acta'_{1}}{\Longrightarrow}y$ hold true.
By combining these with $({\resi}_{R})$, we get
$(\acta_{0}\leqq\acta'_{0})\,(\acta_{1}\leqq\acta'_{1})\,
w\stackrel{\acta_{0}}{\Longrightarrow}x\,w\stackrel{\acta'_{0}}{\Longrightarrow}y
\vdash^{\Ind,\INT}_{\Sigma[w,x,y]}x\stackrel{\acta'_{0}\resi\acta'_{1}}{\Longrightarrow}y$.
By applying $({\resi}_{L})$, we get
$(\acta_{0}\leqq\acta'_{0})\,(\acta_{1}\leqq\acta'_{1})\,x\stackrel{\acta_{0}\resi\acta_{1}}{\Longrightarrow}y\vdash^{\Ind,\INT}_{\Sigma[x,y]}x\stackrel{\acta'_{0}\resi\acta'_{1}}{\Longrightarrow}y$.
Then, by applying $({\to}_{R})$ and $({\forall}_{R})$ several times, we can obtain
$\vdash^{\Ind,\INT}_{\Sigma}(\acta_{0}\leqq\acta'_{0})\wedge(\acta_{1}\leqq\acta'_{1})\to(\acta_{0}\resi\acta_{1})\leqq(\acta'_{0}\resi\acta'_{1})$.
\item[$\impact$]
$(\acta'_{0}\leqq\acta_{0})\,x\stackrel{\acta'_{0}}{\Longrightarrow}y\vdash^{\Ind,\INT}_{\Sigma[x,y]}x\stackrel{\acta_{0}}{\Longrightarrow}y$, and
$(\acta_{1}\leqq\acta'_{1})\,x\stackrel{\acta_{1}}{\Longrightarrow}y\vdash^{\Ind,\INT}_{\Sigma[x,y]}x\stackrel{\acta'_{1}}{\Longrightarrow}y$ hold true.
By combining these with $({\impact}_{L})$, we get
$(\acta'_{0}\leqq\acta_{0})\,(\acta_{1}\leqq\acta'_{1})\,x\stackrel{\acta_{0}\impact\acta_{1}}{\Longrightarrow}y\,x\stackrel{\acta'_{0}}{\Longrightarrow}y\vdash^{\Ind,\INT}_{\Sigma[x,y]}x\stackrel{\acta'_{1}}{\Longrightarrow}y$.
By applying $({\impact}_{R})$, we get
$(\acta'_{0}\leqq\acta_{0})\,(\acta_{1}\leqq\acta'_{1})\,x\stackrel{\acta_{0}\impact\acta_{1}}{\Longrightarrow}y\vdash^{\Ind,\INT}_{\Sigma[x,y]}x\stackrel{\acta'_{0}\impact\acta'_{1}}{\Longrightarrow}y$.
Then, by applying $({\to}_{R})$ and $({\forall}_{R})$ several times, we can obtain
$\vdash^{\Ind,\INT}_{\Sigma}(\acta'_{0}\leqq\acta_{0})\wedge(\acta_{1}\leqq\acta'_{1})\to(\acta_{0}\impact\acta_{1})\leqq(\acta'_{0}\impact\acta'_{1})$.
\item[$*$]
$(\acta_{0}\leqq\acta'_{0})\,x'\stackrel{\acta_{0}}{\Longrightarrow}y'\vdash^{\Ind,\INT}_{\Sigma[xy][w'x'y']}x'\stackrel{\acta'_{0}}{\Longrightarrow}y'$ holds true.
From $(\Init_{\acta^{*}})$, $w'\stackrel{{\acta'}_{0}^{*}}{\Longrightarrow}x'\vdash^{\Ind,\INT}_{\Sigma[xy][w'x'y']}w'\stackrel{{\acta'}_{0}^{*}}{\Longrightarrow}x'$ holds true.
By combining these with $(\Ind_{R}^{+})$ we get
$(\acta_{0}\leqq\acta'_{0})\,w'\stackrel{{\acta'}_{0}^{*}}{\Longrightarrow}x'\,x'\stackrel{\acta_{0}}{\Longrightarrow}y'\vdash^{\Ind,\INT}_{\Sigma[xy][w'x'y']}x'\stackrel{{\acta'}_{0}^{*}}{\Longrightarrow}y'$.
From $(\Ind_{R}^{0})$,
$\vdash^{\Ind,\INT}_{\Sigma[xy]}x\stackrel{{\acta'}_{0}^{*}}{\Longrightarrow}x$ holds true.
From $(\Init_{\acta^{*}})$,
$x\stackrel{{\acta'}_{0}^{*}}{\Longrightarrow}y\vdash^{\Ind,\INT}_{\Sigma[xy]}x\stackrel{{\acta'}_{0}^{*}}{\Longrightarrow}y$ holds true.
By combining these three with $(\Ind_{L}^{+})$, we get
$(\acta_{0}\leqq\acta'_{0})\,x\stackrel{\acta_{0}^{*}}{\Longrightarrow}y\vdash^{\Ind,\INT}_{\Sigma[xy]}x\stackrel{{\acta'}_{0}^{*}}{\Longrightarrow}y$.
Then, by applying $({\to}_{R})$ and $({\forall}_{R})$ several times, we can obtain
$\vdash^{\Ind,\INT}_{\Sigma}(\acta'_{0}\leqq\acta_{0})\to(\acta_{0}^{*})\leqq({\acta'}_{0}^{*})$.
\end{proofcases}
\item[Kleene algebra]\
\begin{proofcases}
\item[proof of axioms] We only check $\one\leqq\acta^{*}$, $\acta\comp\acta^{*}\leqq\acta^{*}$, $\acta_{1}\comp\acta_{2}\leqq\acta_{1}\to\acta_{1}\comp\acta_{2}^{*}\leqq\acta_{1}$.
\begin{proofcases}
\item[$\one\leqq\acta^{*}$]
\begin{proofsteps}{25.6em}
\label{ps:R0:1}
$x=y\vdash^{\Ind,\INT}_{\Sigma[x,y]}x=y$
&by $(\Iy)$
\\ \label{ps:R0:2}
$x\stackrel{\one}{\Longrightarrow}y\vdash^{\Ind,\INT}_{\Sigma[x,y]}x\stackrel{\acta^{*}}{\Longrightarrow}y$
&by \ref{ps:R0:1}, $({\one}_L)$, and $(\Ind_{R}^{0})$
\\ \label{ps:R0:3}
$\vdash^{\Ind,\INT}_{\Sigma}\Forall{x,y}x\stackrel{\one}{\Longrightarrow}y\to x\stackrel{\acta^{*}}{\Longrightarrow}y$
&by \ref{ps:R0:2}, $({\to}_R)$, and $(\forall_{R})$
\end{proofsteps}
\item[$\acta^{*}\comp\acta\leqq\acta^{*}$]
\begin{proofsteps}{25.6em}
\label{ps:R+:1}
$x\stackrel{\acta^{*}}{\Longrightarrow}y\vdash^{\Ind,\INT}_{\Sigma[x,y,z]}x\stackrel{\acta^{*}}{\Longrightarrow}y$
&by $(\Init)$
\\ \label{ps:R+:2}
$y\stackrel{\acta}{\Longrightarrow}z\vdash^{\Ind,\INT}_{\Sigma[x,y,z]}y\stackrel{\acta}{\Longrightarrow}z$
&by $(\Init)$
\\ \label{ps:R+:3}
$x\stackrel{\acta^{*}}{\Longrightarrow}y\,y\stackrel{\acta}{\Longrightarrow}z\vdash^{\Ind,\INT}_{\Sigma[x,y,z]}x\stackrel{\acta^{*}}{\Longrightarrow}z$
&by \ref{ps:R+:1}, \ref{ps:R+:2}, and $(\Ind_{R}^{+})$
\\ \label{ps:R+:4}
$x\stackrel{\acta^{*}\comp\acta}{\Longrightarrow}z\vdash^{\Ind,\INT}_{\Sigma[x,z]}x\stackrel{\acta^{*}}{\Longrightarrow}z$
&by \ref{ps:R+:3}, and $(\comp_{L})$
\\ \label{ps:R+:5}
$\vdash^{\Ind,\INT}_{\Sigma}\Forall{x,z}x\stackrel{\acta^{*}\comp\acta}{\Longrightarrow}z\to x\stackrel{\acta^{*}}{\Longrightarrow}z$
&by \ref{ps:R+:4}, $({\to}_R)$, and $(\forall_{R})$
\end{proofsteps}
\item[$\acta_{1}\comp\acta_{2}\leqq\acta_{1}\to\acta_{1}\comp\acta_{2}^{*}\leqq\acta_{1}$]
\begin{proofsteps}{28.0em}
\label{ps:L+:-10}
$x'\stackrel{\acta_{1}}{\Longrightarrow}z'
\vdash^{\Ind,\INT}_{\Sigma[xyz][x'y'z']}
x'\stackrel{\acta_{1}}{\Longrightarrow}z'$
&by $(\Init)$
\\ \label{ps:L+:-9}
$z'\stackrel{\acta_{2}}{\Longrightarrow}y'
\vdash^{\Ind,\INT}_{\Sigma[xyz][x'y'z']}
z'\stackrel{\acta_{2}}{\Longrightarrow}y'$
&by $(\Init)$
\\ \label{ps:L+:-8}
$x'\stackrel{\acta_{1}}{\Longrightarrow}z'\,
z'\stackrel{\acta_{2}}{\Longrightarrow}y'
\vdash^{\Ind,\INT}_{\Sigma[xyz][x'y'z']}
x'\stackrel{\acta_{1}\comp\acta_{2}}{\Longrightarrow}y'$
&by \ref{ps:L+:-10}, \ref{ps:L+:-9}, and $(\comp_{R})$
\\ \label{ps:L+:-7}
$x'\stackrel{\acta_{1}}{\Longrightarrow}y'
\vdash^{\Ind,\INT}_{\Sigma[xyz][x'y'z']}
x'\stackrel{\acta_{1}}{\Longrightarrow}y'$
&by $(\Init)$
\\ \label{ps:L+:-6}
$x\stackrel{\acta_{1}}{\Longrightarrow}z
\vdash^{\Ind,\INT}_{\Sigma[x,y,z]}
x\stackrel{\acta_{1}}{\Longrightarrow}z$
&by $(\Init)$
\\ \label{ps:L+:-5}
$\Forall{x,y}(x\stackrel{\acta_{1}\comp\acta_{2}}{\Longrightarrow}y\to x\stackrel{\acta_{1}}{\Longrightarrow}y)$\par
$x'\stackrel{\acta_{1}}{\Longrightarrow}z'\,
z'\stackrel{\acta_{2}}{\Longrightarrow}y'
\vdash^{\Ind,\INT}_{\Sigma[xyz][x'y'z']}
x'\stackrel{\acta_{1}}{\Longrightarrow}y'$
&by \ref{ps:L+:-8}, \ref{ps:L+:-7}, $({\to}_{L})$, and $(\forall_{L})$
\\ \label{ps:L+:-4}
$x\stackrel{\acta_{1}}{\Longrightarrow}y
\vdash^{\Ind,\INT}_{\Sigma[x,y,z]}
x\stackrel{\acta_{1}}{\Longrightarrow}y$
&by $(\Init)$
\\ \label{ps:L+:-3}
$\Forall{x,y}(x\stackrel{\acta_{1}\comp\acta_{2}}{\Longrightarrow}y\to x\stackrel{\acta_{1}}{\Longrightarrow}y)$\par
$x\stackrel{\acta_{1}}{\Longrightarrow}z\,
z\stackrel{\acta_{2}^{*}}{\Longrightarrow}y
\vdash^{\Ind,\INT}_{\Sigma[x,y,z]}
x\stackrel{\acta_{1}}{\Longrightarrow}y$
&by \ref{ps:L+:-6}, \ref{ps:L+:-5}, \ref{ps:L+:-4}, and $(\Ind_{L}^{+})$
\\ \label{ps:L+:-2}
$\Forall{x,y}(x\stackrel{\acta_{1}\comp\acta_{2}}{\Longrightarrow}y\to x\stackrel{\acta_{1}}{\Longrightarrow}y)\,
x\stackrel{\acta_{1}\comp\acta_{2}^{*}}{\Longrightarrow}y
\vdash^{\Ind,\INT}_{\Sigma[x,y]}
x\stackrel{\acta_{1}}{\Longrightarrow}y$
&by \ref{ps:L+:-3}, and $(\comp_{L})$
\\ \label{ps:L+:-1}
$\Forall{x,y}x\stackrel{\acta_{1}\comp\acta_{2}}{\Longrightarrow}y\to x\stackrel{\acta_{1}}{\Longrightarrow}y
\vdash^{\Ind,\INT}_{\Sigma}
\Forall{x,y}x\stackrel{\acta_{1}\comp\acta_{2}^{*}}{\Longrightarrow}y\to x\stackrel{\acta_{1}}{\Longrightarrow}y$
&by \ref{ps:L+:-2}, $({\to}_R)$, and $(\forall_{R})$
\\ \label{ps:L+:0}
$\vdash^{\Ind,\INT}_{\Sigma}
\Forall{x,y}x\stackrel{\acta_{1}\comp\acta_{2}}{\Longrightarrow}y\to x\stackrel{\acta_{1}}{\Longrightarrow}y\to
\Forall{x,y}x\stackrel{\acta_{1}\comp\acta_{2}^{*}}{\Longrightarrow}y\to x\stackrel{\acta_{1}}{\Longrightarrow}y$
&by \ref{ps:L+:-1}, and $({\to}_R)$
\end{proofsteps}
\end{proofcases}
\item[sufficiency of axioms] We only check $(\Ind_{R}^{0})$, $(\Ind_{R}^{+})$, $(\Ind_{L}^{+})$.
\begin{proofcases}
\item[$\Ind_{R}^{0}$]
Let $\vdash^{\Ind}_{\Gamma\mid\Sigma\mid\Delta}t_{1}=t_{2}$.
We show $\vdash^{\Ind}_{\Gamma\mid\Sigma\mid\Delta}t_{1}\stackrel{\acta^{*}}{\Longrightarrow}t_{2}$.
\begin{proofsteps}{25.6em}
\label{ps:R0':-4}
$\vdash^{\Ind}_{\Gamma\mid\Sigma\mid\Delta}t_{1}=t_{2}$
&by the assumption
\\ \label{ps:R0':-3}
$\vdash^{\Ind}_{\Gamma\mid\Sigma\mid\Delta}t_{1}\stackrel{\one}{\Longrightarrow}t_{2}$
&by \ref{ps:R0':-4}, and $({\one}_{R})$
\\ \label{ps:R0':-2}
$t_{1}\stackrel{\acta^{*}}{\Longrightarrow}t_{2}
\vdash^{\Ind}_{\Gamma\mid\Sigma\mid\Delta}t_{1}\stackrel{\acta^{*}}{\Longrightarrow}t_{2}$
&by $(\Init_{\acta^{*}})$
\\ \label{ps:R0':-1}
$\Forall{x,y}x\stackrel{\one}{\Longrightarrow}y\to x\stackrel{\acta^{*}}{\Longrightarrow}y
\vdash^{\Ind}_{\Gamma\mid\Sigma\mid\Delta}t_{1}\stackrel{\acta^{*}}{\Longrightarrow}t_{2}$
&by \ref{ps:R0':-3}, \ref{ps:R0':-2}, $({\to}_{L})$, and $(\forall_{L})$
\\ \label{ps:R0':0}
$\vdash^{\Ind}_{\Gamma\mid\Sigma\mid\Delta}t_{1}\stackrel{\acta^{*}}{\Longrightarrow}t_{2}$
&by \ref{ps:R0':-1}, and $(\KEL_{\one\leqq\acta^{*}})$
\end{proofsteps}
\item[$\Ind_{R}^{+}$]
Let $\vdash^{\Ind}_{\Gamma\mid\Sigma\mid\Delta}t_{1}\stackrel{\acta^{*}}{\Longrightarrow}t$ and
$\vdash^{\Ind}_{\Gamma\mid\Sigma\mid\Delta}t\stackrel{\acta}{\Longrightarrow}t_{2}$.
We show $\vdash^{\Ind}_{\Gamma\mid\Sigma\mid\Delta}t_{1}\stackrel{\acta^{*}}{\Longrightarrow}t_{2}$
\begin{proofsteps}{25.6em}
\label{ps:R+':-5}
$\vdash^{\Ind}_{\Gamma\mid\Sigma\mid\Delta}t_{1}\stackrel{\acta^{*}}{\Longrightarrow}t$
&by the assumption
\\ \label{ps:R+':-4}
$\vdash^{\Ind}_{\Gamma\mid\Sigma\mid\Delta}t\stackrel{\acta}{\Longrightarrow}t_{2}$
&by the assumption
\\ \label{ps:R+':-3}
$\vdash^{\Ind}_{\Gamma\mid\Sigma\mid\Delta}t_{1}\stackrel{\acta^{*}\comp\acta}{\Longrightarrow}t_{2}$
&by \ref{ps:R+':-5}, \ref{ps:R+':-4}, and $(\comp_{R})$
\\ \label{ps:R+':-2}
$t_{1}\stackrel{\acta^{*}}{\Longrightarrow}t_{2}
\vdash^{\Ind}_{\Gamma\mid\Sigma\mid\Delta}t_{1}\stackrel{\acta^{*}}{\Longrightarrow}t_{2}$
&by $(\Init_{\acta^{*}})$
\\ \label{ps:R+':-1}
$\Forall{x,z}x\stackrel{\acta^{*}\comp\acta}{\Longrightarrow}z\to x\stackrel{\acta^{*}}{\Longrightarrow}z
\vdash^{\Ind}_{\Gamma\mid\Sigma\mid\Delta}t_{1}\stackrel{\acta^{*}}{\Longrightarrow}t_{2}$
&by \ref{ps:R+':-3}, \ref{ps:R+':-2}, $({\to}_{L})$, and $(\forall_{L})$
\\ \label{ps:R+':0}
$\vdash^{\Ind}_{\Gamma\mid\Sigma\mid\Delta}t_{1}\stackrel{\acta^{*}}{\Longrightarrow}t_{2}$
&by \ref{ps:R+':-1}, and $(\KEL_{\acta^{*}\comp\acta\leqq\acta^{*}})$
\end{proofsteps}
\item[$\Ind_{L}^{+}$]
Let $\vdash^{\Ind}_{\Gamma\mid\Sigma\mid\Delta}t\stackrel{\acta_{1}}{\Longrightarrow}t_{1}$,
$x\stackrel{\acta_{1}}{\Longrightarrow}z\,z\stackrel{\acta_{2}}{\Longrightarrow}y\vdash^{\Ind}_{\Gamma\mid\Sigma[xyz]\mid\Delta}x\stackrel{\acta_{1}}{\Longrightarrow}y$, 
$t\stackrel{\acta_{1}}{\Longrightarrow}t_{2}\vdash^{\Ind}_{\Gamma\mid\Sigma\mid\Delta}$.
We show $t_{1}\stackrel{\acta_{2}^{*}}{\Longrightarrow}t_{2}\vdash^{\Ind}_{\Gamma\mid\Sigma\mid\Delta}$.
\begin{proofsteps}{25.6em}
\label{ps:L+':-9}
$x\stackrel{\acta_{1}}{\Longrightarrow}z\,z\stackrel{\acta_{2}}{\Longrightarrow}y
\vdash^{\Ind}_{\Gamma\mid\Sigma[xyz]\mid\Delta}
x\stackrel{\acta_{1}}{\Longrightarrow}y$
&by the assumption
\\ \label{ps:L+':-8}
$x\stackrel{\acta_{1}\comp\acta_{2}}{\Longrightarrow}y\vdash^{\Ind}_{\Gamma\mid\Sigma[xy]\mid\Delta}
x\stackrel{\acta_{1}}{\Longrightarrow}y$
&by \ref{ps:L+':-9}, and $(\comp_{L})$
\\ \label{ps:L+':-3}
$\vdash^{\Ind}_{\Gamma\mid\Sigma\mid\Delta}
\Forall{x,y}x\stackrel{\acta_{1}\comp\acta_{2}}{\Longrightarrow}y\to x\stackrel{\acta_{1}}{\Longrightarrow}y$
&by \ref{ps:L+':-8}, $({\to}_R)$, and $(\forall_{R})$
\\ \label{ps:L+':-7}
$\vdash^{\Ind}_{\Gamma\mid\Sigma\mid\Delta}
t\stackrel{\acta_{1}}{\Longrightarrow}t_{1}$
&by the assumption
\\ \label{ps:L+':-6}
$t_{1}\stackrel{\acta_{2}^{*}}{\Longrightarrow}t_{2}\vdash^{\Ind}_{\Gamma\mid\Sigma\mid\Delta}
t_{1}\stackrel{\acta_{2}^{*}}{\Longrightarrow}t_{2}$
&by $(\Init_{\acta_{2}^{*}})$
\\ \label{ps:L+':-5}
$t_{1}\stackrel{\acta_{2}^{*}}{\Longrightarrow}t_{2}\vdash^{\Ind}_{\Gamma\mid\Sigma\mid\Delta}
t\stackrel{\acta_{1}\comp\acta_{2}^{*}}{\Longrightarrow}t_{2}$
&by \ref{ps:L+':-7}, \ref{ps:L+':-6}, and $({\comp}_{R})$
\\ \label{ps:L+':-4}
$t\stackrel{\acta_{1}}{\Longrightarrow}t_{2}\vdash^{\Ind}_{\Gamma\mid\Sigma\mid\Delta}$
&by the assumption
\\ \label{ps:L+':-2}
$\Forall{x,y}(x\stackrel{\acta_{1}\comp\acta_{2}^{*}}{\Longrightarrow}y\to x\stackrel{\acta_{1}}{\Longrightarrow}y)\,
t_{1}\stackrel{\acta_{2}^{*}}{\Longrightarrow}t_{2}\vdash^{\Ind}_{\Gamma\mid\Sigma\mid\Delta}$
&by \ref{ps:L+':-5}, \ref{ps:L+':-4}, $({\to}_L)$, and $(\forall_{L})$
\\ \label{ps:L+':-1}
$\Forall{x,y}x\stackrel{\acta_{1}\comp\acta_{2}}{\Longrightarrow}y\to x\stackrel{\acta_{1}}{\Longrightarrow}y\to
\Forall{x,y}x\stackrel{\acta_{1}\comp\acta_{2}^{*}}{\Longrightarrow}y\to x\stackrel{\acta_{1}}{\Longrightarrow}y$
{\par}
$t_{1}\stackrel{\acta_{2}^{*}}{\Longrightarrow}t_{2}\vdash^{\Ind}_{\Gamma\mid\Sigma\mid\Delta}$
&by \ref{ps:L+':-3}, \ref{ps:L+':-2}, and $({\to}_L)$
\\ \label{ps:L+':0}
$t_{1}\stackrel{\acta_{2}^{*}}{\Longrightarrow}t_{2}\vdash^{\Ind}_{\Gamma\mid\Sigma\mid\Delta}$
&by \ref{ps:L+':-1}, and $(\KEL_{\acta_{1}\comp\acta_{2}\leqq\acta_{1}\to\acta_{1}\comp\acta_{2}^{*}\leqq\acta_{1}})$
\end{proofsteps}
In reality, we use $(\Modify)$ to rearrange the order of variables, but since elimination rules can be translated, it's okay to use $(\Modify)$ as well.
\end{proofcases}
The above discussion is also possible on $\vdash^{\Ind,\INT}$.
\end{proofcases}
\end{proofcases}
\end{proof}

\begin{proof}[Proof of Theorem~\ref{thm:Completeness-by-TAk}]\
\begin{proofcases}
\item[soundness]
From the definition, $\vDash^{\TA_{k}}$ satisfies the axioms of Kleene algebra.
Since $(\Cut)$ is also satisfied, the left elimination rule for these axioms holds true.
$(\Init)$ also holds true.
As stated in Lemma~\ref{lemma:IndNK}, the left elimination rule for axioms is equivalent to the rule of induction, so soundness holds true.
\item[completeness]
Let $\Sigma=\pos{S,F,L}\in\Sig$ and
$\Gamma,\Delta\subseteq\Sen(\Sigma)$.
Suppose $\Gamma\not\vdash^{\Ind}_{\Sigma}\Delta$.
We show $\Gamma\not\vDash^{\TA_{k}}_{\Sigma}\Delta$.
{\par}
Let $\alpha:=\card(\Sen(\Sigma))$.
We define $\BS\Sigma=\cup_{n\in\alpha}\BS\Sigma_{n}\,(\{\BS\Sigma_{n}\}_{n\in\alpha})$ as follows:
\begin{proofcases}
\item[$n=0$] $\BS\Sigma_{n}:=\{\Sigma\}$
\item[$n=n'+1$] $\BS\Sigma_{n}:=\BS\Sigma_{n'}\cup\{\Sigma^{\Dex}[X]\mid X\in\Sigma_{\Dex}\}$
\item[$n$ is limit] $\BS\Sigma_{n}:=\{\cup_{n'\in n}\BS\Sigma'_{n'}\mid \BS\Sigma'\in\Pi_{n'\in n}\BS\Sigma_{n'}\}$
\end{proofcases}
Prepare a choice function of bijections $\pos{\phi_{\Sigma'}:\card(\Sen(\Sigma'))\to\Sen(\Sigma')}_{\Sigma'\in\BS\Sigma}$.
Prepare a bijective $pair:\alpha\times\alpha\to\alpha$ that satisfies $pair(m,n)\geq m$.
We define $\pos{\Gamma_{n},\Sigma_{n},\Delta_{n}}_{n\in\alpha}$ that satisfies $\Gamma_{n}\not\vdash^{\Ind}_{\Sigma_{n}}\Delta_{n}\,(\text{ for each }n\in\alpha)$ as follows:
\begin{proofcases}
\item[$n=0$] $\pos{\Gamma_{n},\Sigma_{n},\Delta_{n}}=\pos{\Gamma,\Sigma,\Delta}$
\item[$n=n'+1$] Let $pair(l,m)=n'$.
\begin{proofcases}
\item[$\phi_{\Sigma_{l}}(m)\,\Gamma_{n'}\not\vdash^{\Ind}_{\Sigma_{n'}}\Delta_{n'}$]\
\begin{proofcases}
\item[$\phi_{\Sigma_{l}}(m)=\Exists{X}\psi$]
$\pos{\Gamma_{n},\Sigma_{n},\Delta_{n}}:=\pos{\{\psi\}\cup\Sigma^{\Dex}(X)(\{\phi_{\Sigma_{l}}(m)\}\cup\Gamma_{n'}),\Sigma_{n'}^{\Dex}[X],\Sigma^{\Dex}(X)(\Delta_{n'})}$
\\By $(\exists_{L})$, $\Gamma_{n}\not\vdash^{\Ind}_{\Sigma_{n}}\Delta_{n}$.
\item[else] $\pos{\Gamma_{n},\Sigma_{n},\Delta_{n}}:=\pos{\{\phi_{\Sigma_{l}}(m)\}\cup\Gamma_{n'},\Sigma_{n'},\Delta_{n'}}$
\end{proofcases}
\item[$\phi_{\Sigma_{l}}(m)\,\Gamma_{n'}\vdash^{\Ind}_{\Sigma_{n'}}\Delta_{n'}$] By $(\Init)$ and $\Gamma_{n'}\not\vdash^{\Ind}_{\Sigma_{n'}}\Delta_{n'}$, $\Gamma_{n'}\vdash^{\Ind}_{\Sigma_{n'}}\Delta_{n'}\,\phi_{\Sigma_{l}}(m)$
\begin{proofcases}
\item[$\phi_{\Sigma_{l}}(m)=\Forall{X}\psi$]
$\pos{\Gamma_{n},\Sigma_{n},\Delta_{n}}:=\pos{\Sigma^{\Dex}(X)(\Gamma_{n'}),\Sigma_{n'}^{\Dex}[X],\Sigma^{\Dex}(X)(\Delta_{n'}\cup\{\phi_{\Sigma_{l}}(m)\})\cup\{\psi\}}$
\\By $(\forall_{R})$, $\Gamma_{n}\not\vdash^{\Ind}_{\Sigma_{n}}\Delta_{n}$.
\item[else] $\pos{\Gamma_{n},\Sigma_{n},\Delta_{n}}:=\pos{\Gamma_{n'},\Sigma_{n'},\Delta_{n'}\cup\{\phi_{\Sigma_{l}}(m)\}}$
\end{proofcases}
\end{proofcases}
\item[$n$ is limit] $\pos{\Gamma_{n},\Sigma_{n},\Delta_{n}}:=\pos{\cup_{n'\in n}\Gamma_{n'},\cup_{n'\in n}\Sigma_{n'},\cup_{n'\in n}\Delta_{n'}}$
By the compactness of $\vdash^{\Ind}$, $\Gamma_{n}\not\vdash^{\Ind}_{\Sigma_{n}}\Delta_{n}$.
\end{proofcases}
Let $\pos{\Gamma_{\alpha},\Sigma_{\alpha},\Delta_{\alpha}}:=\pos{\cup_{n\in\alpha}\Gamma_{n},\cup_{n\in\alpha}\Sigma_{n},\cup_{n\in\alpha}\Delta_{n}}$.
We can define algebraic quotient model $\A'$ from $\Gamma_{\alpha}\cap\Sen^{b}(\Sigma_{\alpha})$.
{\par}
Since $\Gamma_{\alpha}$ is a maximally consistent set, by Lemma~\ref{lemma:IndNK}, $\Gamma_{\alpha}$ contains the axioms of Kleene algebra and congruence relations between actions.
Therefore, we can define the interpretation $\acta^{\A'}$ of action $\acta$ such that $[t_{0}]\stackrel{\acta^{\A'}}{\Longrightarrow}[t_{1}]\,:\Leftrightarrow\,t_{0}\stackrel{\acta}{\Longrightarrow}t_{1}\in\Gamma_{\alpha}$.
This does not lead to a contradiction. In other words, operators other than $*$ will be interpreted in the same way as before.
For example, in the case of $\acta_{0}\comp\acta_{1}$, we can say $t_{0}\stackrel{\acta_{0}\comp\acta_{1}}{\Longrightarrow}t_{1}\leftrightarrow\Exists{x}t_{0}\stackrel{\acta_{0}}{\Longrightarrow}x\wedge x\stackrel{\acta_{1}}{\Longrightarrow}t_{1}$.
Therefore, from the above construction, we can say $t_{0}\stackrel{\acta_{0}}{\Longrightarrow}x,x\stackrel{\acta_{1}}{\Longrightarrow}t_{1}\in\Gamma_{\alpha}$.
In other words, we can say that there exists an element $a$ such that $[t_{0}]\stackrel{\acta_{0}^{\A'}}{\Longrightarrow}a, a\stackrel{\acta_{1}^{\A'}}{\Longrightarrow}[t_{1}]$, so the interpretation of $(\comp)$ is the same.
Furthermore, since the congruence of actions is preserved even when using $*$, the operation $*_{ss}^{\A'}:\A'_{ss}\to\A'_{ss}$ on $\A'^{\A'}\mid\acta\in L_{\T(\Sigma_{\alpha}),ss}\}$ is naturally determined.
Adding this, $\A'$ becomes a model of $\TA_{k}$.
{\par}
Since $\Gamma_{\alpha}$ is a maximally consistent set with witness,
$\A'\models\gamma\,(\text{for each }\gamma\in\Gamma_{\alpha})$ and $\A'\not\models\delta\,(\text{for each }\delta\in\Delta_{\alpha})$
can be confirmed by induction on the inference rules and sentence structure.
{\par}
Finally, reducing this to $\Sigma$ gives us the model $\A :=\A\red_{\iota:\Sigma\to\Sigma_{\alpha}}$ of $\pos{\Gamma,\Sigma,\Delta}$.
From the satisfaction condition, we can say that $\A\models\gamma\,(\text{for each }\gamma\in\Gamma)$ and $\A\not\models\delta\,(\text{for each }\delta\in\Delta)$.
That is, we can say that $\Gamma\not\vDash^{\TA_{k}}_{\Sigma}\Delta$.
\end{proofcases}
\end{proof}

\section{Proof of the results presented in Section~\ref{sec:interpolation}}

\begin{proof}[Proof of Lemma~\ref{lemma:1234}]\
\begin{proofcases}
\item[$1\rightarrow2$]
For satisfiable theories $T_{i}\subseteq\Sen(\Sigma_{i})\,(i\in2)$, we take models such that $\A_{i}\models t$ for all $t\in T_{i}$.
Since "$\chi_{0}^{-1}(T_{0})=\chi_{1}^{-1}(T_{1})$ and these are complete theories",
"$\A_{0}\red_{\chi_{0}}$ and $\A_{1}\red_{\chi_{1}}$ are elementary equivalent" holds true.
{\par}
From $1$, there is $\A'\in|\Mod^{\TA_{k}}(\Sigma')|$ such that "$\A'\red_{\chi'_{i}}$ and $\A_{i}$ are elementary equivalent for each $i\in2$".
From elementary equivalence, $\A'\models t$ for each $t\in\cup_{i\in2}\chi'_{i}(T_{i})$, so it is satisfyable.
\item[$1\leftarrow2$]
Let $T_{\A}$ denote the set of all sentences that satisfy model $\A$.
If "$\A_{0}\red_{\chi_{0}}$ and $\A_{1}\red_{\chi_{1}}$ are elementary equivalent",
"$\chi_{0}^{-1}(T_{\A_{0}})=\chi_{1}^{-1}(T_{\A_{1}})$ and these are complete theories".
{\par}
From $2$, $\cup_{i\in2}\chi'_{i}(T_{i})$ is satisfyable and has a model $\A'$.
Since $T_{\A_{i}}\,(i\in2)$ are complete, $\A'\red_{\chi'_{i}}$ and $\A_{i}$ are elementary equivalent for each $i\in2$.
\item[$3\leftrightarrow4$]
Let $\Gamma_{i},\Delta_{i}\subseteq\Sen(\Sigma_{i})\,(i\in2)$.
We compare the two conditions.
\begin{itemize}
\item[$(3)$]
If "For each $\phi\in\Sen(\Sigma)$,
$\Gamma_{i}\not\vDash^{\TA_{k}}_{\Sigma_{i}}\Delta_{i}\,\chi_{i}(\phi)\text{ for each }i\in2$ or
$\chi_{i}(\phi)\,\Gamma_{i}\not\vDash^{\TA_{k}}_{\Sigma_{i}}\Delta_{i}\text{ for each }i\in2$",
\\ then
$\cup_{i\in 2}\chi'_{i}(\Gamma_{i})\not\vDash^{\TA_{k}}_{\Sigma'}\cup_{i\in 2}\chi'_{i}(\Delta_{i})$.
\item[$(4)$]
If "For each $\phi\in\Sen(\Sigma)$,
$\Gamma_{0}\not\vdash^{\Ind}_{\Sigma_{0}}\Delta_{0}\,\chi_{0}(\phi)$ or
$\chi_{1}(\phi)\,\Gamma_{1}\not\vdash^{\Ind}_{\Sigma_{1}}\Delta_{1}$",
\\ then
$\cup_{i\in 2}\chi'_{i}(\Gamma_{i})\not\vdash^{\Ind}_{\Sigma'}\cup_{i\in 2}\chi'_{i}(\Delta_{i})$.
\end{itemize}
Note that from the completeness theorem, $\vDash^{\TA_{k}}$ and $\vdash^{\Ind}$ are interchangeable, and the conclusion is the same, so we confirm that the premises are equal.
First, we show that $\Gamma_{i}\not\vdash_{\Sigma_{i}}\Delta_{i}\,(i\in2)$ holds true regardless of whether we assume premises $3$ or $4$.
\begin{itemize}
\item[$(3)$]
It is obvious from $\Sen(\Sigma)\neq\emptyset$.
\item[$(4)$]
If $\Gamma_{0}\not\vdash_{\Sigma_{0}}\Delta_{0}$ does not hold true, then
$\Gamma_{0}\not\vdash_{\Sigma_{0}}\Delta_{0}\,\bot$ does not hold true, so from the assumption,
$\bot\,\Gamma_{1}\not\vdash_{\Sigma_{1}}\Delta_{1}$ should hold true instead that is a contradiction.
If $\Gamma_{1}\not\vdash_{\Sigma_{1}}\Delta_{1}$ does not hold true, then
$\top\,\Gamma_{1}\not\vdash_{\Sigma_{1}}\Delta_{1}$ does not hold true, so from the assumption,
$\Gamma_{0}\not\vdash_{\Sigma_{0}}\Delta_{0}\,\top$ should hold true instead that is a contradiction.
\end{itemize}
If we abbreviate the assumption of $3$,
($\not\vDash\chi_{0}(\phi)$ and $\not\vDash\chi_{1}(\phi)$) or ($\chi_{0}(\phi)\not\vDash$ and $\chi_{1}(\phi)\not\vDash$).
By distributive law,
\[
(\not\vDash\chi_{0}(\phi) \text{ or } \chi_{0}(\phi)\not\vDash) \text{ and }
(\not\vDash\chi_{0}(\phi) \text{ or } \chi_{1}(\phi)\not\vDash) \text{ and }
(\not\vDash\chi_{1}(\phi) \text{ or } \chi_{0}(\phi)\not\vDash) \text{ and }
(\not\vDash\chi_{1}(\phi) \text{ or } \chi_{1}(\phi)\not\vDash)
\]
By $\Gamma_{i}\not\vdash_{\Sigma_{i}}\Delta_{i}\,(i\in2)$, the two outermost statements are tautologies, so
\[
(\not\vDash\chi_{0}(\phi) \text{ or } \chi_{1}(\phi)\not\vDash) \text{ and }
(\not\vDash\chi_{1}(\phi) \text{ or } \chi_{0}(\phi)\not\vDash)
\]
Using $\neg$, "$\not\vDash\chi_{0}(\phi)$ or $\chi_{1}(\phi)\not\vDash$ for all $\phi\in\Sen(\Sigma)$" is sufficient.
In other words, it is the same as the premise in $4$.
\item[$2\rightarrow3$ ($\Sigma$ is countable)]
Suppose that
$\Gamma_{i}\not\vDash^{\TA_{k}}_{\Sigma_{i}}\Delta_{i}\,\chi_{i}(\phi)\text{ for each }i\in2$ or
$\chi_{i}(\phi)\,\Gamma_{i}\not\vDash^{\TA_{k}}_{\Sigma_{i}}\Delta_{i}\text{ for each }i\in2$ holds true
for each $\phi\in\Sen(\Sigma)$.
Then for each $n\in\omega$ and $\phi\in\Sen(\Sigma)^{n}$, there is $v\in2^{n}$ such that
\begin{align*}
\Gamma_{v,i}\,\Gamma_{i}\not\vDash^{\TA_{k}}_{\Sigma_{i}}\Delta_{i}\,\Delta_{v,i}\text{ for each }i\in2
\end{align*}
where $\Gamma_{v,i}:=\chi_{i}\circ\phi\circ v^{-1}(0)$, and $\Delta_{v,i}:=\chi_{i}\circ\phi\circ v^{-1}(1)$.
In other words, if we assign truth or falsity appropriately, it can be satisfied in both cases.
Assume it is not true and show a contradiction.
\begin{itemize}
\item[]
In this case, for any assignment $v$, there exists a function $i\in2$ such that $\Gamma_{v,i}\,\Gamma_{i}\vDash^{\TA_{k}}_{\Sigma_{i}}\Delta_{i}\,\Delta_{v,i}$.
In other words,
$\chi_{0}(\wedge_{j\in n}\neg^{v(j)}\phi(j))\,\Gamma_{0}\vDash^{\TA_{k}}_{\Sigma_{0}}\Delta_{0}$ or
$\chi_{1}(\wedge_{j\in n}\neg^{v(j)}\phi(j))\,\Gamma_{1}\vDash^{\TA_{k}}_{\Sigma_{1}}\Delta_{1}$ holds true.
To put it another way, this means that
$\chi_{0}(\vee_{v\in A}\wedge_{j\in n}\neg^{v(j)}\phi(j))\,\Gamma_{0}\vDash^{\TA_{k}}_{\Sigma_{0}}\Delta_{0}$ and
$\chi_{1}(\vee_{v\in A^{c}}\wedge_{j\in n}\neg^{v(j)}\phi(j))\,\Gamma_{1}\vDash^{\TA_{k}}_{\Sigma_{1}}\Delta_{1}$
hold true if we choose $A \subseteq 2^{n}$ appropriately.
Note that
$\vDash^{\TA_{k}}_{\Sigma_{0}}\chi_{0}(\vee_{v\in 2^{n}}\wedge_{j\in n}\neg^{v(j)}\phi(j))$
we can say that
$\Gamma_{0}\vDash^{\TA_{k}}_{\Sigma_{0}}\Delta_{0}\,\chi_{0}(\vee_{v\in A^{c}}\wedge_{j\in n}\neg^{v(j)}\phi(j))$.
This contradicts the "prerequisite" for $4$ that we wrote earlier in the verification of $(3\leftrightarrow4)$.
Since the "prerequisites" for $3$ and $4$ were equivalent, this contradicts the assumptions of this discussion.
\end{itemize}
Therefore, in the finite case, satisfiability can be maintained by carefully distributing truth values.
In the countable case, it holds true by Konig's Lemma.
{\par}
In other words, if the signature $\Sigma$ is countable, then $\Gamma_{i},\Delta_{i}\,(i\in2)$ can be extended to a theory $\hat{\Gamma}_{i},\hat{\Delta}_{i}\,(i\in2)$ that determines the truth value of all elements of $\Sen(\Sigma)$ while preserving satisfiability.
By applying $2$ to $T_{i}:=\hat{\Gamma}_{i}\cup\{\neg\delta\mid\delta\in\hat{\Delta}_{i}\}\,(i\in2)$,
we get $\cup_{i\in 2}\chi'_{i}(\Gamma_{i})\not\vDash^{\TA_{k}}_{\Sigma'}\cup_{i\in 2}\chi'_{i}(\Delta_{i})$.
\item[$2\leftarrow3$]
Let $T_{i}\subseteq\Sen(\Sigma_{i})\,(i\in2)$ are satisfiable theories such that
$\chi_{0}^{-1}(T_{0})=\chi_{1}^{-1}(T_{1})$ and these are complete theories.
Since $T_{i}\,(i\in2)$ knows the truth value of all sentences on $\Sigma$, if we set $\Gamma_{i}:=T_{i}$ and $\Delta_{i}=\emptyset$, these satisfy condition (3), and we get $\cup_{i\in 2}\chi'_{i}(\Gamma_{i})\not\vDash^{\TA_{k}}_{\Sigma'}\cup_{i\in 2}\chi'_{i}(\Delta_{i})$.
In other words, there exists a model in which all the conditions of $\cup_{i\in 2}\chi'_{i}(\Gamma_{i})$ are true and all the conditions of $\cup_{i\in 2}\chi'_{i}(\Delta_{i})$ are false.
Therefore, $\cup_{i\in2}\chi'_{i}(T_{i})$ is satisfyable.
\end{proofcases}
\end{proof}

\begin{proof}[Proof of Lemma~\ref{fact:exactness}]
By the commutativity of the square, if the above statement holds true, the below statement also holds.
Assume the above holds true.
For simplicity, we treat signatures as just a set of symbols.
We define the relation $R$ on $\Sigma_{0}\sqcup\Sigma_{1}:=(\Sigma_{0}\times\{0\})\cup(\Sigma_{1}\times\{1\})$ as follows
\[
\pos{a,i}R\pos{b,j}
:\iff
\chi_{i}(c)=a\text{ and }\chi_{j}(c)=b\text{ for some }c\in\Sigma\text{ and }i,j\in2
\]
Since $R$ is symmetric, the transitive closure $R^{*}$ is an equivalence relation.
By dividing the direct sum using this equivalence relation, $(\Sigma_{0}\sqcup\Sigma_{1})/R^{*}$ is isomorphic to the pushout $\Sigma'$.
We denote the set of all things equivalent to the symbol $s$ as $[s]$.
We define $\A'$ as follows
\begin{itemize}
\item $\A'_{[\pos{s,i}]}:={\A_{i}}_{s}\Space[\pos{\sigma,i}]^{\A'}:=\sigma^{\A_{i}}\Space[\pos{\lambda,i}]^{\A'}:=\lambda^{\A_{i}}$
\item $\A'_{\pos{[\pos{s,i}],[\pos{s,i}]}}:=\{\pi\mid \pi\in{\A_{i}}_{ss}\}$
\item $*^{\A'}_{\pos{[\pos{s,i}],[\pos{s,i}]}}(\pi):=*^{\A_{i}}_{s}(\pi)$ $(\pi\in\A'_{ss})$
\end{itemize}
This is well-defined.
For example let $\pos{a,i}R\pos{b,j}$ holds true for sorts $a,b$.
By the definition, $\chi_{i}(c)=a\text{ and }\chi_{j}(c)=b\text{ for some }c\in\Sigma$.
Therefore from $\A_{i}\red_{\chi_{i}}=\A_{j}\red_{\chi_{j}}$, ${\A_{i}}_{a}={\A_{i}}_{\chi_{i}(c)}={\A_{i}\red_{\chi_{i}}}_{c}={\A_{j}\red_{\chi_{j}}}_{c}={\A_{j}}_{\chi_{j}(c)}={\A_{j}}_{b}$ hold true.
Since $R^{*}$ is a repetition, it does not matter which equivalent element we choose as the representative.
\end{proof}

\begin{proof}[Proof of Theorem~\ref{theorem:CISQ}]\
\begin{proofcases}
\item[$\mathrm{RJC}$] We first show $\mathrm{RJC}$.
We fix the signature morphisms $\chi_{i}:\Sigma\to\Sigma_{i} \ (i\in2)$,
colimit $\{\chi'_{i}:\Sigma_{i}\to\Sigma'\}_{i\in\{0,1\}}$,
satisfiable theories $T_{i}\subseteq\Sen(\Sigma_i) \ (i\in2)$
which satisfy the completeness and equality of $\chi_{i}^{-1}(T_{i})\,(i\in2)$.
\begin{proofcases}
\item[Preparation of symbols]
Let $\alpha:=\max\{\mathrm{card}(\Sen(i))\mid i=\Sigma,\Sigma_0,\Sigma_1\}^+$.
Let $\mathcal{C}=\{\mathcal{C}_s\}_{s\in S\sqcup S^{=}}$ be a $S\sqcup S^{=}$-sorted disjoint set of new constant (and relation) symbols for $\Sigma$, $\Sigma_0$, and $\Sigma_1$.
Let $\mathcal{C}_{i}=\{\mathcal{C}_{i,s}\}_{s\in S_{i}\sqcup S^{=}_{i}} \ (i\in2)$ be $S_{i}\sqcup S^{=}_{i}$-sorted sets of new (It does not overlap with $\mathcal{C}$ also.) constant (and relation) symbols that satisfy $\mathrm{card}(\mathcal{C}_{i,s})=\alpha$ for all $i\in2$ and $s\in S_{i}$.
We will denote the addition of a constant like $\Sigma(C):\Sigma\to\Sigma[C]\,(C\subseteq\mathcal{C})$.
Let $\chi_{i}[C]:\Sigma[C]\to\Sigma_{i}[C]\,(C\subseteq\mathcal{C})$ of $\chi_{i}$ by constants satisfies $\chi_{i}[C](c)=c$ for all $c\in C$.
For simplicity, we may not distinguish between the extension and the original signature morphisms.
\item[Henkin method]
The goal is create a model such that $\A_{0}'\upharpoonright_{\chi_{0}}=\A_{1}'\upharpoonright_{\chi_{1}}$ and $\A'_{i}\models^{\TA_{k}}\gamma\,(\gamma\in T_{i})$ hold true, and then apply Lemma~\ref{fact:exactness} to it.
We extend the theories alternately to achieve elementary equivalence, adding constants so that their intersection coincide.
Let $U_{i}:=\{\chi_{i}[C]:\Sigma[C]\to\Sigma_{i}[C][C_{i}]\mid C\subseteq_{\alpha}\mathcal{C},\,C_{i}\subseteq_{\alpha}\mathcal{C}_{i}\}$.
We prepare surjective functions $\pos{\phi_{\Sigma'',i}:\alpha\to\Sen(\Sigma'')}_{\Sigma''\in \Cod U_{i}}$ as choice function.
And prepare a bijective $pair:\alpha\times\alpha\to\alpha$ that satisfies $pair(m,n)\geq m$.
\\ \ \\
We show that we can define
an expanding sequence of signatures and satisfyable theories
$\pos{\BS{T}_{i}(n),\BS\chi_{i}(n):\BS\Sigma(n)\to\BS\Sigma_{i}(n)\in U_{i}}_{i\in2,n\in\alpha}$
that satisfies the following conditions recursively for $n \in \alpha$.
\begin{description}
\item[$(T')$] for each $n'<n$, $t\in F_{\T_{\BS\Sigma_{i}(n')}}$, there is $c\in\BS\Sigma(n)\cap\{\mathcal{C}_{s}\}_{s\in S}$ such that $(\chi_{i}(n)(c)=t)\in\BS{T}_{i}(n)$
\item[$(A')$] for each $n'<n$, $\acta\in L_{\T(\BS\Sigma_{i}(n'))}$, there is $\pi\in\BS\Sigma(n)\cap\{\mathcal{C}_{\BS{s}}\}_{\BS{s}\in S^{=}}$ such that $"\chi_{i}(n)(\pi)\equiv\acta"\in\BS{T}_{i}(n)$
\item[$(R)$] $\BS{T}_{1}(n)\vDash^{\TA_{k}}_{\BS\Sigma_{1}(n)}\BS\chi_{1}(n)(\varphi) \text{ if }\BS{T}_{0}(n)\vDash^{\TA_{k}}_{\BS\Sigma_{0}(n)}\BS\chi_{0}(n)(\varphi) \text{ for all } \varphi\in\Sen(\BS\Sigma(n))$
\item[$(L')$] $\BS{T}_{0}(n)\vDash^{\TA_{k}}_{\BS\Sigma_{0}(n)}\BS\chi_{0}(n)(\varphi) \text{ if }\BS{T}_{1}(n')\vDash^{\TA_{k}}_{\BS\Sigma_{1}(n')}\BS\chi_{1}(n')(\varphi) \text{ for all } n'<n \text{ and } \varphi\in\Sen(\BS\Sigma(n'))$
\end{description}
\begin{proofcases}
\item[$n=0$] $\pos{\BS{T}_{i}(n),\BS\chi_{i}(n):\BS\Sigma(n)\to\BS\Sigma_{i}(n)}:=\pos{T_{i},\chi_{i}:\Sigma\to\Sigma_{i}}$.
\item[$n=n'+1$] Let $pair(l,m)=n'$.
\begin{proofcases}
\item[Step $odd$]
To reduce the number of cases to consider, we use the following symbols.
\[
\pos{\phi,\psi,X}:=
\begin{cases}
\pos{\phi_{\BS\Sigma_{1}(l),1}(m),\phi',X'}&(\{\phi_{\BS\Sigma_{1}(l),1}(m)\}\cup\BS{T}_{1}(n')\not\vDash^{\TA_{k}}_{\BS\Sigma_{1}(n')}\emptyset\,\,\&\,\,\phi_{\BS\Sigma_{1}(l),1}(m)=\Exists{X'}\phi')\\
\pos{\phi_{\BS\Sigma_{1}(l),1}(m),\phi_{\BS\Sigma_{1}(l),1}(m),\emptyset}&(\{\phi_{\BS\Sigma_{1}(l),1}(m)\}\cup\BS{T}_{1}(n')\not\vDash^{\TA_{k}}_{\BS\Sigma_{1}(n')}\emptyset\,\,\&\,\,\phi_{\BS\Sigma_{1}(l),1}(m)=else)\\
\pos{\neg\phi_{\BS\Sigma_{1}(l),1}(m),\neg\phi_{\BS\Sigma_{1}(l),1}(m),\emptyset}&(\{\phi_{\BS\Sigma_{1}(l),1}(m)\}\cup\BS{T}_{1}(n')\vDash^{\TA_{k}}_{\BS\Sigma_{1}(n')}\emptyset)\\
\end{cases}
\]
First, we take a bijection $\theta:X\to C_{1}$ that associates the unused symbols $C_{1}\subseteq\mathcal{C}_{1}$ with $X$.
$\{\theta(\psi)\}\cup\BS\Sigma_{1}(l)(C_{1})(\{\phi\}\cup\BS{T}_{1}(n'))$ is satisfyable.
Next, we take a bijection $f:C\to F_{T(\BS\Sigma_{1}(l)[C_{1}])}\sqcup L_{\T(\BS\Sigma_{1}(l)[C_{1}])}$
 that associates the unused symbols $C\subseteq\mathcal{C}$ with $F_{T_{\BS\Sigma_{1}(l)[C_{1}]}}$ and $L_{\T(\BS\Sigma_{1}(l)[C_{1}])}$.
(Note that $\card(C\cup C_{1})<\alpha$ holds true.)
\begin{align*}
&\BS\chi_{1}(n)^{-}:\BS\Sigma(n)^{-}\to\BS\Sigma_{1}(n)^{-}:=\BS\chi_{1}(n')[C]:\BS\Sigma(n')[C]\to\BS\Sigma_{1}(n')[C_{1}][C]\,(\in U_{1})&\\
&\BS{T}_{1}(n)^{-}:=\{\psi\}\cup\BS\Sigma_{1}(l)(C)(\{\phi\}\cup\BS{T}_{1}(n'))\cup&\\
&\Space\{t=f(t)\mid t\in \{C_{s}\}_{s\in S}\}\cup\{\pi\equiv f(\pi)\mid\pi\in\{C_{\BS{s}}\}_{\BS{s}\in S^{=}}\}\\%
&\BS\chi_{0}(n)^{-}:\BS\Sigma(n)^{-}\to\BS\Sigma_{0}(n)^{-}:=\BS\chi_{0}(n')[C]:\BS\Sigma(n')[C]\to\BS\Sigma_{0}(n')[C]\,(\in U_{0})&\\
&\BS{T}_{0}(n)^{-}:=\BS\Sigma_{0}(n')(C)(\BS{T}_{0}(n'))\cup\Phi(n)^{-}&\\
&\Space\Phi(n)^{-}:=\{\BS\chi_{0}(n)^{-}(\varphi)\mid\BS{T}_{1}(n)^{-}\vDash^{\TA_{k}}_{\BS\Sigma(n)^{-}}\BS\chi_{1}(n)^{-}(\varphi), \varphi\in\Sen(\BS\Sigma(n)^{-})\}&
\end{align*}
$\BS{T}_{1}(n)^{-}$ is satisfyable because it only provides definitions to the symbol.
\\ 
$\BS{T}_{0}(n)^{-}$ is also satisfyable.
\begin{itemize}
\item[]
If not, then due to its compactness
we can take finite part
$\iota:C_{f}\subseteq_{\omega}C$ and $\Phi_{f}\subseteq_{\omega}\iota^{-1}(\Phi(n)^{-})$ such that
$\BS\Sigma_{0}(n')(C_{f})(\BS{T}_{0}(n'))\cup\Phi_{f}\vDash^{\TA_{k}}_{\BS\Sigma_{0}(n')[C_{f}]}\emptyset$.
From proof rules $\BS{T}_{0}(n')\vDash^{\TA_{k}}_{\BS\Sigma_{0}(n')}\neg\Exists{C_{f}}\wedge\Phi_{f}$ holds true.
(In reality, $C_{f}$ is replaced with a variable block $X_{f}$.)
From the recursion hypothesis we can say $\BS{T}_{1}(n')\vDash^{\TA_{k}}_{\BS\Sigma_{1}(n')}\neg\Exists{C_{f}}\wedge\Phi_{f}$ from $(R)$.
But since $\Phi_{f}\subseteq_{\omega}\iota^{-1}(\Phi(n)^{-})$,
$\BS{T}_{1}(n)^{-}\vDash^{\TA_{k}}_{\BS\Sigma_{1}(n)^{-}}\Exists{C_{f}}\wedge\Phi_{f}$ holds true,
and therefore,
that contradicts the fact that $\BS{T}_{1}(n)^{-}$ is satisfyable.
\item[]
\end{itemize}
Note that the definitions here satisfy the following conditions.
\begin{description}
\item[$(T_{1})$] for each $t\in F_{\T(\BS\Sigma_{1}(n'))}$, there is $c\in\BS\Sigma(n)^{-}\cap\{\mathcal{C}_{s}\}_{s\in S}$ such that $(\chi_{1}(n)^{-}(c)=t)\in\BS{T}_{1}(n)^{-}$
\item[$(A_{1})$] for each $\acta\in L_{\T(\BS\Sigma_{1}(n'))}$, there is $\pi\in\BS\Sigma(n)^{-}\cap\{\mathcal{C}_{\BS{s}}\}_{\BS{s}\in S^{=}}$ such that $"\chi_{1}(n)^{-}(\pi)\equiv\acta"\in\BS{T}_{1}(n)^{-}$
\item[$(L)$] $\BS{T}_{0}(n)^{-}\vDash^{\TA_{k}}_{\BS\Sigma_{0}(n)^{-}}\BS\chi_{0}(n)^{-}(\varphi) \text{ if }\BS{T}_{1}(n)^{-}\vDash^{\TA_{k}}_{\BS\Sigma_{1}(n)^{-}}\BS\chi_{1}(n)^{-}(\varphi) \text{ for all } \varphi\in\Sen(\BS\Sigma(n)^{-})$
\item[]
\end{description}
\item[Step $even$]
Perform (Step $odd$) in reverse. Just as we were able to derive $(L)$ from $(R)$ in the definition above, we can derive $(R)$ from $(L)$.
The method is clearly the same, so we will omit the details.
\end{proofcases}
\item[$n$ is a limit]
\begin{align*}
&\BS\chi_{i}(n):\BS\Sigma(n)\to\BS\Sigma_{i}(n):=\cup_{n'<n}\BS\chi_{i}(n'):\cup_{n'<n}\BS\Sigma(n')\to\cup_{n'<n}\BS\Sigma_{i}(n')\,(\in U_{i})&\\
&\BS{T}_{i}(n):=\cup_{n'<n}\BS\chi_{i}(n'\leq n)(\BS{T}_{i}(n'))&
\end{align*}
Note that since $\alpha$ is a regular cardinal, even in the limit, we never run out of elements in $\mathcal{C},\mathcal{C}_{i}\,(i\in2)$, and the cardinality of $\Sen$ is also less than $\alpha$.
Furthermore, due to compactness, the limit is still satisfyable.
\end{proofcases}
The resulting $\pos{\BS{T}_{i}(n),\BS\chi_{i}(n):\BS\Sigma(n)\to\BS\Sigma_{i}(n)}$ satisfies four conditions.
{\par}
$\pos{\BS{T}_{i}(\alpha),\BS\chi_{i}(\alpha):\BS\Sigma(\alpha)\to\BS\Sigma_{i}(\alpha)}:=\pos{\cup_{n<\alpha}\BS\chi(n\leq\alpha)(\BS{T}_{i}),\cup_{n<\alpha}\BS{\chi}_{i}(n):\cup_{n<\alpha}\BS\Sigma(n)\to\cup_{n<\alpha}\BS\Sigma_{i}(n)}$
has a maximal consistent set $\BS{T}_{i}(\alpha)$ with witnesses and the complete theory $\BS{T}(\alpha):=\BS{\chi}_{i}(\alpha)^{-1}(\BS{T}_{i}(\alpha))$.
$\A_{i}\in|\Mod(\BS\Sigma_{i}(\alpha))|$ can be constructed by quotient such that
$\A_{0}\red_{\BS\chi_{0}(\alpha)}$ and $\A_{1}\red_{\BS\chi_{1}(\alpha)}$ are elementary equivalent and
$\A_{i}\models^{\TA_{k}}\phi$ hold true for all $\phi\in\BS{T}_{i}(\alpha)$.
From the first two of the four conditions, we can see that $\BS\Sigma(\alpha)$ covers the actions and terms that appear in $\BS\Sigma_{0}(\alpha)$ and $\BS\Sigma_{1}(\alpha)$.
(That is, all elements of $\{{\A_{i}}_{s}\}_{s\in S}$ can be expressed as $\BS\Sigma(\alpha)$ terms, and the elements of $\{{\A_{i}}_{\BS{s}}\red_{\chi_{i}}\}_{\BS{s}\in S^{=}}$ also be in the image of $L_{\T(\BS\Sigma(\alpha))}$.)
Therefore, these elements have a one-to-one correspondence.
Then, if we can appropriately replace the names so that $\A_{0}'\upharpoonright_{\chi_{0}}=\A_{1}'\upharpoonright_{\chi_{1}}$ holds true, we can apply Lemma~\ref{fact:exactness}.
{\par}
Let $M_i:=\{s\in S\mid \text{ there is a sort } s'\neq s \text{ which satisfies } \chi_i(s)=\chi_i(s') \}\ (i\in2)$, and $M:=S\setminus(M_0\cup M_1)$.
Since $\chi_0$ and $\chi_1$ are disjoint (about sort confluence), $S=M\sqcup M_0\sqcup M_1$.
We make new models $\A_{0}'$ and $\A_{1}'$ by replacing the elements (and replacing functions, relations, and $*$ to match the replacement of elements):
\begin{itemize}
\item $c^{\A_{i}'}=c^{\A_{0}}$ if $c\in\mathcal{C}_s\subseteq\mathcal{C}_{\chi_0(s)}$ and $s\in M\sqcup M_0$
\item $c^{\A_{i}'}=c^{\A_{1}}$ if $c\in\mathcal{C}_s\subseteq\mathcal{C}_{\chi_1(s)}$ and $s\in M_1$
\end{itemize}
Note that nothing is replaced above for the merging sorts.
For example, if $i=0$ we only change elements of $s\in M_1$.
The elements of $M_1$ does not merged by $\chi_0$.
For parts that do not merge, the sort correspondence is one-to-one.
By the definition, $\A_{0}'\upharpoonright_{\chi_{0}}=\A_{1}'\upharpoonright_{\chi_{1}}$ holds true.
By the Lemma~\ref{fact:exactness}, there is a $\Sigma'$-model $\A'$ such that $\A'\upharpoonright_{\chi'_0}=\A'_{1}$ and $\A'\upharpoonright_{\chi'_1}=\A'_{0}$ hold true.
By satisfaction condition, $\A'\models \phi$ holds true or each $\phi\in\chi'_1(T_0)\cup\chi'_0(T_1)$.
Therefore, $(\Sigma',\chi'_1(T_0)\cup\chi'_0(T_1))$ is consistent.
\end{proofcases}
\item[$\mathrm{CI}$]
Since the equivalence of $\mathrm{CI}$ and $\mathrm{RJC}$ has only been proven for countable squares, in order to prove $\mathrm{CI}$ using $\mathrm{RJC}$, we need to restrict the squares to countable.
{\par}
Suppose $\Gamma_{i},\Delta_{i}\subseteq\Sen(\Sigma_{i})\,(i\in2)$ satisfy
$\Gamma'\vdash^{\Ind}_{\Sigma'}\Delta'$ $(\Gamma':=\chi_{0}(\Gamma_{0})\cup\chi_{1}(\Gamma_{1}),\,\Delta':=\chi_{0}(\Delta_{0})\cup\chi_{1}(\Delta_{1}))$.
{\par}
By the compactness,
there exists a finite signature $\Sigma'_{f}\subseteq\Sigma'$,
inclusion $\iota':\Sigma'_{f}\to\Sigma'$ and
sets of sentences $\Gamma'_{f}\subseteq\iota'^{-1}(\Gamma')$, $\Delta'_{f}\subseteq\iota'^{-1}(\Delta')$ such that
$\Gamma'_{f}\vdash^{\Ind}_{\Sigma'_{f}}\Delta'_{f}$ holds true.
{\par}
We choose the following
\[
f'_{M,i}\in\prod_{\sigma\in\Sigma'_{f}\cap\chi'_{i}(\Sigma_{i})}{\chi'_{i}}^{-1}(\sigma)\Space
f'_{L,i}\in\prod_{\gamma\in\Gamma'_{f}\cap\chi'_{i}(\Gamma_{i})}{\chi'_{i}}^{-1}(\gamma)\Space
f'_{R,i}\in\prod_{\delta\in\Delta'_{f}\cap\chi'_{i}(\Delta_{i})}{\chi'_{i}}^{-1}(\delta)
\]
Let $\Sigma_{f,i}$ be the smallest signature containing the symbols appearing in $f'_{L,i}(\Gamma'_{f}),\,\,f'_{M,i}(\Sigma'_{f}),\,\,f'_{R,i}(\Delta'_{f})$, and let $\iota_{i}:\Sigma_{f,i}\to\Sigma_{i}$ be the inclusion.
We can take finite sets $\Gamma_{f,i},\,\Delta_{f,i}\subseteq\Sen(\Sigma_{f,i})$ such that
$\cup_{i\in2}(\chi'_{i}\circ\iota_{i})(\Gamma_{f,i})=\iota'(\Gamma'_{f})$,
$\cup_{i\in2}(\chi'_{i}\circ\iota_{i})(\Delta_{f,i})=\iota'(\Delta'_{f})$.
Let $\chi'_{f,i}:\Sigma_{f,i}\to\Sigma'_{f}$ be the restriction of $\chi'_{i}$.
By definition, $\cup_{i\in2}\chi'_{f,i}(\Gamma_{f,i})=\Gamma'_{f}$ and $\cup_{i\in2}\chi'_{f,i}(\Delta_{f,i})=\Delta'_{f}$ hold true.
Let $\Sigma_{f}$ be an empty signature, and
$\iota:\Sigma_{f}\to\Sigma$ be the inclusion.
Let $\chi_{f,i}:\Sigma_{f}\to\Sigma_{f,i}$ be the restriction of $\chi_{i}$.
{\par}
Up to here, we have obtained the following subsquare and finite sets $\Gamma_{f,i},\,\Delta_{f,i}\subseteq\Sen(\Sigma_{f,i})$ that satisfy
$\cup_{i\in2}\chi_{f,i}(\Gamma_{f,i})\vdash^{\Ind}_{\Sigma'_{f}}\cup_{i\in2}\chi_{f,i}(\Delta_{f,i})$.
\[
\def\labelstyle{\normalsize}
\xymatrix@R=5pt@C=35pt{
&\Sigma_{0}\ar[rrr]|{\phantom{\int}{\chi'_{0}}\phantom{\int}}&&&\Sigma'\\
\Sigma\ar[ur]|{\phantom{\int}{\chi_{0}}\phantom{\int}}\ar[rrr]|{\phantom{\int}{\chi_{1}}\phantom{\int}}&&&\Sigma_{1}\ar[ur]|{\phantom{\int}{\chi'_{1}}\phantom{\int}}&\\
&&&&\\
&\ar[uuu]|{\phantom{\int}{\iota_{0}}\phantom{\int}}\Sigma_{f,0}\ar[rrr]|{\phantom{\int}{\chi'_{f,0}}\phantom{\int}}&&&\ar[uuu]|{\phantom{\int}{\iota'}\phantom{\int}}\Sigma'_{f}\\
\ar[uuu]|{\phantom{\int}{\iota}\phantom{\int}}\Sigma_{f}\ar[ur]|{\phantom{\int}{\chi_{f,0}}\phantom{\int}}\ar[rrr]|{\phantom{\int}{\chi_{f,1}}\phantom{\int}}&&&\ar[uuu]|{\phantom{\int}{\iota_{1}}\phantom{\int}}\Sigma_{f,1}\ar[ur]|{\phantom{\int}{\chi'_{f,1}}\phantom{\int}}&
}\]
However, this does not necessarily mean it will be a pushout, so we will add symbols to make it so.
\\ \ {\par}
For simplicity, we treat signatures as just a set of symbols.
We define the relation $R$ on $\Sigma_{0}\sqcup\Sigma_{1}:=(\Sigma_{0}\times\{0\})\cup(\Sigma_{1}\times\{1\})$ as follows
\[
\pos{a,i}R\pos{b,j}
:\iff
\chi_{i}(c)=a\text{ and }\chi_{j}(c)=b\text{ for some }c\in\Sigma\text{ and }i,j\in2
\]
Since $R$ is symmetric, the transitive closure $R^{*}$ is an equivalence relation.
By dividing the direct sum using this equivalence relation, we get $(\Sigma_{0}\sqcup\Sigma_{1})/R^{*}$ becomes (isomorphic to) $\Sigma'$.
{\par}
We choose functions
\begin{itemize}
\item $\pos{f_{\Sigma''}:\omega\twoheadrightarrow\{\pos{a,b}\mid aR^{*}b,\,a,b\in\Sigma''\}}_{\Sigma''\subseteq\Sigma_{0}\sqcup\Sigma_{1}}$
\item $e:\{\pos{d,d'}\mid dRd',\,d,d'\in\Sigma_{0}\sqcup\Sigma_{1}\}\to\Sigma$ such that $\chi_{i}(c)=d,\,\chi_{j}(c)=d'\,(e(d,d')=c)$.
\item $e^{*}:\{\pos{a,b}\mid aR^{*}b,\,a,b\in\Sigma_{0}\sqcup\Sigma_{1}\}\to(\Sigma_{0}\sqcup\Sigma_{1})^{*}$ such that $a=d_{0}Rd_{1}\cdots d_{n-2}Rd_{n-1}=b\,(e^{*}(a,b)=\pos{d_{i}}_{i\in n})$.
\end{itemize}
Let $\text{pair}:\omega\times\omega\to\omega$ be a bijective function defined by $\text{pair}(i,j):=\bigl((i+j)(i+j+1)+2j\bigr)/2$ for all $i,j\in\omega$.
We define $\pos{\BS\chi_{f,i}(n):\BS\Sigma_{f}(n)\to\BS\Sigma_{f,i}(n)}_{n\in\omega}$ as follows
\begin{proofcases}
\item[$n=0$]
$\BS\chi_{f,i}(n):\BS\Sigma_{f}(n)\to\BS\Sigma_{f,i}(n):=\chi_{f,i}:\Sigma_{f}\to\Sigma_{f,i}$
\item[$n=n'+1$] Let $n'=pair(l,m)$.
For $a,b$ such that $f_{\BS\Sigma_{f,0}(l)\sqcup\BS\Sigma_{f,1}(l)}(m)=\pos{a,b}$,
we add the witness $e^{*}(\pos{a,b})=\{d_{i}\}_{i\in N}$ of $aR^{*}b$ to $\BS\Sigma_{f,0}(n')\sqcup\BS\Sigma_{f,1}(n')$.
If $N>0$, we further add the witness $e(d_{i},d_{i+1})=c_{i}\in\Sigma\,(i<N-1)$ of $d_{i}Rd_{i+1}$ to $\BS\Sigma_{f}(n')$.
We also add $\pos{\chi_{i}(c_{i}),i}\,(i\in2)$ to $\BS\Sigma_{f,0}(n')\sqcup\BS\Sigma_{f,1}(n')$.
For function symbols and labels, we also add the associated sorts.
\end{proofcases}
Let $\chi_{c,i}:\Sigma_{c}\to\Sigma_{c,i}\,(i\in2)$ be the morphisms obtained as the limit of the expansion sequence.
These satisfy the disjoint property.
The morphism $\chi'_{c,i}:\Sigma_{c,i}\to\Sigma'_{c}$ obtained by restricting $\chi'_{i}$ to $\Sigma_{c,i}$ is a pushout, and each signature is countable.
Therefore, from the result of \(\mathrm{RJC}\), this becomes \(\mathrm{CI}\).

Let $\iota^{c}_{f,i}:\Sigma_{f,i}\to\Sigma_{c,i}$ and $\iota_{c,i}:\Sigma_{c,i}\to\Sigma_{i}$ be inclusions.
Let $\Gamma_{c,i}:=\iota^{c}_{f,i}(\Gamma_{f,i})$, $\Delta_{c,i}:=\iota^{c}_{f,i}(\Delta_{f,i})$.
These satisfy $\cup_{i\in2}\chi_{c,i}(\Gamma_{c,i})\vdash^{\Ind}_{\Sigma'_{c}}\cup_{i\in2}\chi_{c,i}(\Delta_{c,i})$.
Therefore there is a sentence $\phi\in\Sen(\Sigma_{c})$ that satisfies
$\Gamma_{c,0}\vdash^{\Ind}_{\Sigma_{c,0}}\Delta_{c,0}\,\chi_{c,0}(\phi)$ and
$\chi_{c,1}(\phi)\,\Gamma_{c,1}\vdash^{\Ind}_{\Sigma_{c,1}}\Delta_{c,1}$.
Let $\iota_{c}:\Sigma_{c}\to\Sigma$ be the inclusion.
Then $\iota_{c}(\phi)$ becomes an interpolation of the original $\Gamma_{i},\Delta_{i}\subseteq\Sen(\Sigma_{i})\,(i\in2)$.
\end{proofcases}
Conversely if there is overlap in the sort merging, it will always include a sort merging like Example~\ref{example:2sort} or Example~\ref{example:3sort}.
In other words, if we write these square condition for push outs using only the sort condition, the "disjoint" is necessary.
\end{proof}

\begin{example}[when same sorts merge]\label{example:2sort}\small
Let $\Sigma=(\{s',s''\},\{d':\to s',d'':\to s''\},\emptyset)$, $\Sigma_0=\Sigma_1=\Sigma'=(\{s\},\{d',d'':\to s\},\emptyset)$.
$\Sigma'$ is a push out of same signature morphisms $\chi=\chi_{0}=\chi_{1}:\Sigma\to\Sigma_0,\Sigma_1$ $(\chi(s')=\chi(s'')=s, \chi(d')=d', \chi(d'')=d'')$.
Consider the following:
\begin{itemize}%
\item $T=\{\text{" } \texttt{s} \text{ has exactly two elements"}, \text{" } \text{ "}\A_{\texttt{ss}}=\{0,1,0^{c},1^{c}\}\text{"}\mid\texttt{s}\in\{s',s''\}\}$ \par
In $\FOL$ this is a complete theory.
By the properties of relation algebras and Kleene algebras $0^{*}=1^{*}=1,(1^{c})^{*}=(0^{c})^{*}=0^{c}$.
These four $\{0,1,0^{c},1^{c}\}$ are closed under other operations as well.
So this is a complete theory in $\TA_{k}$.
\item $T_0=\{\text{" } s \text{ has exactly two elements"}, d'=d'', \text{ "}\A_{ss}=\{0,1,0^{c},1^{c}\}\text{"}\}$. 
(This is satisfiable.)
\item $T_1=\{\text{" } s \text{ has exactly two elements"}, d'\neq d'', \text{ "}\A_{ss}=\{0,1,0^{c},1^{c}\}\text{"}\}$. 
(This is satisfiable.)
\end{itemize}
In this case clearly the merging causes confusion.
\end{example}
\begin{example}[when the merging targets overlap]\label{example:3sort}\small\
\begin{itemize}
\item $\Sigma=(\{s',s'',s'''\},\{d':\to s', d'':\to s'', d'''\to s'''\},\emptyset)$,\\
$T=\{\text{" } \texttt{s} \text{ has exactly two elements"}, \text{ "}\A_{\texttt{ss}}=\{0,1,0^{c},1^{c}\}\text{"}\mid\texttt{s}\in\{s',s'',s'''\}\}$,
\item $\Sigma_0=(\{s, s'''\},\{d',d'':\to s, d'''\to s'''\},\emptyset)$,\\
$T_0=\{\text{" } \texttt{s} \text{ has exactly two elements"}, d'\neq d'', \text{ "}\A_{\texttt{ss}}=\{0,1,0^{c},1^{c}\}\text{"}\mid\texttt{s}\in\{s,s'''\}\}$,
\item[] $\chi_0:\Sigma\to\Sigma_0$ $(\chi_0(s')=\chi_0(s'')=s, \chi(s''')=s''', \chi_0(d')=d', \chi_0(d'')=d'', \chi_0(d''')=d''')$,
\item $\Sigma_1=(\{s', s\},\{d'\to s' ,d'',d'''\to s\},\emptyset)$,\\
$T_1=\{\text{" } \texttt{s} \text{ has exactly two elements"}, d''\neq d''', \text{ "}\A_{\texttt{ss}}=\{0,1,0^{c},1^{c}\}\text{"}\mid\texttt{s}\in\{s',s\}\}$,
\item[] $\chi_1:\Sigma\to\Sigma_1$ $(\chi_1(s)=s, \chi_1(s'')=\chi_1(s''')=s, \chi_1(d')=d', \chi_1(d'')=d'', \chi_1(d''')=d''')$,
\item $\Sigma'=(\{s\},\{d',d'',d'''\to s\},\emptyset)$ ($\Sigma'$ is a push out of signature morphisms).
\end{itemize}
In this case clearly the merging causes confusion.
\end{example}

\end{document}